\documentclass[journal]{IEEEtran}

\ifCLASSINFOpdf
\else
\fi

\IEEEoverridecommandlockouts                             

\usepackage{cite}
\usepackage{amsmath,amssymb,amsfonts}
\usepackage{algorithmic}
\usepackage{graphicx}
\usepackage{textcomp}
\usepackage{dsfont}
\usepackage{color}
\usepackage{epstopdf}        
\usepackage{enumerate}
\usepackage{lscape}
\usepackage{multicol}
\usepackage{multirow}
\usepackage{calligra}
\usepackage{booktabs}
\usepackage{mathtools}
\usepackage{framed} 
\usepackage{picins}
\usepackage{empheq}
\usepackage{graphics} 
\usepackage{epsfig} 
\usepackage{times} 
\usepackage{amsmath} 
\usepackage{amssymb}  
\usepackage{amsfonts}  
\usepackage{epstopdf}
\usepackage{enumerate}
\usepackage{graphicx} 
\usepackage{xcolor}
\usepackage{comment} 

\usepackage{pst-all}
\usepackage{pstricks}

\usepackage{subfigure} 

\usepackage{caption2} 

\usepackage{etoolbox}

\usepackage{soul}%
\usepackage{cancel}

\newtheorem{theorem}{Theorem}

\newtheorem{lemma}{Lemma}
\newtheorem{corollary}{Corollary}

\newtheorem{remark}{Remark}

\newtheorem{claim}{Claim}

\def\BibTeX{{\rm B\kern-.05em{\sc i\kern-.025em b}\kern-.08em
    T\kern-.1667em\lower.7ex\hbox{E}\kern-.125emX}}


\definecolor{Nicolas}{rgb}{0.017,0.59,0.55}
\newenvironment{SUBENVNicolas}{\color{Nicolas}}{\color{black}}

\definecolor{Jorge}{rgb}{0.217,0.51,0.255}
\newenvironment{SUBENVJorge}{\color{Jorge}}{\color{black}}

\definecolor{Miroslav}{rgb}{0.74,0.2,0.64}
\newenvironment{SUBENVMiroslav}{\color{Miroslav}}{\color{black}}

\begin{document}
\title{
\bf Prescribed-Time Newton Extremum Seeking\\ using Delays and Time-Periodic Gains}

\author{Nicolas Espitia,$^{1}$  Jorge I. Poveda$^{2}$ and  Miroslav Krstic.$^{3}$   
\thanks{This work was partially
supported by the Agence Nationale de la Recherche (ANR)
via grant PH-DIPSY ANR-24-CE48-1712, and by NSF grants CMMI-2228791 and ECCS-2210315.\\ $^{1}${\scriptsize CRIStAL UMR 9189 CNRS} - Centre de Recherche en Informatique Signal et Automatique de Lille - CNRS, Centrale Lille, Univ. Lille, F-59000 Lille, France. ({\tt\small e-mail: nicolas.espitia-hoyos@univ-lille.fr})}%
\thanks{$^{2}$Electrical and Computer Engineering Department,University of California, San Diego, USA.}
\thanks{$^{3}$Department of Mechanical and Aerospace Engineering, University of California, San Diego, USA.}

}

\maketitle

\begin{abstract}
We study \textit{prescribed-time} extremum seeking (PT-ES)  for scalar maps in the presence of time delays. The PT-ES problem has been studied by Yilmaz and Krstic in 2023 using chirpy probing and time-varying gains that grow unbounded.  
To alleviate the gain singularity, in this paper we present an alternative approach, employing delays with \emph{bounded time-periodic} gains, for achieving prescribed-time convergence to the extremum. Our results are not extensions or refinements of earlier works, but a new methodological direction --applicable even when the map has no delay. The main PT-ES algorithm  compensates the map's delay and uses  perturbation-based and the Newton (rather than gradient) approaches.  
With the help of averaging theorems in infinite dimension, specifically Retarded Functional Differential Equations (RFDEs),  we conduct a prescribed-time convergence analysis on a suitable averaged \textit{target} ES system, which contains the time-periodic gains of the map and feedback delays. We further extend our method to multivariable static maps and
illustrate our results through numerical simulations.

\end{abstract}
%

%

\section{Introduction}

\subsection{Literature on extremum seeking (ES)}

Extremum Seeking (ES), first introduced in a paper by Leblanc in 1922 \cite{Leblanc1922}, has become one of the most popular adaptive optimization algorithms over the years. It has garnered significant interest \cite{ AriyurKrstic2003, SCHEINKER_survey,poveda2017framework} and has led to numerous applications across various domains. This popularity stems from the algorithm's ability to systematically identify local optima in both static and dynamic nonlinear systems in real time, without requiring prior knowledge of these systems.

Many pioneering contributions can be highlighted in this field of research including: \cite{KrsticWang2000} which presents the first complete stability analysis for an ES algorithm, based on averaging and singular perturbation theories, for a general nonlinear dynamical system;  \cite{ChoiKrsticAriyurLee2002} which has extended the ES method to discrete dynamical systems; \cite{WangKrstic2000} which has applied ES to the problem of limit cycle maximization; \cite{TanNesicMareels2006} which achieved semi-global stability for ES algorithms; \cite{MoaseManzieBrear2010,GhaffariKrsticNesic2012} which have introduced  Newton-based ES algorithms to overcome the shortcomings of the Gradient-based ones;  \cite{LiuKrstic2012} which studied stochastic probing in ES; \cite{DurrStankovicEbenbauerJohansson2013,LabarGaroneKinnaertEbenbauer2019} studied Lie bracket approximation methods for ES systems; {\color{black}\cite{poveda2017framework,abdelgalil2023multi} which extended ES to systems with hybrid (continuous-time and discrete-time) dynamics}, and more recently \cite{FridmanZhang2020,ZhuFridmanOliveira2023} which have  employed a Time-Delay approach.

\subsection{Literature on ES with delays}

In all of these mentioned contributions, it is assumed that input/output delays are not present. However, in many processes feedback actions and output measurement may be subject to delays. These delays have to be taken into account in the ES algorithms, as they may induce instabilities or convergence property may no longer be guaranteed. The stabilization of ES systems is a challenging problem that requires using supplementary techniques to deal with the additional complexity introduced by the delay in both the design and the analysis. Only few results exist for this problem in ES, including \cite{OliveiraKrsticTsubakino2017}, in which the problem of extremum seeking for unknown static maps with constant pointwise delays, was formulated and solved for the first time using the backstepping approach for PDEs  and employing averaging in infinite-dimension \cite{Hale1990}.  Recently, this result was extended to the case of distributed delays in \cite{TsubakinoOliveiraKrstic2023}  and distributed optimization and
Nash equilibrium seeking in systems with delays  \cite{Oliveira_NashEquilibrium2021}. Since those pioneering contributions, some relevant  extensions of ES with e.g., time-varying and state-dependent delays, and furthermore with  other  PDE dynamics,  have been proposed and can be found in  \cite{Oliveira_bookESDelays}.   
 Another recent improvement of \cite{OliveiraKrsticTsubakino2017} is found in  \cite{FerreiraOliveiraKrstic2023} in which not only the optimization of the quadratic maps with constant pointwise delays is achieved, but also the transient towards the optimum itself is optimized. Other approaches include the use of sequential predictors \cite{MalisoffKrstic2021}.

Despite all this stream of results, most of them only consider asymptotic or exponential convergence guarantees towards the optimum, yet in several applications of ES (e.g.,   Air flow control \cite{KomatsuMiyamotoOhmoriSano2001,ChangMoura2009}, Antilock Braking Systems \cite{ZhangOrdonez2012}, Fusion power plant \cite{OuXuSchusterLuceFerronWalkerHumphreys2008},  process and reaction systems \cite{Dochain2011}, Nash Equilibrium Seeking\cite{FrihaufKrsticBasar2011}, \cite{Oliveira_NashEquilibrium2021}, Stock trading \cite{FormentinPrevidiMaroniCantaro2018},  Source Seeking \cite{Suttner2022}, Traffic Congestion Control \cite{Yu_traffic_ExtremumSeeking2021}
, etc.),   where   seeking  must occur within a finite given time, and delays need to be perfectly compensated, the \textit{finite-in-time} forms of convergence are strongly desired. 

\subsection{Literature on finite-in-time ES}
`Finite-in-time' convergence refers to several properties, with three standing out:  i) finite-time convergence \cite{BhatBernstein1995}, where the settling time is a finite but possibly depends on the system's initial conditions and parameters; ii) fixed-time convergence \cite{Polyakov2012},  where the settling time   is uniformly bounded by a constant that is independent of the initial conditions but may depend on system's parameters which can be tuned in some cases to achieve convergence in arbitrarily chosen finite time; and iii) prescribed-time (PT) convergence {\cite{SongWangHollowayKrstic2017,ochoa2023prescribed}} where the settling time is independent of both the initial conditions and the system's parameters and is usually assigned via time-varying singular gains. 

In addition to achieving prescribed-time convergence through time-varying singular gains, alternative methods have been proposed, such as attaining prescribed-time convergence via artificial delays combined with time-periodic gains\cite{Karafyllis2006}, \cite{Insperger2006}  and \cite{ZhouMichielsChen2022}--- an approach that leverages a key technical result presented in \cite[Chapter 3, pp 87-88]{Hale1993}, which  has been the focus of subsequent research efforts on stabilization in fixed time by means of periodic time-varying feedback with notable extensions  to observer-based and dual/output  feedback designs,  robustness to external additive disturbance,  uncertainties \cite{Zhou_PDF_Automatica2024}, as well as the integration  with sliding mode controllers for disturbance rejection \cite{Deng_Moulay2024}.

In the framework of `finite-in-time extremum seeking',  finite-time (semiglobal practical) ES was studied in \cite[Sec. 6.1]{poveda2017framework} and \cite{Guay2021}. In addition, Fixed-time ES was studied  in  \cite{PovedaKrstic2021} via non-smooth dynamics, with extensions to multi-agent systems \cite{Poveda2022}. Prescribed-time source seeking for static nonlinear maps was introduced in \cite{TodorovskiKrstic2023} and improvements to obtain unbiased extremum seeking were reported in \cite{Yilmaz_unbias2023}. 
 These results have been extended to the case of prescribed-time ES   unbiased extremum seeking with (parablic) PDE  actuators  \cite{Yilmaz_unbias_PDEs2024} and  for prescribed-time ES with  delay  in \cite[Section III]{CemalTugrul2024Tac} by relying on predictor-feedback whose design builds on the PDE backstpeping approach with time-varying backstepping Volterra transformation (as in \cite{EspitiaPerruquetti2022}); and by using  chirpy probing. A   novel 	averaging theorem in  infinite dimension  is proposed  in  \cite{CemalTugrul2024Tac}  to cope with the challenges of handling unbounded terms  due to the time-varying singular gain, and the lack of periodicity of the chirpy probing signals, among others.  

\subsection{Contributions}

To address the singular gain issue, in this paper, we present  a Newton-based extremum seeking (ES) algorithm that achieves prescribed-time convergence for  static maps with delayed outputs. Unlike \cite{CemalTugrul2024Tac}, our method does not employ time-varying singular gains or chirp-based probing (i.e., perturbation/demodulation signals with unboundedly increasing frequency), nor does it rely on backstepping techniques for PDEs.
Instead, we build on the intentional use of artificial delays and periodic time-varying gains that do not grow unbounded. This combination not only compensates for the output delay but also ensures  local convergence in a prescribed time. Furthermore, the algorithm remains well-defined beyond this prescribed time, thereby avoiding issues related to singularities.  Although the algorithm can operate after the prescribed finite time, theoretical guarantees are established only over arbitrarily large compact time intervals. It is worth mentioning that the intentional use of delays in control design has proved to be powerful in achieving objectives that usually  cannot be achieved using static  linear feedbacks  \cite{Niculescu2001,Fridman_Selivanov_2025}. Such objectives include for instance finite time convergence \cite{Karafyllis2006}, \cite{ZhouMichielsChen2022}, or disturbance-robust adaptive control \cite{Karafyllis2023Adaptive}.

 The main idea of our approach is indeed to combine periodic time-varying feedback with delays. In this sense, the  Newton-based scheme is critical: we propose a ``Riccati-like" filter that can estimate the inverse of the Hessian in prescribed time (in the average sense), enabling both  the learning dynamics and the estimator for the gradient to converge in a prescribed time, despite  the presence of the output delay that needs to be compensated.  To the best of our  knowledge, this structure is novel in the context of ES.
 
Since the notion of ``dead-beat'' Lyapunov stabilization by time-periodic delayed feedback originates, 
from Hale and Verduyn-Lunel \cite[Chapter 3, pp 87-88]{Hale1993} (the analysis; {1993}) and Karafyllis \cite{Karafyllis2006} (feedback design; 2006), we refer to our approach to prescribed-time ES as the ``Karafyllis, Hale, Verduyn-Lunel" (KHV) PT-ES approach. Insperger's \cite{Insperger2006} ``act-and-wait'' idea and \cite{Karafyllis2006} appeared in the same year. 
While both works offer valuable, distinct contributions in their respective domains, \cite{Karafyllis2006} provides a broader framework, including Lyapunov estimates and applicability to nonlinear systems, and being, in fact, quite a mathematical triumph. 
 We also acknowledge excellent subsequent efforts for linear systems, particularly \cite{ZhouMichielsChen2022}, and also note that the name suggested for the approach in \cite{ZhouMichielsChen2022} is a deft choice---``periodic delayed feedback'' (PDF).  We refer, however,  to our variety of PT-ES algorithms as {\em KHV PT-ES}. 

\subsection{Organization}
This paper is organized as follows. In Section \ref{Section:Problem_Statement_ExTS}, we provide some preliminaries on delay-compensated extremum seeking, the motivation and  a key idea on prescribed-time stabilization by means of time-varying delayed feedbacks. In section \ref{KHV_ES_algorithm} we present the prescribed-time Newton-based extremum seeking problem for a real-valued static map featuring an output delay.  In Section \ref{proof_of_main_result} we present  the main result and the extension to multivariable static maps, and further discussion. In Section \ref{numerical_simulations},  we consider a numerical example to illustrate the main results. Finally, we give some conclusions and perspectives in Section \ref{Section:Conclusion_ExTS}.\\

\textbf{Notation:}

 $\mathbb{R}_+$ denotes the set of non-negative real numbers.
 For a vector $x \in \mathbb{R}^{n}$ we denote $\vert x \vert$ its Euclidean norm. $\lceil x\rceil$ denotes the ceiling function,  i.e., $\lceil x\rceil = \min\{n\in \mathbb{Z}: n \geq x \}$.
 We denote by the unit circle centered around the origin the set $\mathbb{S}^1=\left\{u=(u_1,u_2)^{\top} \in \mathbb{R}^2: u_1^2+u_2^2=1\right\}$. Let $I \subseteq \mathbb{R}$ be a non-empty interval and $\Omega \subset \mathbb{R}^n$ be a non-empty set. By $C^0(I;\Omega)$ we denote  the class of continuous functions on $I$, which takes values in $\Omega$.  For $x \in C^{0}([-r,0];\mathbb{R}^n)$, we define $\Vert x \Vert=\max_{-r \leq s \leq 0}(\vert x(s) \vert)$
 Let $x:[a-r,b) \rightarrow \mathbb{R}^n$ with $b>a \geq 0$ and $r >0$. By $x_t$ we denote the ``history" of $x$ from $t-r$ to $t$, i.e.,  $x_t(s)=x(t+s)$; $s\in[-r,0]$, for $t\in [a,b)$. The class of functions $\mathcal{K}_{\infty}$ is the class of strictly increasing, continuous functions $\alpha: \mathbb{R}_+ \rightarrow \mathbb{R}_{+}$ with $\alpha(0)=0$ and $\lim_{s\rightarrow + \infty} \alpha(s)= + \infty$.

\section{Preliminaries and problem description }\label{Section:Problem_Statement_ExTS}

Consider the following real-valued output:
\begin{equation}\label{outputmap}
y(t)=Q(\theta(t-D)),
\end{equation}
 where $Q(\cdot)$ is an unknown mapping, and $D\geq0$ is a nonnegative and known constant delay.     The goal is to adjust the input $\theta \in \mathbb{R}$ in real time,  in order to drive and maintain the output $y(t)$ around the unknown extremum $y^{\star} =Q(\theta^\star)$, where  $\theta^\star \in \mathbb{R}$ is the unknown optimizer.  For  maximum seeking purposes, we assume  that 
\begin{equation}\label{eq:Gradient_Hessian_equilibrium}
 Q^{'}(\theta^\star)=0,  \quad  Q^{''}(\theta^\star)=H^\star <0,
\end{equation}
where the Hessian $H^\star$, of the static map,  is unknown.  The output $y$ is assumed to be quadratic\footnote{We recall that many
objective functions can be locally (in a small neighborhood of the extremum)
well approximated by a quadratic function.\\
 Finding the minimum or maximum  of  quadratic maps with or without known delay  can be done  via \textit{off-line  identification schemes}  e.g.,  interpolation with a collection  of input/output data points (three data points in the case of scalar maps). This problem can also be solved by  an  open-loop perturbation-based approach  with Hessian and gradient estimate on the initial time interval, as remarked in \cite[Remark 1]{Jbara2025}. See also  \cite{Labar2024}.\\
 In this paper we pursue \textit{real-time optimization} and chose to start with a scalar quadratic map  in order to facilitate understanding and convey our new methodology pedagogically, while
avoiding overly complex notation and generalizations at this stage. An extension of our method to  multivariable ES for static maps can be found in Section \ref{Multivariable_ES}. }. Therefore, it can be written as 
\begin{equation}\label{eq:static_map_without_perturbation}
y(t)=  y^\star + \tfrac{H^\star}{2}\left(\theta(t-D) - \theta^\star \right)^{2}.
\end{equation}

\subsection{Preliminaries on delay-compensated Newton ES with exponential convergence} \label{Preliminaries_delay_compensated_Tiago}

The actual input $\theta(t)=\hat{\theta}(t)+S_D(t)$ is a real estimate $\hat{\theta}(t)$ of $\theta^\star$ that is perturbed by $S_D(t)$ (to be specified below). Hence, \eqref{eq:static_map_without_perturbation} is expressed as follows:
\begin{equation}\label{eq:outout_function_static_maps}
y(t)=  y^\star + \tfrac{H^\star}{2}\left(\hat{\theta}(t-D) +S_D(t-D) - \theta^\star \right)^{2}. 
\end{equation}
The following delay-compensated Newton-based extremum seeking  algorithm for static maps was  introduced \cite{OliveiraKrsticTsubakino2017} (particularized here for the scalar case, for the sake of clarity in the presentation):
\begin{align}
&\dot{\hat{\theta}}(t)=  z(t),  \label{eq:Newton_ES_static_maps1} \\
&\dot{z}(t) = - cz(t) -cK\Gamma(t)M(t)y(t)  - cK \int_{t-D}^{t}z(s)ds,  \label{eq:Newton_ES_static_maps2} \\
&\dot{\Gamma}(t)=w_r\Gamma(t) - w_rN(t)y(t)\Gamma^2(t),  \label{eq:Newton_ES_static_maps3} 
\end{align}
where $\hat{\theta}(t)\in\mathbb{R}$ is the learning dynamics (an estimator of  $\theta^\star$), $z(t)\in\mathbb{R}$ is the state of a filter aiming at estimating the gradient, with $c>K>0$,  and $\Gamma(t)\in\mathbb{R}$ is the state of a Riccati filter (Riccati differential equation) aiming at estimating the inverse of the Hessian with $w_r>0$  a positive real tunable parameter.    The Dither signals $S_D(t)$, $M(t)$, and   $N(t)$ are given as follows:
\begin{align}
S_D(t)&=a\sin(\omega(t+D)), \label{eq:dither_signal_as_tiago_S}\\ 
M(t) &=\tfrac{2}{a}\sin(\omega t),\label{eq:dither_signal_as_tiago_M}\\ 
N(t)&=\tfrac{16}{a^2}\left(\sin^2(\omega t) - \tfrac{1}{2} \right). \label{eq:dither_signal_as_tiago_N}
\end{align} 
where $0 < a < 1$, and $\omega>0$. 
 Defining, for a given constant $y^\star \in \mathbb{R}$, $y_c(t,  \chi)=y^\star +\frac{H^\star}{2}\left(S_D(t-D) +  \chi \right)^{2}$ and $\Pi= \tfrac{2\pi}{\omega}$,  we recall  the following averaging   properties (see e.g. \cite{GhaffariKrsticNesic2012}, \cite{OliveiraKrsticTsubakino2017} or  \cite[Lemma 1]{MalisoffKrstic2021}):
\begin{align}
\tfrac{1}{\Pi}\int^{\Pi}_{0} M(s)y_c(s,\chi)ds &= H^\star  \chi,\label{property:averaging_get_gradient}\\
\tfrac{1}{\Pi}\int^{\Pi}_{0} N(s)y_c(s,\chi)ds &= H^\star. \label{property:averaging_get_Hessian}
\end{align}
The use of the Riccati filter in \eqref{eq:Newton_ES_static_maps1}-\eqref{eq:Newton_ES_static_maps3} is crucial and is the main feature of the Newton-based Extremum seeking algorithm that was originally introduced in \cite{Nesic2010}  and subsequently in \cite{GhaffariKrsticNesic2012} for the multivariable case. Under this algorithm, one can obtain faster convergence  (with a convergence rate independent of the Hessian) compared to a traditional Gradient-based Extremum seeking algorithm \cite[Section IV]{OliveiraKrsticTsubakino2017}. In the multivariable case, the use of the Riccati filter is particularly crucial to avoid having to invert directly the matrix of the Hessian estimate.

We recall the main ideas behind the delay-compensated Newton-based extremum seeking algorithm \eqref{eq:Newton_ES_static_maps1}-\eqref{eq:Newton_ES_static_maps3} introduced in \cite{OliveiraKrsticTsubakino2017}.  It  allows to compensate for the delay thanks to the distributed term in \eqref{eq:Newton_ES_static_maps2} i.e.,  using the history of the state of the filter $z(t)$ on a interval $[t-D,t]$. Averaging the error dynamics,  reformulating the problem as an input delay control, and then applying the backstepping method for PDEs allow to come up with a predictor-based controller and, therefore, the updating law for the learning dynamics.  The stability analysis is carried out on the averaged error dynamics, followed by  Lyapunov analysis. This methodology enables to invoke the averaging theorem in infinite-dimension (see \cite{Hale1990}).
The method carries out not only scalar static maps but also multivariable maps with distinct delays. 
Using one-stage sub-predictors, an alternative approach to compensate for the delays was proposed in  \cite{MalisoffKrstic2021}. The algorithm does not involve distributed terms but pointwise delays (see \cite[Section 6]{MalisoffKrstic2021} for a Newton-based ES).  The stability analysis of the averaged error dynamics is performed using   Lyapunov–Krasovskii techniques,  yielding a sufficient condition on the delay  and the convergence rate of the algorithm (slower convergence for larger delays).   The existing averaging theory for infinite dimensional systems applies as well.  

\subsection{Discussion of Newton ES  algorithm  \eqref{eq:Newton_ES_static_maps1}-\eqref{eq:Newton_ES_static_maps3}}\label{Discussion_on_Tiago_approach}
Notice that,  by means of the Fundamental Theorem of Calculus, the Newton-based Extremum seeking algorithm \eqref{eq:Newton_ES_static_maps1}-\eqref{eq:Newton_ES_static_maps3}, with static map \eqref{eq:outout_function_static_maps} can be reformulated as follows:
\begin{align} 
&\dot{\hat{\theta}}(t)=  z(t), \label{eq:Newton_ES_static_maps1V2} \\
&\dot{z}(t) = - cz(t) -cK\Gamma(t)M(t)y(t)  - cK \hat{\theta}(t) + cK \hat{\theta}(t-D),  \label{eq:Newton_ES_static_maps2V2} \\
&\dot{\Gamma}(t)=w_r\Gamma(t) - w_rN(t)y(t)\Gamma^2(t),  \label{eq:Newton_ES_static_maps3V2}  
\end{align} 
where we can clearly see the presence of a ``state" delay of the learning dynamics in both \eqref{eq:Newton_ES_static_maps2V2} and the output \eqref{eq:outout_function_static_maps}. This reformulation may look similar to the algorithm using  one-stage sequential predictors in \cite{MalisoffKrstic2021}; with the difference, that in \eqref{eq:Newton_ES_static_maps1V2}-\eqref{eq:Newton_ES_static_maps3V2}  the delay is only being present in  the state of the learning dynamics  and one does not use approximations of the predictor, but an exact expression via Fundamental Theorem of Calculus.
If one applies averaging to the error dynamics (with change of variables $\tilde{\theta}=\hat{\theta}- \theta^\star$,  $\tilde{\Gamma}=\Gamma- H^{\star-1}$)  by using properties \eqref{property:averaging_get_gradient}-\eqref{property:averaging_get_Hessian} in conjunction with the averaging operation  \cite{Hale1990}, the resulting averaged error dynamics  will read as follows:
\begin{align}
&\dot{\tilde{\theta}}_{\rm av}(t)=  z_{\rm av}(t),  \label{eq:Newton_ES_static_averaged1} \\
&\dot{z}_{\rm av}(t) = - cz_{\rm av}(t) -cK\tilde{\theta}_{\rm av}(t)  - cK \tilde{\Gamma}_{\rm av}(t)H^{\star} \tilde{\theta}_{\rm av}(t-D),  \label{eq:Newton_ES_static_averaged2} \\
&\dot{\tilde{\Gamma}}_{\rm av}(t)=-w_r\tilde{\Gamma}_{\rm av}(t) - w_rH^{\star}\tilde{\Gamma}_{\rm av}^2(t),  \label{eq:Newton_ES_static_averaged3} 
\end{align} 
where we can observe that \eqref{eq:Newton_ES_static_averaged1}-\eqref{eq:Newton_ES_static_averaged2} is cascaded by $\tilde{\Gamma}_{\rm av}(t)$.  In fact, the  multiplicative term $ \tilde{\Gamma}_{\rm av}(t)H^{\star} \tilde{\theta}_{\rm av}(t-D) $ in \eqref{eq:Newton_ES_static_averaged2} can be seen as a multiplicative perturbation that eventually goes to zero, since $\tilde{\Gamma}_{\rm av}(t) \rightarrow 0$ (fast enough depending on the choice of $w_r$) as  $t\rightarrow \infty$, and  as long as  $\vert  \tilde{\theta}_{\rm av} (t-D)\vert$ remains bounded. The stability result then would follow if,  instead of employing the PDE backstepping approach as in \cite{OliveiraKrsticTsubakino2017},   one were to  apply the classical theory of time-delay systems  as  demonstrated  in \cite{MalisoffKrstic2021} (possibly with similar trade-off of slow convergence rate for arbitrarily long delays).  Furthermore, the error dynamics from \eqref{eq:Newton_ES_static_maps1V2}-\eqref{eq:Newton_ES_static_maps3V2} and  the averaged error dynamics  \eqref{eq:Newton_ES_static_averaged1}-\eqref{eq:Newton_ES_static_averaged3} could be represented as Retarded Functional Differential Equations (RFDEs) for which  classical theorems on averaging in infinite dimensions  are also applicable. These insights   may further justify, at intuitively level,  why  the original algorithm (in \cite{OliveiraKrsticTsubakino2017}) \eqref{eq:Newton_ES_static_maps1}-\eqref{eq:Newton_ES_static_maps3} is capable of actually estimating  the optimal point, hence maximizing the static map, even in the presence of delays.

\subsection{Towards KHV approach to prescribed-time ES}

The  discussion in Subsection \ref{Discussion_on_Tiago_approach}, together with the observation leading to the ES algorithm \eqref{eq:Newton_ES_static_maps1V2}-\eqref{eq:Newton_ES_static_maps3V2} provides an intuitive understanding of the algorithm's operation. It also motivates the design of new ES algorithms that not only compensate for map delays but also accelerate convergence, potentially achieving finite-time convergence.
  In particular, the  trade-off identified  in \cite{MalisoffKrstic2021} between slower convergence versus arbitrarily large delays,   further underscores the seek for accelerated ES algorithms. More specifically, the seek for algorithms  enabling convergence within  prescribed finite time without relying necessarily on the use of chirpy probing and   time-varying singular gains (as in \cite{CemalTugrul2024Tac}), while ensuring solutions  exist beyond post-terminal times. 
  
  This objective  can be reframed as requiring  the averaged  error dynamics   $\tilde{\theta}_{\rm av}(t)$ to converge to zero in a prescribed time. This goal is indeed attainable. To see this,  notice first that the averaged error dynamics \eqref{eq:Newton_ES_static_averaged1}-\eqref{eq:Newton_ES_static_averaged2} is cascaded by $\tilde{\Gamma}_{\rm av}(t)$ and  that the   term $ \tilde{\Gamma}_{\rm av}(t)H^{\star} \tilde{\theta}_{\rm av}(t-D) $ appearing in \eqref{eq:Newton_ES_static_averaged2} can be seen as a perturbation. Suppose for the time being,  that  this perturbation  can be made to vanish in finite time e.g.,  by setting $\tilde{\Gamma}_{\rm av}(t) = 0$, for all $t\geq T^{'}$ (with $T^{'}>D$). Then the states of the resulting closed-loop averaged error dynamics $(\tilde{\theta}_{\rm av}(t),z_{\rm av}(t))^{\top}$ would converge to zero exponentially, for all $t\geq T^{'}$. Nevertheless, our goal is to achieve  a more demanding type of convergence: prescribed-time convergence. Specifically, we want   $(\tilde{\theta}_{\rm av}(t),z_{\rm av}(t))^{\top}$ to reach zero in a prescribed-time and remain at zero thereafter, together with a prescribed-time convergent Riccati-like filter for inverse Hessian estimation.  This calls for a new approach that combines i) time-varying tools (still falling within the framework of \textit{prescribed-time} control), ii) the intentional use of delays; drawing upon  the core idea that we  present   next.

\subsubsection{Key idea: PT stabilization of single integrator by time-periodic  delayed  feedback}\label{section_key_idea}

Consider the   following  one-dimensional  system operating on $[0,3T]$: 
\begin{equation}\label{eq:single_integrator_in_coreIDEA}
\dot{x}(t)=u(t), \quad  \forall t \in [0, 3T], 
\end{equation}
 with arbitrary initial condition $x(s)=\phi(s)$, $\phi \in C^0([-T,0];\mathbb{R})$ and  the following $2 T$-periodic time-varying pointwise delayed feedback
\begin{equation}\label{eq:time-varying_delayed-feedback_key_result_in_coreIDEA}
u(t)= -\mathcal{K}_{T}\left(t\right)x(t-T),
\end{equation} 
where  $T>0$ is a prescribed number,   and $\mathcal{K}_T$ is  a $2T$-periodic pulse signal, given as follows: 
\begin{equation}\label{eq:mathcalK_T_time_sequence_in_coreIDEA}
\mathcal{K}_T(t)= \begin{cases}0&  \text{if}  \quad   t \in [0,T) \cup (2T,3T], \\
\tfrac{1}{T}\left( 1- \cos\left(\tfrac{2\pi t}{T} \right) \right) &\text{if}  \quad t \in [T,2T].
\end{cases}
\end{equation}  
Notice that the controller \eqref{eq:time-varying_delayed-feedback_key_result_in_coreIDEA} features: 1) zero gain on $[0,T] $ and keeps constant the state  $x(t)$; 2)  \textit{mean} gain $\tfrac{1}{T}$  over the second $T$-period, acting on the constant state from the first period using a delay, and resulting in dead-beat convergence at $t=2T$. Indeed, on $[T,2T]$ the control is given  by
$u(t)= -\tfrac{1}{T}x(0) + \tfrac{1}{T}\cos\left(\tfrac{2\pi t}{T} \right)x(0), $
 and the resulting closed-loop solution is given by $$x(t)=x(0) - \tfrac{t-T}{T}x(0) + \tfrac{1}{2\pi}\sin\left(\tfrac{2\pi t}{T} \right)x(0), $$
  from which one gets  $x(2T)=0$.  Finally, notice that the control features zero  gain on  the interval $(2T,3T) $,  hence, $x(t)$ remains zero for $t \geq 2T$.
\begin{figure*}[t]
\centering{
\subfigure{\includegraphics[width=0.88\columnwidth]{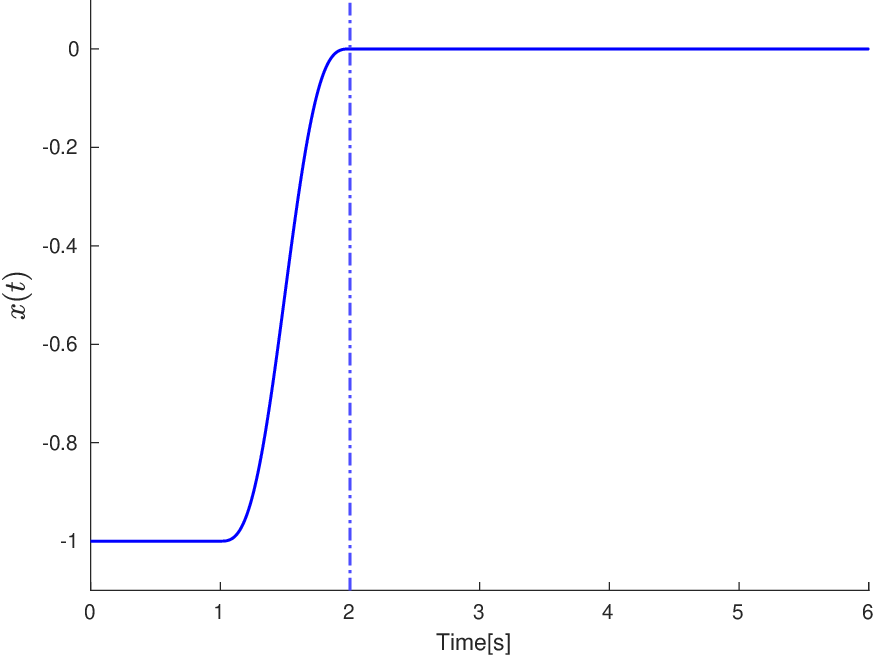} }
\subfigure{\includegraphics[width=0.88\columnwidth]{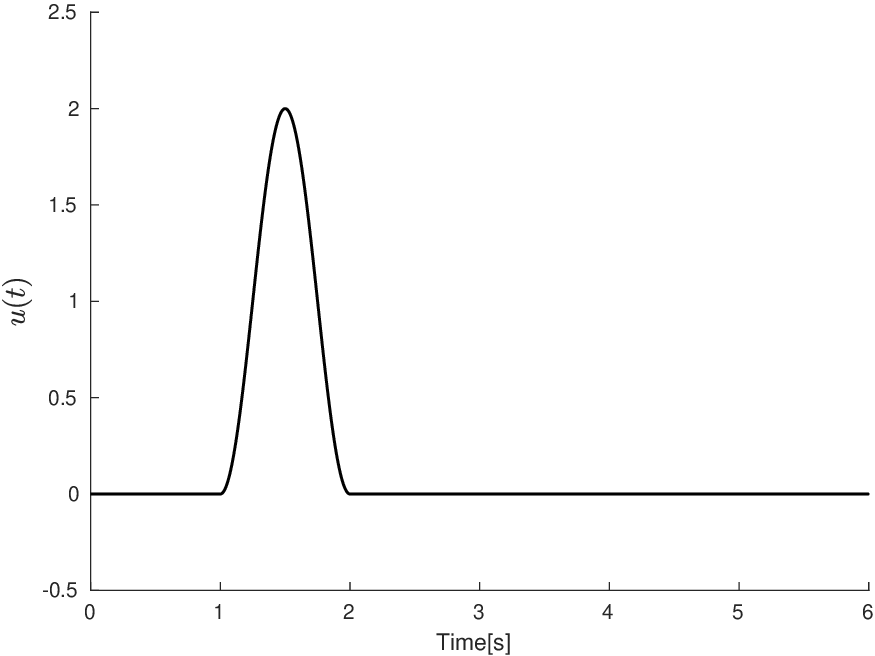} }
\caption{\underline{Left:} Solution of \eqref{eq:single_integrator_in_coreIDEA} with the $2 T$ - periodic time-varying \textit{pointwise} delayed feedback \eqref{eq:time-varying_delayed-feedback_key_result_in_coreIDEA}-\eqref{eq:mathcalK_T_time_sequence_in_coreIDEA}  with $T=1s$.  \underline{Right:} Profile of the $2 T$- periodic time-varying \textit{pointwise} delayed feedback \eqref{eq:time-varying_delayed-feedback_key_result_in_coreIDEA}-\eqref{eq:mathcalK_T_time_sequence_in_coreIDEA}.}
\label{closed-loop_solution_and_control_in_COREIDEA}
}
\end{figure*}
\begin{figure*}[t]
\centering{
\subfigure{\includegraphics[width=0.9\columnwidth]{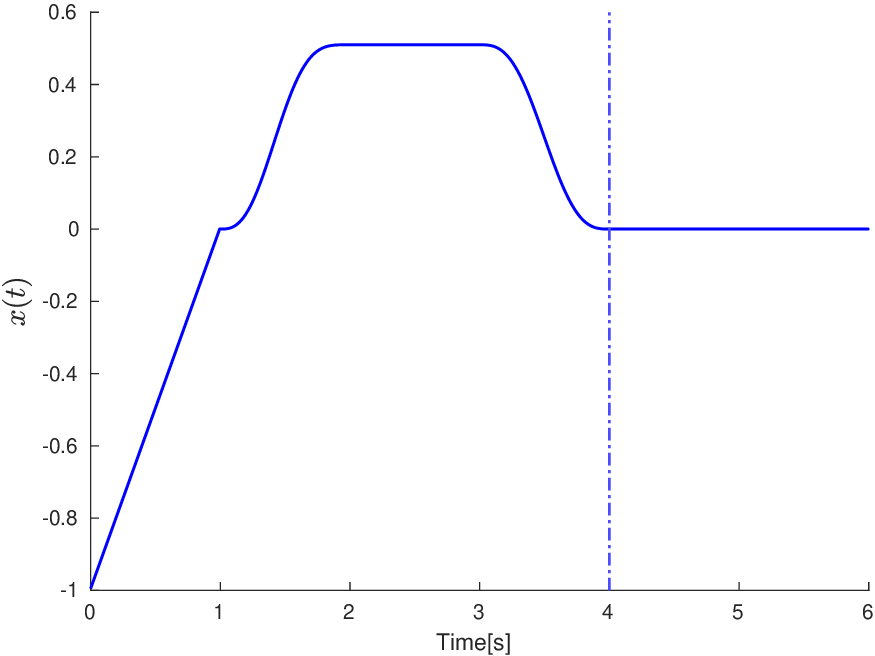} }
\subfigure{\includegraphics[width=0.9\columnwidth]{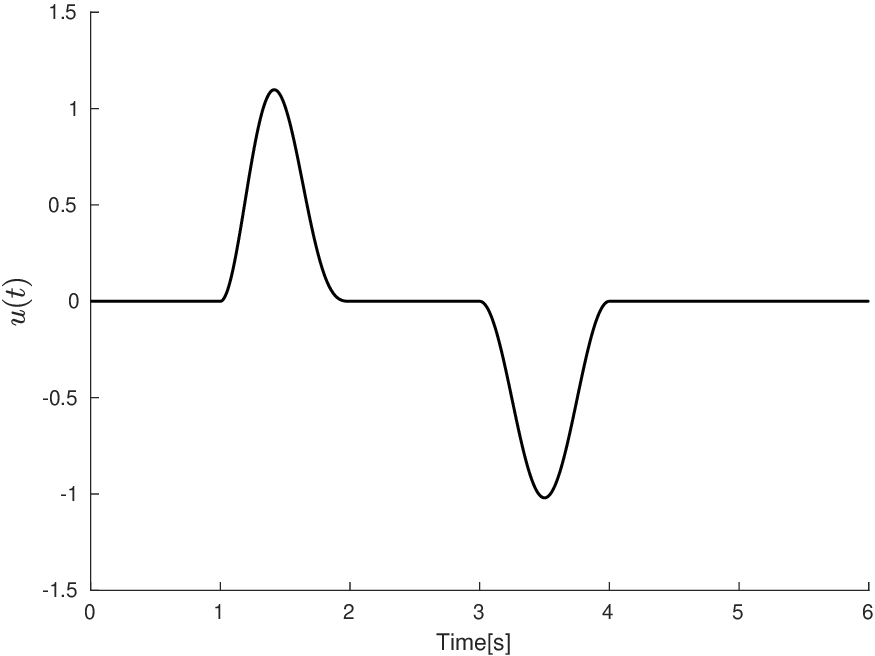} }
\caption{\underline{Left:} Solution of \eqref{eq:single_integrator_in_coreIDEA}    with the $2 T$ - periodic time-varying \textit{pointwise} delayed feedback \eqref{eq:time-varying_delayed-feedback_key_result_in_coreIDEA}-\eqref{eq:mathcalK_T_time_sequence_in_coreIDEA}  with $T=1s$, and which is subject  to an additive vanishing {\color{black}input} e.g., $v(t)=1$ for $t < 1s$, and $v(t)=0$ for $t \geq 1s$. \underline{Right:} Profile of the $2 T$- periodic time-varying \textit{pointwise} delayed feedback \eqref{eq:time-varying_delayed-feedback_key_result_in_coreIDEA}-\eqref{eq:mathcalK_T_time_sequence_in_coreIDEA}.}
\label{closed-loop_solution_and_control_in_COREIDEA_with_vanishing_distrubance}
}
\end{figure*}

Figure \ref{closed-loop_solution_and_control_in_COREIDEA} illustrates the dead-beat convergence property for the system \eqref{eq:single_integrator_in_coreIDEA},  with initial condition $x(0)=-1$ (and its preceding delay value), under the $2 T$-periodic time-varying pointwise delayed feedback \eqref{eq:time-varying_delayed-feedback_key_result_in_coreIDEA}-\eqref{eq:mathcalK_T_time_sequence_in_coreIDEA} with $T=1$s.   

 Suppose also that  system \eqref{eq:single_integrator_in_coreIDEA} is subject to an additive and   vanishing {\color{black}input}. Using the same controller \eqref{eq:time-varying_delayed-feedback_key_result_in_coreIDEA}-\eqref{eq:mathcalK_T_time_sequence_in_coreIDEA},  the  closed-loop system reads as  $\dot{x}(t)= -\mathcal{K}_{T}\left(t\right)x(t-T) + v(t)$, where $v(t) \in \mathbb{R}$, with $v(t) = 0$ for  $t\geq t^{'}$. In this scenario,  the dead-beat convergence is achieved at a multiple of $2T$.   
Figure \ref{closed-loop_solution_and_control_in_COREIDEA_with_vanishing_distrubance}  shows (on the left) the solution of the system \eqref{eq:single_integrator_in_coreIDEA},  with initial condition $x(0)=-1$ (and its preceding delay value), under the $2 T$-periodic time-varying pointwise delayed feedback \eqref{eq:time-varying_delayed-feedback_key_result_in_coreIDEA}-\eqref{eq:mathcalK_T_time_sequence_in_coreIDEA} with $T=1$s which is subject to a vanishing {\color{black}input}  e.g., $v(t)=1$ for $t < 1s$, and $v(t)=0$ for $t \geq 1s$. As it can be observed,   
the dead-beat convergence is achieved at a multiple of $2T$ (in this case, at $4$s).

\begin{remark}\label{remark_historical_issues0}
The dead-beat property discussed above for the same class of system, and time-varying gain (using the relation $\tfrac{2}{T}\sin^2\left(\tfrac{\pi t}{T} \right) = \tfrac{1}{T}\left( 1- \cos\left(\tfrac{2\pi t}{T} \right) \right)$),  can be found in, e.g.,  \cite[Chapter 3, pp 87-88]{Hale1993}; which  constitutes a key fundamental result  and that has been the focus of subsequent research efforts on stabilization in fixed time by means of periodic time-varying feedback systems for both linear \cite{Insperger2006,ZhouMichielsChen2022} and nonlinear systems \cite{Karafyllis2006}, \cite{Ding_stronPT-nonlinear2024}. 
\end{remark}

\subsubsection{Oscillators and time-periodic gains}

 Let us define first some signals of interest.  We consider the following  constrained linear oscillator  evolving on $\mathbb{S}^1$, for a given $T >0$:
\begin{equation}\label{eq:oscillator_system_periodic_time-varying_feedbacks}
\dot{\zeta}(t)= \mathcal{R}(\tfrac{\pi}{T}) \zeta(t), \quad \zeta=(\zeta_1,\zeta_2)^\top \in \mathbb{S}^1, 
\end{equation}
 where the skew symmetric matrix $\mathcal{R}(s) \in \mathbb{R}^{2\times 2}$ is defined as $ \mathcal{R}(s)=\left[\begin{array}{cc}
0 & s \\
-s & 0
\end{array}\right]$. 
\begin{figure*}[t]
\centering{
\subfigure{\includegraphics[width=0.9\columnwidth]{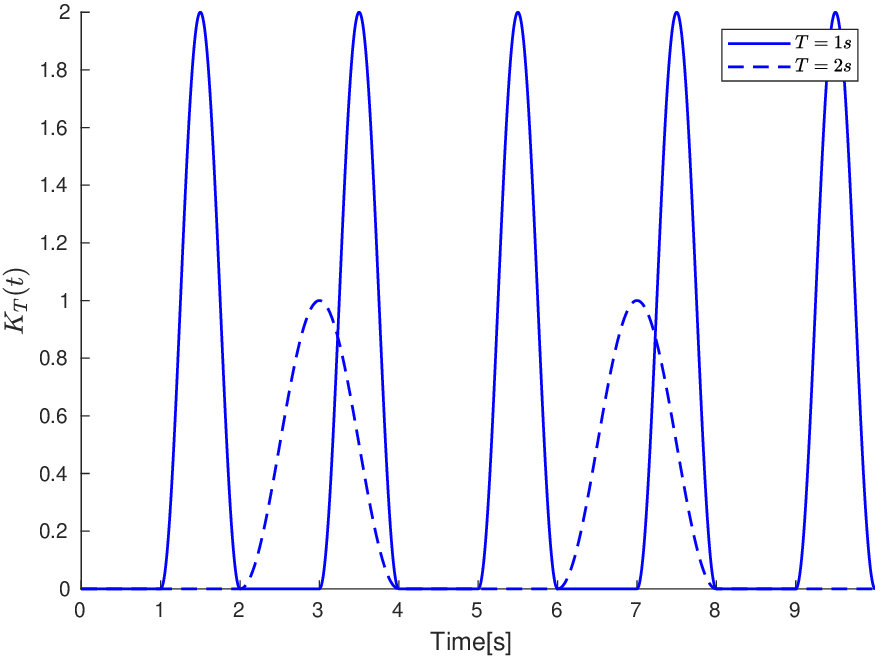} }
\subfigure{\includegraphics[width=0.9\columnwidth]{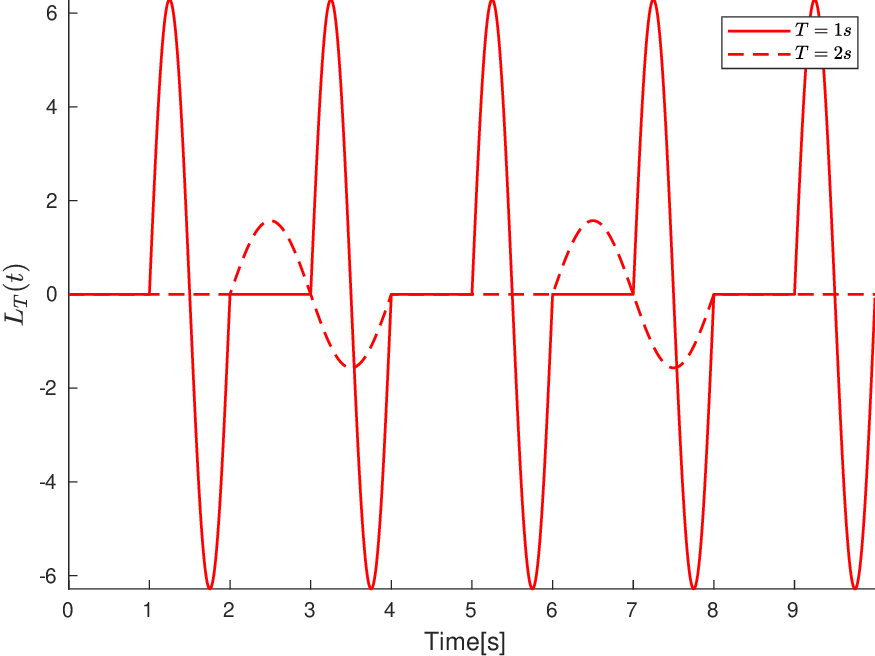} }
\caption{\underline{Left:} Profile of the $2T$- periodic signal $\mathcal{K}_T(t)$ defined in  \eqref{eq:mathcalK_T}  generated by the linear constrained oscillator \eqref{eq:oscillator_system_periodic_time-varying_feedbacks} with $T=1s$ (blue line) and $T=2s$ (blue dashed line),  and initial condition $\zeta(0)=(0,1)^{\top}\in \mathbb{S}^1$, which gives $\delta_{\zeta} = 0$, according to \eqref{eq:definition_of_delta-phase_of_oscillators}.  \underline{Right:} Profile of the $2T$- periodic signal $\mathcal{L}_T(t)$ defined in  \eqref{eq:mathcalL_T}  generated by the linear constrained oscillator \eqref{eq:oscillator_system_periodic_time-varying_feedbacks} with $T=1s$ (red line) and $T=2s$ (red dashed line),  and initial condition $\zeta(0)=(0,1)^{\top}\in \mathbb{S}^1$, which gives $\delta_{\zeta} = 0$, according to \eqref{eq:definition_of_delta-phase_of_oscillators}. }
\label{Pulse_signals}
}
\end{figure*}
The explicit solutions to \eqref{eq:oscillator_system_periodic_time-varying_feedbacks} are given as follows:
\begin{align}
\zeta_1(t)&=\zeta_{1}(0) \cos \left( \tfrac{\pi}{T}t\right)+\zeta_{2}(0) \sin \left(\tfrac{\pi}{T}t\right), \label{eq:explicit_solution_osicillator1}\\
\zeta_2(t)&=-\zeta_{1}(0) \sin \left(\tfrac{\pi}{T}t\right)+\zeta_{2}(0) \cos \left(\tfrac{\pi}{T}t\right), \label{eq:explicit_solution_osicillator2}
\end{align}
for any initial conditions $\zeta(0)=(\zeta_{1}(0),\zeta_{2}(0))^\top\in \mathbb{S}^1$. Moreover, \eqref{eq:explicit_solution_osicillator1}-\eqref{eq:explicit_solution_osicillator2} can further be expressed as
\begin{align}
\zeta_1(t)&= \sin \left(\tfrac{\pi}{T}\left( t + \delta_{\zeta} T \right)\right), \label{eq:explicit_solution_osicillator1_V2}\\
\zeta_2(t)&=\cos \left(\tfrac{\pi}{T}\left( t + \delta_{\zeta} T \right)\right), \label{eq:explicit_solution_osicillator2_V2}
\end{align}
where $\delta_{\zeta} \in (-1,1]$ is given by
\begin{equation}\label{eq:definition_of_delta-phase_of_oscillators}
\delta_{\zeta} = \Delta(\zeta_1(0),\zeta_2(0)),  
\end{equation}
 and the function $\Delta(s_1,s_2)$  is defined as follows\footnote{It can be seen as a variation of the  2-argument arctangent (``atan2") function.}: 
\begin{equation}\label{eq:definition_variation_atan2_function}
\Delta(s_1,s_2) = 
\begin{cases} 
\frac{1}{\pi} \arctan\left(\frac{s_1}{s_2}\right) & \text{if } s_2 > 0, \\
\frac{1}{\pi} \arctan\left(\frac{s_1}{s_2}\right) + 1 & \text{if } s_2 < 0 \text{ and } s_1 \geq 0, \\
\frac{1}{\pi} \arctan\left(\frac{s_1}{s_2}\right) - 1 & \text{if } s_2 < 0 \text{ and } s_1 < 0, \\
\frac{1}{2} & \text{if } s_2 = 0 \text{ and } s_1 =1 , \\
-\frac{1}{2} & \text{if } s_2 = 0 \text{ and } s_1 =-1, \\
0 & \text{if } s_2 =1 \text{ and } s_1 = 0, \\
1 & \text{if } s_2 =-1\text{ and }s_1 = 0.
\end{cases}
\end{equation}
Based on \eqref{eq:explicit_solution_osicillator1_V2} we define the following $2T$-periodic pulse signal $\mathcal{K}_T(t)$  as a  smooth state-dependent  signal generated by the  constrained linear oscillator \eqref{eq:oscillator_system_periodic_time-varying_feedbacks}:
\begin{equation}\label{eq:mathcalK_T_V0}
\mathcal{K}_{T}(t)= \begin{cases}0&  \text{if}  \quad \zeta_1(t)>0, \\
\tfrac{2}{T}\zeta_1^2(t) &\text{if}  \quad \zeta_1(t)\leq 0.   \end{cases}
\end{equation} 
 which can, in turn, be written as follows:
\begin{equation}\label{eq:mathcalK_T}
\begin{split}
\mathcal{K}_{T}(t)= & \frac{2}{T}\max\big\{0, -\zeta_1(t) \vert\zeta_1(t) \vert \big\},~~\forall~t\geq0,
\end{split} 
\end{equation}
which verifies the following property:
\begin{equation}\label{eq:property_K_T_and delayed_K}
\mathcal{K}_{T}(t)\mathcal{K}_{T}(t-T)=0,~~\forall~t\geq 0.
\end{equation}
Moreover, we introduce another periodic pulse signal generated also by \eqref{eq:oscillator_system_periodic_time-varying_feedbacks}, corresponding to the time-derivative of $\mathcal{K}_{T}$. We denote it as $\mathcal{L}_{T}=\tfrac{d \mathcal{K}_T}{dt}$, which is given by:
\begin{equation}\label{eq:mathcalL_T}
\mathcal{L}_{T}(t)=  -\frac{4}{T^2}\pi\max\big\{0, -\zeta_1(t)  \big\} \zeta_2(t),~~\forall~t\geq0.
\end{equation}
Figure \ref{Pulse_signals} depicts an example of the profiles of  the $2T$-periodic signals $\mathcal{K}_T$ (left - blue lines) and $\mathcal{L}_{T}$ (right - red lines) 
 with $T=1s$ and $T=2s$,   and initial conditions $\zeta(0)=(0,1)^{\top}\in \mathbb{S}^1$, which gives $\delta_{\zeta} = 0$, according to \eqref{eq:definition_of_delta-phase_of_oscillators}-\eqref{eq:definition_variation_atan2_function}.  

Finally, we consider the following constrained linear  oscillator  evolving on $\mathbb{S}^1$: 
\begin{equation}\label{eq:oscillator_system_periodic_time-varying_feedbacks_tau}
\dot{\xi}(t)= \mathcal{R}(\tfrac{\pi}{2T}) \xi(t), \quad \xi=(\xi_1,\xi_2)^\top \in \mathbb{S}^1,   
\end{equation}
where $T$ is the same as in \eqref{eq:oscillator_system_periodic_time-varying_feedbacks}, and which satisfies
\begin{align}
\xi_1(t)&= \sin \left(\tfrac{\pi}{2T}\left( t + \delta_{\xi} 2T \right)\right), \label{eq:explicit_solution_osicillatortau_1_V2}\\
\xi_2(t)&=\cos \left(\tfrac{\pi}{2T}\left( t + \delta_{\xi} 2T \right)\right), \label{eq:explicit_solution_osicillatortau_2_V2}
\end{align}
where $\delta_{\xi} \in (-1,1]$ is characterized  as follows: 
\begin{equation}\label{eq:definition_of_delta-xi-phase_of_oscillators}
\delta_{\xi}=\Delta(\xi_1(0),\xi_2(0)),
\end{equation}
with $\Delta$ being defined in \eqref{eq:definition_variation_atan2_function}. 

Finally, we define the following pulse signal of interest:
\begin{equation}\label{eq:mathcalK_tau}
\begin{split}
\mathcal{K}_{ 2T}(t)= & \frac{1}{ T}\max\big\{0, -\xi_1(t) \vert\xi_1(t) \vert \big\},~~\forall~t\geq0.
\end{split} 
\end{equation} 
\begin{remark}\label{remk:forward_invariance_oscillators}
Let us remark first  that,  by the structure of the oscillators  \eqref{eq:oscillator_system_periodic_time-varying_feedbacks} and \eqref{eq:oscillator_system_periodic_time-varying_feedbacks_tau}, the set $\mathbb{S}^1$ is forward
invariant, i.e., if  $\zeta(0) \in \mathbb{S}^1$, and $\xi(0) \in \mathbb{S}^1$   then  $\zeta(t) \in \mathbb{S}^1$ and $\xi(t) \in \mathbb{S}^1$  for all $t \geq 0$.
Moreover, since in our formulation no solutions of systems \eqref{eq:oscillator_system_periodic_time-varying_feedbacks} and \eqref{eq:oscillator_system_periodic_time-varying_feedbacks_tau} are defined outside of the unit circle, the set $\mathbb{S}^1$ is trivially globally attractive. 
\end{remark}
\begin{remark}\label{remk:state-dependent-signals-time-invariantES}
Notice also that we have considered  the time-varying periodic signals \eqref{eq:mathcalK_T}, \eqref{eq:mathcalL_T} and \eqref{eq:mathcalK_tau}, as state-dependent signals -- those generated by the oscillators  \eqref{eq:oscillator_system_periodic_time-varying_feedbacks} and \eqref{eq:oscillator_system_periodic_time-varying_feedbacks_tau}, respectively. Those signals will be part of the dynamics of the ES algorithm, and they will allow us  
to treat its averaged error-dynamics  as a time-invariant RFDE. This approach significantly facilitates the application of classical averaging theorems in infinite-dimensions.
\end{remark}

\section{KHV Prescribed-time Newton  extremum seeking}\label{KHV_ES_algorithm}

 We are now in position to  propose the following Prescribed-time Newton-based  extremum seeking by means of time-varying pointwise delayed feedback, which we will often refer to as \textit{KHV PT-ES}:
\begin{figure*}[t!]
\centering
  \includegraphics[width=0.7\textwidth]{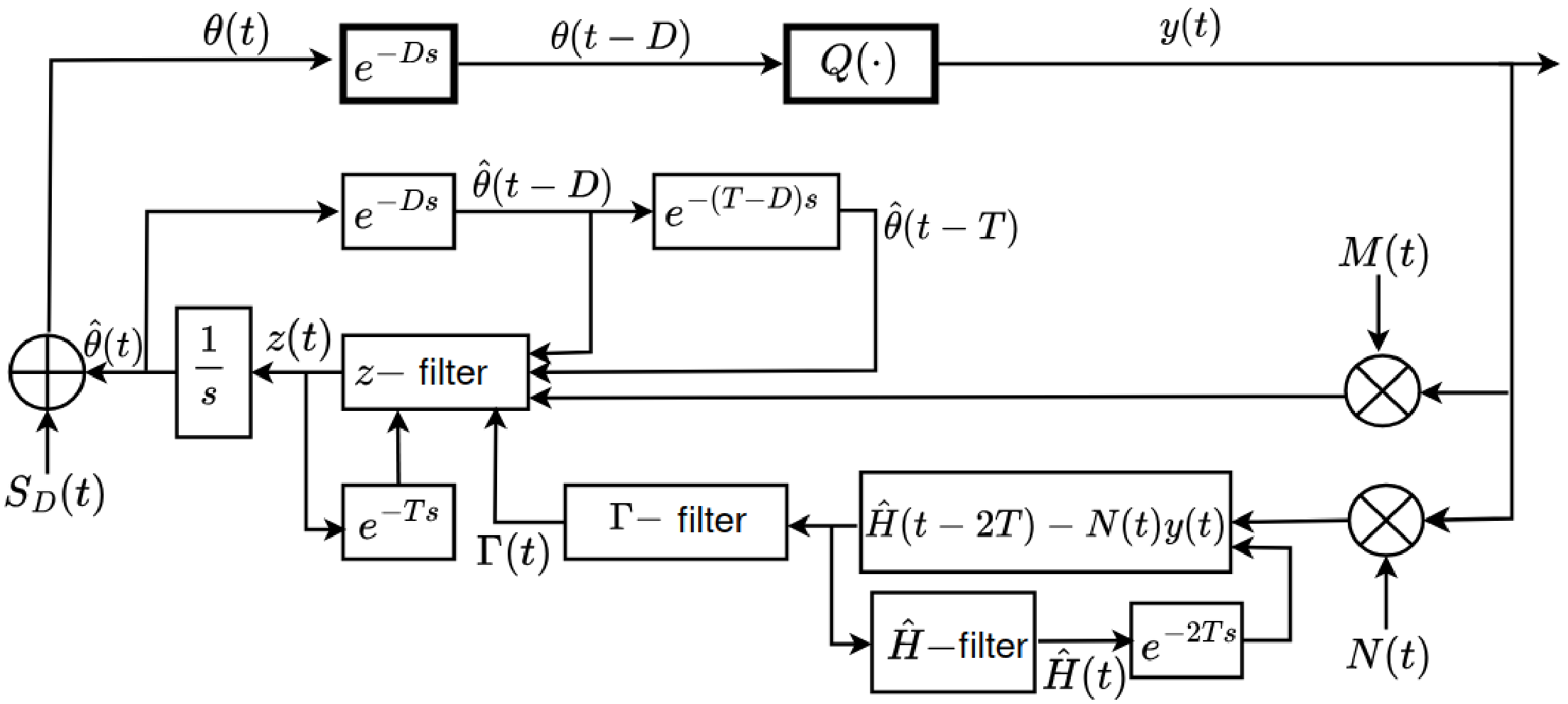}
  \caption{Scheme of the KHV PT-ES  \eqref{eq:Newton_ES_static_maps1_prescribed-pointwise}-\eqref{eq:Newton_ES_static_maps3_prescribed-pointwise_Hessian} with  map delay $D$ and feedback delays $T$ and $2T$.}
  \label{PT_ES_diagram}
\end{figure*}
\begin{figure}[t!]
\centering
  \includegraphics[width=0.78\columnwidth]{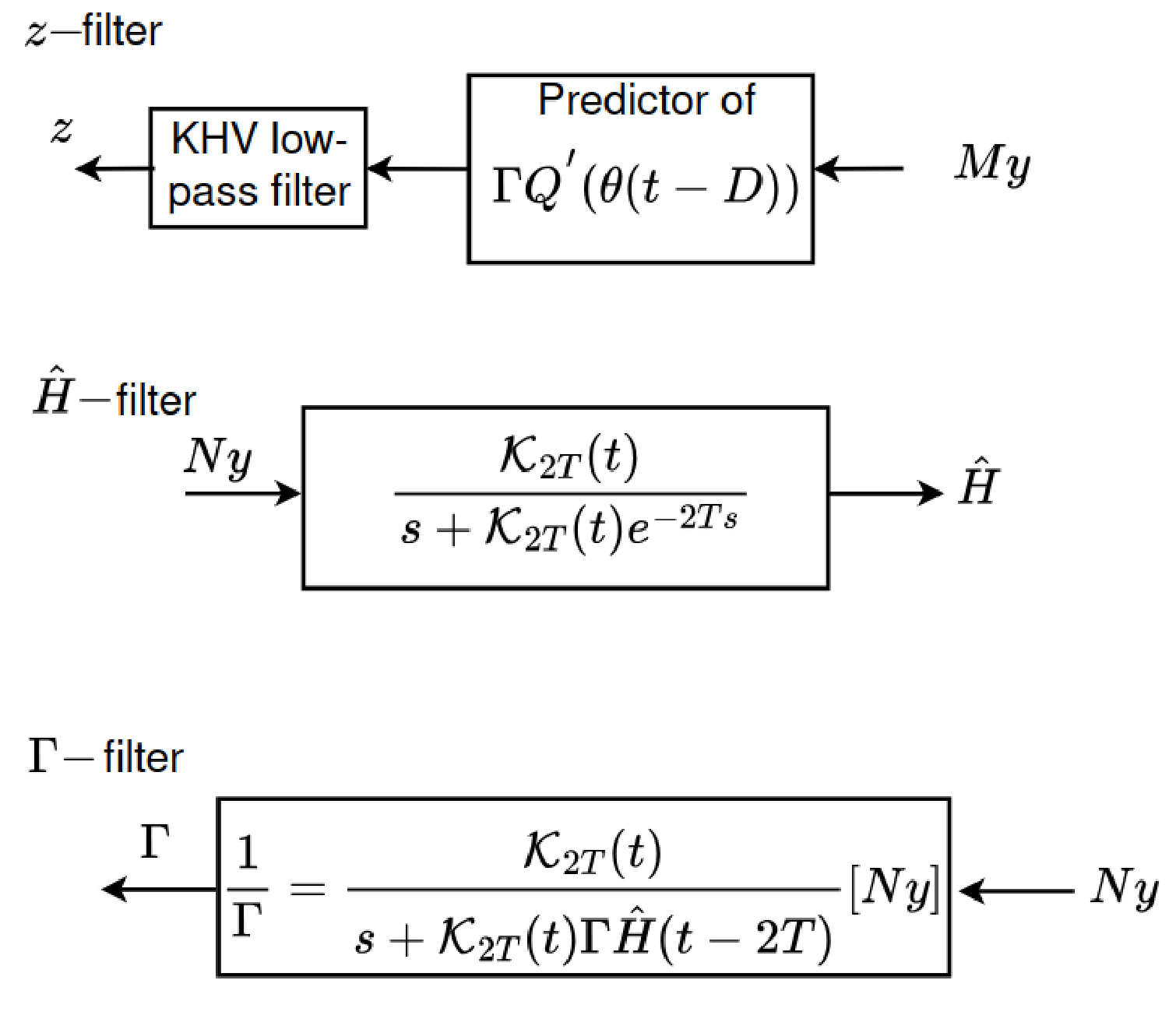}
  \caption{
 Details of the three ``filters'' in Figure \ref{PT_ES_diagram}. Notation is abused copiously (simultaneous time-domain and Laplace-domain nomenclature) to convey intuition.  
\ \ul{Top:} The $z$-filter, given  in \eqref{eq:Newton_ES_static_maps2_prescribed-pointwise}, mirrors the structure of the exponential filter \eqref{eq:Newton_ES_static_maps2V2}. The $z$-filter is fed by $M(t)y(t)$,  an estimate of the ``scalar gradient'' $Q'(\theta(t-D))$. This estimate  drives a predictor for the approximation of a ``Newton update'' $\Gamma Q'(\theta(t-D))$, cascaded into a KHV-type low-pass filter (prescribed time, employing time-periodic gains and delay feedback). The approximation  $\Gamma(t) M(t) y(t) \approx \Gamma(t) Q'(\theta(t-D))$ is the product of $\Gamma$, an estimate of the inverse Hessian $1/H^{\star}$, and $My$, an approximation of the scalar gradient $Q'(\hat\theta(t-D)) = H^{\star} \tilde\theta(t-D)$, as in \eqref{eq:Newton_ES_static_maps2V2}. The KHV filter plays a role analogous to the exponential filter $c/(s+c)$ in \eqref{eq:Newton_ES_static_maps2_prescribed-pointwise}. Although not needed in practice--neither in \eqref{eq:Newton_ES_static_maps2V2} nor in \eqref{eq:Newton_ES_static_maps2_prescribed-pointwise}--this low-pass filtering enables the use of the averaging theorem in \cite{Hale1990}, as the closed-loop system without it does not reduce to a standard RFDE. 
\ \ul{Middle:} The $\hat H$-filter, given in \eqref{eq:Newton_ES_static_maps3_prescribed-pointwise_Hessian}, is a low-pass filter of the KHV kind applied to the Hessian estimate $N(t)y(t)$.  
\ \ul{Bottom:} The $\Gamma$-filter in \eqref{eq:Newton_ES_static_maps3_prescribed-pointwise_Riccati} is a KHV version of the exponential Riccati-type filter in \eqref{eq:Newton_ES_static_maps3V2}, estimating in prescribed time the inverse Hessian $1/H^{\star}$. Since $\hat H$ approximates $H^{\star}$ and $\Gamma$ approximates $1/H^{\star}$, the product $\Gamma(t) \hat H(t - 2T)$  should be regarded as $\approx 1$, in the sense of design inspiration.}
  \label{PT_ES_diagram_filters}
\end{figure}

\begin{align}
\dot{\hat{\theta}}(t)=&  z(t),  \label{eq:Newton_ES_static_maps1_prescribed-pointwise} \\
\dot{z}(t) =&-2\mathcal{K}_{T}(t)z(t-T)
\nonumber\\ &
+\mathcal{L}_{T}(t) \left[ \hat{\theta}(t-D) - \hat{\theta}(t-T)
-\Gamma(t)M(t)y(t)  \right]
,  \label{eq:Newton_ES_static_maps2_prescribed-pointwise} \\
\dot{\Gamma}(t)=&\mathcal{K}_{ 2T}(t) \Gamma^2(t)\left[\hat{H}(t- 2T)  -  N(t)y(t)\right],  
\label{eq:Newton_ES_static_maps3_prescribed-pointwise_Riccati} \\
\dot{\hat{H }}(t)=&-\mathcal{K}_{ 2T}(t)\left[\hat{H}(t-  2T)  -N(t)y(t)\right], \label{eq:Newton_ES_static_maps3_prescribed-pointwise_Hessian}
\end{align} 
where the static map $y(t)$ is given in \eqref{eq:outout_function_static_maps},  $\hat{\theta}(t)\in\mathbb{R}$ is the learning dynamics, $z(t)\in\mathbb{R}$ is the state of a filter-like aiming at estimating the gradient,  the time-varying gains $\mathcal{K}_T(t)$ and $\mathcal{L}_{T}(t)$, are defined, respectively  in \eqref{eq:mathcalK_T} and \eqref{eq:mathcalL_T}, and are generated by the oscillator \eqref{eq:oscillator_system_periodic_time-varying_feedbacks}. 
 The dither signals $S_D(t)$, $M(t)$, $N(t)$ are defined, respectively, in  \eqref{eq:dither_signal_as_tiago_S}, \eqref{eq:dither_signal_as_tiago_M} and \eqref{eq:dither_signal_as_tiago_N}.
 
In the above dynamics,  $\Gamma(t)\in\mathbb{R}$ is the state of a ``Riccati-like" filter  aiming at estimating the inverse of the Hessian of the cost function. It involves, in addition,  the past values of $\hat{H}$ (which estimates the Hessian).  Recall that $\Gamma = \hat{H}^{-1}$ and that \eqref{eq:Newton_ES_static_maps3_prescribed-pointwise_Riccati} is obtained using  $\dot{\Gamma}= - \Gamma^2 \dot{\hat{H}}$.
  The time-periodic gain $\mathcal{K}_{ 2T}(t)$ is defined in \eqref{eq:mathcalK_tau} and is generated by the oscillator \eqref{eq:oscillator_system_periodic_time-varying_feedbacks_tau}.
 We consider  initial conditions  $\phi:[- 2T,0]  \rightarrow \mathbb{R}^4$, for  \eqref{eq:oscillator_system_periodic_time-varying_feedbacks}, where  $(\hat{\theta}^0,z^0,\Gamma^0,\hat{H}^0)^{\top}= (\hat{\theta}^0,z^0,(\hat{H}^0)^{-1},\hat{H}^0)^{\top}=\phi(s)$, with $\phi \in C^{0}([- 2T,0]; \mathbb{R}^{4})  $ for \eqref{eq:Newton_ES_static_maps1_prescribed-pointwise}-\eqref{eq:Newton_ES_static_maps3_prescribed-pointwise_Hessian}, and $\hat{H}^0 \neq 0$ for all $s\in [- 2T,0] $.  In addition,     $\zeta(0)=(\zeta_1(0),\zeta_2(0))^{\top} \in \mathbb{S}^1$, and $\xi(0)=(\xi_1(0),\xi_2(0))^{\top} \in \mathbb{S}^1$ for \eqref{eq:oscillator_system_periodic_time-varying_feedbacks_tau}. 
  Figure \ref{PT_ES_diagram} depicts the scheme of the KHV PT-ES \eqref{eq:Newton_ES_static_maps1_prescribed-pointwise}-\eqref{eq:Newton_ES_static_maps3_prescribed-pointwise_Hessian} with  map delay $D$ and feedback delays $T$ and $2T$. Figure \ref{PT_ES_diagram_filters} shows the details of the three ``filters"  in  Figure \ref{PT_ES_diagram}.
    
 The different delays and ``time-periods" involved in the PT extremum seeking \eqref{eq:Newton_ES_static_maps1_prescribed-pointwise}-\eqref{eq:Newton_ES_static_maps3_prescribed-pointwise_Hessian} are such that the following ordering holds: 
$\frac{2\pi}{\omega} \ll D <T,$
 if  $D>0$, and $\frac{2\pi}{\omega} \ll T$  if $D=0$.
In the sequel, we set $T^{\star}>0$ as the prescribed time of the KHV PT-ES  algorithm \eqref{eq:Newton_ES_static_maps1_prescribed-pointwise}-\eqref{eq:Newton_ES_static_maps3_prescribed-pointwise_Hessian} and we characterize the feedback delay  $T$, in terms to the  map delay $D$, $\delta_{\zeta}$, $\delta_{\xi}$ (defined in \eqref{eq:definition_of_delta-phase_of_oscillators} and \eqref{eq:definition_of_delta-xi-phase_of_oscillators}, respectively), and the prescribed-time $T^{\star}$. Specifically,  for a given $D\geq 0$, $\delta_{\zeta} \in (-1,1]$, $\delta_{\xi} \in (-1,1]$, we pick any prescribed number  $T^{\star}>0$ such that
 \begin{equation}\label{eq:restriction_choice_T_star}
 T^{\star}>\left(2 - \delta_{\zeta}  + 2\Big \lceil \frac{4-(2\delta_{\xi} - \delta_{\zeta}) }{2}\Big\rceil\right)D.
 \end{equation}
  Then, we select
\begin{equation}\label{eq:characteriation_T}
T=
\frac{2D+T^{\star}}{4 - \delta _{\zeta} + 2\Big \lceil \frac{4-(2\delta_{\xi} - \delta_{\zeta}) }{2}\Big\rceil}.
\end{equation}
 Notice that by replacing \eqref{eq:restriction_choice_T_star} in \eqref{eq:characteriation_T} yields $T>D$.  Suppose the  limiting scenario in which $T\leq D$. Consider for example  $T = \frac{D}{(4-2\delta_{\xi})}$. This implies   that the  Hessian and its inverse need to be estimated within at least   $4T-2\delta_{\xi} T=D$ units of time, using \eqref{eq:Newton_ES_static_maps3_prescribed-pointwise_Riccati}-\eqref{eq:Newton_ES_static_maps3_prescribed-pointwise_Hessian} (as we will prove later based on the result in Lemma \ref{eq:Prt_Target_system}). 
 Therefore, it is necessary for  $T$  to be large enough--- at least longer than the map delay, i.e., $T>D$.  This can be achieved by prescribing $T^{\star}$ according to \eqref{eq:restriction_choice_T_star}.

\subsubsection{Error dynamics and averaging  }
Consider the following change of variables $\tilde{\theta}=\hat{\theta}- \theta^\star$, $\tilde{\Gamma}=\Gamma - H^{\star -1}$, $\tilde{H}=\hat{H}-H^\star$. From \eqref{eq:Newton_ES_static_maps1_prescribed-pointwise}-\eqref{eq:Newton_ES_static_maps3_prescribed-pointwise_Hessian}, the error dynamics reads as follows:
\begin{align}
\dot{\tilde{\theta}}(t)=&  z(t),  \label{eq:Newton_ES_static_maps1_prescribed-pointwise_error} \\
\dot{z}(t) =&- 2\mathcal{K}_{T}(t)z(t-T)    
\nonumber  \\ & 
+\mathcal{L}_{T}(t) 
\left[\tilde{\theta}(t-D)
- \tilde{\theta}(t-T) 
\right.\nonumber\\ & \left. 
- (\tilde{\Gamma}(t)+H^{\star-1})M(t)y(t) 
\right], 
\label{eq:Newton_ES_static_maps2_prescribed-pointwise_error} 
\\
\dot{\tilde{\Gamma}}(t)=&
\mathcal{K}_{ 2T}(t)
(\tilde{\Gamma}(t)+H^{\star-1})^2 
\nonumber \\ & 
\times\left[\tilde{H}(t- 2T)+H^\star
 - N(t)y(t)\right], 
\label{eq:Newton_ES_static_maps3_prescribed-pointwise_Riccati_error} \\
\dot{\tilde{H }}(t)=&-\mathcal{K}_{ 2T}(t)\left[\tilde{H}(t- 2T)+H^\star
 - N(t)y(t)\right], \label{eq:Newton_ES_static_maps3_prescribed-pointwise_Hessian_error}
\end{align} 
with initial conditions $(\tilde{\theta}^0,z^0,\tilde{\Gamma}^0,\tilde{H}^0)^{\top}= \tilde{\phi}(s)$, $ \tilde{\phi} \in C^{0}([- 2T,0]; \mathbb{R}^{4})$, with $\tilde{\theta}^0=\hat{\theta}^0-\theta^{\star}$,  $\tilde{\Gamma}^0=(\hat{H}^0)^{-1} - H^{\star -1}$, and $\tilde{H}^0=\hat{H}^0-H^{\star}$.  Notice that $\tilde{\Gamma}^0 + H^{\star-1} = (\tilde{H}^0 + H^{\star})^{-1}$  for all $s\in [-2T,0]$.   As we have mentioned in  Remark \ref{remk:state-dependent-signals-time-invariantES}, the signals $\mathcal{K}_T(t)=\mathcal{K}_T(\zeta_1(t))$, $\mathcal{L}_T(t)=\mathcal{L}_T(\zeta_1(t),\zeta_2(t))$ and $\mathcal{K}_{ 2T}(t)=\mathcal{K}_{ 2T}(\xi_1(t))$   are seen as state dependent signals generated by the oscillators.
The delays appearing in the above system can be  ordered  as $D < T < 2T$.
For \eqref{eq:Newton_ES_static_maps1_prescribed-pointwise_error}-\eqref{eq:Newton_ES_static_maps3_prescribed-pointwise_Hessian_error}, consider the following  RFDE with fast oscillations:  
\begin{equation}\label{eq:Compact_form_error_system}
\dot{X}(t)= f(\omega t,X_t),
\end{equation} 
with initial condition   $X(s)=\varphi(s) = (\tilde{\phi},\zeta(0),\xi(0))^{\top} \in C^{0}([- 2T,0];\mathbb{R}^4)\times \mathbb{S}^{1}\times \mathbb{S}^{1}$, and
  state variable    $X=(\tilde{\theta},z,\tilde{\Gamma},\tilde{H}, \zeta_1,\zeta_2,\xi_1,\xi_2)^{\top} \in \mathbb{R}^{8}$, $X_t \in C^{0}([- 2T,0];\mathbb{R}^{4})\times \mathbb{S}^{1}\times \mathbb{S}^{1}$, $X_t(s)=X(t+s)$ for all $s \in [- 2T,0]$. 
 Therefore, we have $f=(f_1,f_2,f_3,f_4,f_5,f_6,f_7,f_8)^{\top}$  with
\begin{align}
f_1(\omega t, X_t)&=X_2(t),  \label{eq:RFDE-f1}\\
f_2(\omega t,X_t)&= - 2\mathcal{K}_{T}(X_5(t))X_2(t-T) \nonumber \\
&   - \mathcal{L}_{T}(X_5(t),X_6(t))X_1(t-T)\nonumber \\ 
& +\mathcal{L}_{T}(X_5(t),X_6(t)) X_1(t-D) \nonumber \\  
 &  -\mathcal{L}_{T}(X_5(t),X_6(t))  [X_3(t)+H^{\star-1}] \nonumber \\
  &\times [\tfrac{2}{a}\sin(\omega t)][ y^\star + \tfrac{H^\star}{2}\left[X_1(t-D) +a\sin(\omega t) \right]^{2}], \label{eq:RFDE-f2} \\
f_3(\omega t,X_t)&=\mathcal{K}_{ 2T}(X_7(t)) [X_4(t- 2T)+H^\star][X_3(t)+H^{\star-1}]^2 \nonumber \\
&   - \mathcal{K}_{ 2T}(X_7(t))[X_3(t)+H^{\star-1}]^2[\tfrac{16}{a^2}\left[\sin^2(\omega t) - \tfrac{1}{2} \right]] \nonumber\\
& \times [ y^\star + \tfrac{H^\star}{2}\left[X_1(t-D) +a\sin(\omega t) \right]^{2}],  \label{eq:RFDE-f3}\\
f_4(\omega t,X_t)&=-\mathcal{K}_{ 2T}(X_7(t))[X_4(t-  2T) + H^\star] \nonumber \\ 
 &+ \mathcal{K}_{ 2T}(X_7(t))[\tfrac{16}{a^2}\left[\sin^2(\omega t) - \tfrac{1}{2} \right]] \nonumber\\
&\times [ y^\star + \tfrac{H^\star}{2}\left[X_1(t-D) +a\sin(\omega t) \right]^{2}], \label{eq:RFDE-f4}\\ 
f_5(\omega t,X_t)&=\tfrac{\pi}{T} X_6(t), \label{eq:RFDE-f5}\\ 
f_6(\omega t,X_t)&=-\tfrac{\pi}{T} X_5(t), \label{eq:RFDE-f6}\\
f_7(\omega t,X_t)&=\tfrac{\pi}{ 2T} X_8(t),\label{eq:RFDE-f7} \\
f_8(\omega t,X_t)&=-\tfrac{\pi}{ 2T} X_7(t). \label{eq:RFDE-f8}
\end{align}
 The compact representation $\dot{X}(t) = f(\omega t, X_t)$ will be useful for applying the averaging theorem in infinite-dimensional settings \cite{Hale1990}, as will be revisited in Section~\ref{proof_of_main_result}. The prescribed-time convergence properties
of the closed-loop system \eqref{eq:Newton_ES_static_maps1_prescribed-pointwise_error}-\eqref{eq:Newton_ES_static_maps3_prescribed-pointwise_Hessian_error} can be investigated through the corresponding average system that we deduce by  using the properties \eqref{property:averaging_get_gradient}-\eqref{property:averaging_get_Hessian} in conjunction with the averaging operation \cite{Hale1990}. Hence,   
the average of the system $\dot{X}(t)=f(\omega t,X_t)$,   with $f$ stated above  over the period $\Pi=\tfrac{2\pi}{\omega}$, results in a closed-loop system of the form  
\begin{equation}\label{eq:Compact_form_Averaged_error_system}
\dot{X}^{\rm av}(t)=F^{\rm av}(X^{\rm av}_{t}),
\end{equation}
 with initial condition   $X^{\rm av}(s)=\varphi(s) = (\tilde{\phi},\zeta(0),\xi(0))^{\top} \in C^{0}([- 2T,0];\mathbb{R}^4)\times  \mathbb{S}^{1}\times \mathbb{S}^{1}$,  state variable    $X^{\rm av}=(\tilde{\theta}_{\rm av},z_{\rm av},\tilde{\Gamma}_{\rm av},\tilde{H}_{\rm av}, \zeta_{1},\zeta_{2}, \xi_{1},\xi_{2})^{\top}$,  $X_t^{\rm av}(s)=X^{\rm av}(t+s)$ for all $s \in [- 2T,0]$ and    $ F^{\rm av}(\psi)=\tfrac{1}{\Pi}\int_{0}^{\Pi}f(s,\psi)ds $.  
In particular, we display    
\begin{align}
\dot{\tilde{\theta}}_{\rm av}(t)=&  z_{\rm av}(t),  \label{eq:Newton_ES_static_maps1_prescribed-pointwise_error_averaged} \\
\dot{z}_{\rm av}(t) =& - 2\mathcal{K}_{T}(t)z_{\rm av}(t-T)\nonumber  \\
 &   - \mathcal{L}_{T}(t)
 \left[\tilde{\theta}_{\rm av}(t-T)    
 +
 \tilde{\Gamma}_{\rm av}(t)H^{\star}\tilde{\theta}_{\rm av}(t-D)\right], 
  \label{eq:Newton_ES_static_maps2_prescribed-pointwise_error_averaged} \\
\dot{\tilde{\Gamma}}_{\rm av}(t)=&\mathcal{K}_{ 2T}(t)\tilde{H}_{\rm av}(t- 2T)(\tilde{\Gamma}_{\rm av}(t)+H^{\star-1})^2, \label{eq:Newton_ES_static_maps3_prescribed-pointwise_Riccati_error_averaged}  \\
\dot{\tilde{H }}_{\rm av}(t)=&-\mathcal{K}_{ 2T}(t)\tilde{H}_{\rm av}(t-  2T).   \label{eq:Newton_ES_static_maps3_prescribed-pointwise_Hessian_error_averaged}
\end{align} 
with initial conditions $(\tilde{\theta}_{\rm av}^0,z_{\rm av}^0,\tilde{\Gamma}_{\rm av}^0,\tilde{H}_{\rm av}^0)^{\top}=(\tilde{\theta}^0,z^0,\tilde{\Gamma}^0,\tilde{H}^0)^{\top}  \in C^{0}([- 2T,0]; \mathbb{R}^{4})$.   Notice that the solutions to \eqref{eq:Newton_ES_static_maps3_prescribed-pointwise_Riccati_error_averaged} and \eqref{eq:Newton_ES_static_maps3_prescribed-pointwise_Hessian_error_averaged} satisfy the following relation: 
\begin{equation}\label{eq:relation_Gamma_H}
\tilde{\Gamma}_{\rm av}(t) + H^{\star-1} = \frac{1}{\tilde{H}_{\rm av}(t) + H^{\star}},
\end{equation}
which implies $\tilde{\Gamma}_{\rm av}(t) = \tfrac{-H^{\star-1} \tilde{H}_{\rm av}(t)}{\tilde{H}_{\rm av}(t)+H^{\star}}$, for all $t\geq 0$.  This holds true since the initial conditions are such that $\tilde{\Gamma}^0_{\rm av}(s) + H^{\star-1} = (\tilde{H}^0_{\rm av}(s) + H^{\star})^{-1}$  for all $s\in [-2T,0]$. 

\subsubsection{Prescribed-time  convergence  of the averaged error dynamics}
Since the dynamics of the oscillators $\zeta$, $\xi$ given in \eqref{eq:oscillator_system_periodic_time-varying_feedbacks} and \eqref{eq:oscillator_system_periodic_time-varying_feedbacks_tau}, respectively,  render forward invariant  the set $\mathbb{S}^{1}$ (see Remark \ref{remk:forward_invariance_oscillators}), we focus mainly on the  convergence  properties of the states $(\tilde{\theta}_{\rm av},z_{\rm av},\tilde{\Gamma}_{\rm av},\tilde{H}_{\rm av})^{\top}$. 
 Before we present the prescribed-time convergence result for \eqref{eq:Newton_ES_static_maps1_prescribed-pointwise_error_averaged}-\eqref{eq:Newton_ES_static_maps3_prescribed-pointwise_Hessian_error_averaged}, we need first a key result on the prescribed-time input-to-state stabilization of single integrator systems by time-periodic delayed feedback.

\begin{lemma}\label{Lemma_PrT_Pointwise_delay}
 Consider the   following  one-dimensional  system 
\begin{equation}\label{eq:single_integrator}
\dot{x}(t)=u(t) + v(t),
\end{equation}
where $u(t) \in \mathbb{R}$ is the control input and    $v \in \mathcal{C}^{0}(\mathbb{R}_+;\mathbb{R})$ is an arbitrary input. Let
 $T>0$ be a prescribed number. The solution of the  closed-loop system \eqref{eq:single_integrator} with  the following \textit{$2 T$-periodic} time-varying \textit{pointwise} delayed feedback
\begin{equation}\label{eq:time-varying_delayed-feedback_key_result}
u(t)= -\mathcal{K}_{T}\left(t\right)x(t-T),
\end{equation}
where $\mathcal{K}_{T}(t)$ is defined  by \eqref{eq:mathcalK_T},   with arbitrary initial condition  $x(s)=\phi(s)$, $\phi \in C^0([-T,0];\mathbb{R})$  satisfies 
 \begin{equation}\label{eq:bound_PrT_single_integrator}
\begin{split}
 \vert x(t) \vert \leq &\exp\left(6 \right)\sigma\left(t-(2T - \delta_{\zeta} T)  \right)\max_{-T \leq s\leq 0}(\vert \phi(s) \vert)  \\
 & \hskip 1cm + 6T \exp\left(6 \right) \sup_{\max\{0,t-(2T-\delta_{\zeta}  T) \}\leq s\leq t} (\vert v(s) \vert), 
\end{split} 
 \end{equation}
\begin{equation}\label{eq:bound_PrT_single_integrator-control}
\begin{split}
 \vert u(t) \vert \leq & \tfrac{2}{T}\exp\left(6 \right)\sigma\left(t-(2T - \delta_{\zeta} T)  \right)\max_{-T \leq s\leq 0}(\vert \phi(s) \vert) \\
 & \hskip 1cm + 12 \exp\left(6 \right) \sup_{\max\{0,t-(2T-\delta_{\zeta}  T) \}\leq s\leq t} (\vert v(s) \vert ),
\end{split}
\end{equation}
where  
\begin{equation}\label{eq:heaviside_function}
    \sigma(s) = \begin{cases}1& \mbox{if} \quad s < 0\\ 0
    & \mbox{if} \quad s \geq 0.   \end{cases}
\end{equation}
 and $\delta_{\zeta} \in (-1,1]$ is given by \eqref{eq:definition_of_delta-phase_of_oscillators}. 
Moreover,   if the  input  $v(t)$ completely vanishes at $t^{'} > 0$,  i.e.,  $v(t) =0$ for  $t \geq t^{'} > 0$,  then, the following estimate holds:
\begin{equation}\label{eq:bound_PrT_single_integrator-control-vanishing_dist}
\begin{split}
 \vert x(t) \vert \leq &\exp\left(6 \right)\sigma\Big(t-\left(2T-\delta_{\zeta} T + 2T\Big\lceil \tfrac{t^{'}+\delta_{\zeta} T}{2T}\Big\rceil\right)  \Big)\\ & \times \Big[ \max_{-T \leq s\leq 0}(\vert \phi(s) \vert) + 6T \sup_{0\leq s\leq \min\{t,t^{'}\}} (\vert v(s)  \vert)\Big], 
\end{split} 
\end{equation}
which yields  $x(t)=0$ for all $t \geq   2 T - \delta_{\zeta} T+ 2T\Big \lceil \frac{t^{'}+\delta_{\zeta}  T}{2T}\Big\rceil$. 
\end{lemma}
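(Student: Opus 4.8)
The plan is to exploit the switched structure that the pulse gain $\mathcal{K}_T$ imposes on the closed loop \eqref{eq:single_integrator}--\eqref{eq:time-varying_delayed-feedback_key_result}. Using $\zeta_1(t)=\sin(\tfrac{\pi}{T}(t+\delta_\zeta T))$, I would partition $[0,\infty)$ into alternating \emph{dwell} intervals (where $\zeta_1>0$, hence $\mathcal{K}_T\equiv0$) and \emph{act} intervals (where $\zeta_1\le0$, hence $\mathcal{K}_T>0$), each of length $T$; the first full act interval ends at $2T-\delta_\zeta T$. Three elementary facts drive the argument: (i) on a dwell interval the closed loop reduces to $\dot x=v$; (ii) on an act interval $\dot x(t)=-\mathcal{K}_T(t)x(t-T)+v(t)$ and, by \eqref{eq:property_K_T_and delayed_K}, the delayed argument $t-T$ falls in the \emph{preceding} dwell interval, so $x(t-T)$ is already determined; and (iii) changing variables in \eqref{eq:mathcalK_T}, each full act interval satisfies $\int\mathcal{K}_T(s)\,ds=\tfrac{2}{\pi}\int_0^{\pi}\sin^2=1$. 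I would also record $0\le\mathcal{K}_T\le\tfrac{2}{T}$ for the control estimate.

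I would first treat $v\equiv0$. On a dwell interval $x$ is constant, say $x\equiv p$; on the following full act interval $x(t-T)\equiv p$, so integrating $\dot x=-\mathcal{K}_T(t)p$ and invoking $\int\mathcal{K}_T=1$ yields a net decrement of exactly $p$, i.e. $x=0$ at the end of that act interval, after which $\dot x\equiv0$ freezes $x$ at zero. This produces the Heaviside term $\sigma(\cdot)$ of \eqref{eq:heaviside_function} in \eqref{eq:bound_PrT_single_integrator}. For general $v$ I would integrate explicitly on each interval: on an act interval $x(t)$ equals its value at the interval's start, minus $\int\mathcal{K}_T(s)x(s-T)\,ds$, plus $\int v$, so by (iii) the delayed state contributes at most $\max|x|$ over the preceding dwell interval, while on a dwell interval $x$ changes only by $\int v$. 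The key point is a \emph{sliding} dead-beat: carrying this explicit integration across one dwell-then-act cycle, the terms proportional to the value of $x$ at the start of the cycle cancel identically (exactly as in the homogeneous case), so the end-of-cycle value is a functional of $v$ on that cycle alone. Summing the per-interval bounds over the at most two (possibly partial) cycles separating a given $t$ from the instant at which its initial data were annihilated yields \eqref{eq:bound_PrT_single_integrator} with window $[\max\{0,t-(2T-\delta_\zeta T)\},t]$, the constants $e^6$ and $6T$ absorbing this accumulation. Inserting the resulting estimate into $|u(t)|=\mathcal{K}_T(t)|x(t-T)|\le\tfrac{2}{T}|x(t-T)|$ then gives the control bound \eqref{eq:bound_PrT_single_integrator-control}.

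For a disturbance vanishing at $t'$, the homogeneous dead-beat applies verbatim to any cycle lying in $[t',\infty)$: the constant value that $x$ acquires on the first dwell interval past $t'$ is cleared at the end of the subsequent act interval. Counting cycles, the first cycle boundary beyond $t'$ is reached after $2T\lceil\tfrac{t'+\delta_\zeta T}{2T}\rceil$ units, and one further full act interval completes the dead-beat at $2T-\delta_\zeta T+2T\lceil\tfrac{t'+\delta_\zeta T}{2T}\rceil$, which is \eqref{eq:bound_PrT_single_integrator-control-vanishing_dist}, with $x\equiv0$ thereafter.

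I expect the main obstacle to be the bookkeeping at the \emph{initial} cycle, governed by the phase offset $\delta_\zeta$. When $\delta_\zeta>0$ the first full act interval's delayed argument $t-T$ dips below $0$ into the prescribed history $\phi$ on $[-\delta_\zeta T,0)$, so the start-of-cycle cancellation no longer annihilates $\phi$ exactly at $2T-\delta_\zeta T$ for arbitrary $\phi$; carefully tracking this boundary term, aligning the $v$-window with the cycle structure, and thereby pinning down the explicit constants is the delicate step---precisely what the Heaviside factor \eqref{eq:heaviside_function} and the ceiling in \eqref{eq:bound_PrT_single_integrator-control-vanishing_dist} are designed to absorb.
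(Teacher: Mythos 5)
Your proposal retraces the paper's own proof (Appendix~\ref{proof_of_lemma1}) almost step for step: the same dwell/act partition of $[0,\infty)$ induced by the sign of $\zeta_1$, the same two structural facts (when $\mathcal{K}_T(t)>0$ the delayed argument $t-T$ lies in the preceding dwell interval, and $\int\mathcal{K}_T(s)\,ds=1$ over any full act interval, which is \eqref{eq:Property_integral_K_T}), the same explicit per-interval integration producing the end-of-cycle cancellation, the bound $0\le\mathcal{K}_T\le\tfrac{2}{T}$ for the control estimate, and the same ceiling count for the vanishing input. The only methodological deviation is in the transient: the paper obtains the constant $\exp(6)$ from a Gr\"onwall--Bellman estimate \eqref{eq:exponential_estimate_withGronwall} (namely $\exp(2t/T)$ over a window of length at most $3T$), whereas you accumulate per-interval bounds directly; that is viable, would if anything give sharper constants, and is a cosmetic difference.

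The genuine problem is the one you flag in your closing paragraph and then leave open, and it cannot be closed by ``careful bookkeeping'': when $\delta_{\zeta}\in(0,1]$, the first act interval in positive time, $[T-\delta_{\zeta}T,\,2T-\delta_{\zeta}T]$, has delayed argument sweeping $[-\delta_{\zeta}T,\,T-\delta_{\zeta}T]$, which includes the raw history on $[-\delta_{\zeta}T,0)$. With $v\equiv 0$, the end-of-cycle cancellation then leaves the residual $\int_{T-\delta_{\zeta}T}^{T}\mathcal{K}_T(s)\,\bigl(\phi(0)-\phi(s-T)\bigr)\,ds$, which is nonzero for generic $\phi$. Concretely, take $T=1$, $\zeta(0)=(0,-1)^{\top}$ (so $\delta_{\zeta}=1$ and $\mathcal{K}_T(t)=2\sin^{2}(\pi t)$ on $[0,1]$), $v\equiv 0$, $\phi(s)=-s$: then $x(1)=-\int_{0}^{1}2\sin^{2}(\pi s)(1-s)\,ds=-\tfrac{1}{2}\neq 0$, while \eqref{eq:bound_PrT_single_integrator} asserts $x(t)=0$ for all $t\ge 2T-\delta_{\zeta}T=1$; the state is in fact cleared only at $4T-\delta_{\zeta}T=3$, one full cycle later. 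So for $\delta_{\zeta}>0$ the Heaviside factor $\sigma\bigl(t-(2T-\delta_{\zeta}T)\bigr)$ is wrong by one cycle for arbitrary $\phi\in C^{0}([-T,0];\mathbb{R})$, and no proof can establish it; your ``delicate step'' is not delicate but impossible as stated. Be aware that the paper's own proof has the same blind spot: it applies the dwell-interval formula $x(t)=x(\bar t_0)+\int_{\bar t_0}^{t}v(\tau)\,d\tau$ starting at $\bar t_0=-\delta_{\zeta}T<0$, i.e., it implicitly treats the history as if generated by the closed-loop dynamics, which is legitimate only for $\delta_{\zeta}\le 0$ or for histories compatible with that flow (e.g., constant ones, as in the paper's simulations). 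The repair is either to restrict the statement to $\delta_{\zeta}\in(-1,0]$, or to replace $2T-\delta_{\zeta}T$ by $4T-\delta_{\zeta}T$ when $\delta_{\zeta}>0$, propagating the corresponding adjustment to \eqref{eq:characteriation_T}, Lemma~\ref{eq:Prt_Target_system} and Theorem~\ref{theo:main_ES_algorithm_prescribed-pointwise}. Your treatment of the vanishing-input case (cycle counting past $t'$, clearing at $2T-\delta_{\zeta}T+2T\bigl\lceil\tfrac{t'+\delta_{\zeta}T}{2T}\bigr\rceil$) is sound, precisely because there every dwell/act cycle involved lies in positive time.
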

\begin{proof}
See Appendix \ref{proof_of_lemma1}.
\end{proof}
It is important to understand the effect of the size of $T$ in the gain \eqref{eq:time-varying_delayed-feedback_key_result}, and also in the settling time $2T - \delta_{\zeta} T$, on the performance of the closed-loop system. The ISS gain from the input $v$ to the state $x$ is proportional to $T$ and hence, as $T$ is decreased, and the gain in \eqref{eq:time-varying_delayed-feedback_key_result} consequently increases, the gain from $v$ to $x$ improves. However, there is an adverse effect of increasing the gain $1/T$ on the control magnitude. While the ISS gain from $v$ to $u$ is independent of $T$, the finite-time transient portion of the bound on $u(t)$ is proportional to $1/T$, meaning that, for improved performance of $x(t)$, including shorter setting time $2T-\delta_{\zeta} T$, price is paid in the magnitude of $u(t)$, at least initially. 

\begin{remark}\label{remark_on_historical_issues}
The one-dimensional control system \eqref{eq:single_integrator} with time-varying pointwise delayed feedback \eqref{eq:time-varying_delayed-feedback_key_result}  is  a special case of the one-dimensional control  system  with time-varying \textit{distributed}  delayed feedback studied in \cite[Lemma 2.9]{Karafyllis2006}. Therefore, applying \cite[Lemma 2.9]{Karafyllis2006} along  with the seminal result in \cite[Chapter 3, pp 87-88]{Hale1993}, the conclusion would follow: we could  obtain a similar bound to \eqref{eq:bound_PrT_single_integrator}, and the ISS prescribed-time stability result.  
Nevertheless, for the sake of completeness and conciseness, we adapt the ideas of the proof of  \cite[Lemma 2.9]{Karafyllis2006} to our specific case of time-varying pointwise delay feedback, with gain $\mathcal{K}_{T}$ defined as in \eqref{eq:mathcalK_T},  and including the effect of a vanishing input. The proof is given in Appendix \ref{proof_of_lemma1}.
\end{remark}

\begin{lemma}\label{eq:Prt_Target_system}
 Let $D\geq 0$ be given. Let us choose   the  prescribed time $T^{\star}$ according to condition \eqref{eq:restriction_choice_T_star}.     Consider the closed-loop system \eqref{eq:Newton_ES_static_maps1_prescribed-pointwise_error_averaged}-\eqref{eq:Newton_ES_static_maps3_prescribed-pointwise_Hessian_error_averaged},   with  gains $\mathcal{K}_T(t)$,  $\mathcal{L}_{T}(t)$ and $\mathcal{K}_{ 2T}(t)$  defined   in \eqref{eq:mathcalK_T}, \eqref{eq:mathcalL_T} and \eqref{eq:mathcalK_tau}, respectively,   which are generated by the oscillators \eqref{eq:oscillator_system_periodic_time-varying_feedbacks} and \eqref{eq:oscillator_system_periodic_time-varying_feedbacks_tau}, with arbitrary initial conditions in $\mathbb{S}^{1}$ yielding   $\delta_{\zeta}$ and $\delta_{\xi}$  as in  \eqref{eq:definition_of_delta-phase_of_oscillators} and \eqref{eq:definition_of_delta-xi-phase_of_oscillators}, respectively. Let $T$ be selected according to \eqref{eq:characteriation_T}. The  solutions of the closed-loop system \eqref{eq:Newton_ES_static_maps1_prescribed-pointwise_error_averaged}-\eqref{eq:Newton_ES_static_maps3_prescribed-pointwise_Hessian_error_averaged}  are such that $\tilde{\Gamma}_{\rm av}(t)=0$ and $\tilde{H}_{\rm av}(t)=0$ for all $t \geq  4T - \delta_{\xi} 2T$; and $\tilde{\theta}_{\rm av}(t)=0$ and $z_{\rm av}(t)=0$   for all $t\geq T^{\star}+2D$. Moreover, there exists a class-$\mathcal{K}_{\infty}$ function $\gamma$ such that the  solutions of the closed-loop system \eqref{eq:Newton_ES_static_maps1_prescribed-pointwise_error_averaged}-\eqref{eq:Newton_ES_static_maps3_prescribed-pointwise_Hessian_error_averaged} with initial function $(\tilde{\theta}_{\rm av}^0,z_{\rm av}^0,\tilde{\Gamma}_{\rm av}^0,\tilde{H}_{\rm av}^0)^{\top}=\tilde{\phi} \in C^{0}([- 2T,0]; \mathbb{R}^{4})$ with $\tilde{\Gamma}^0_{\rm av} + H^{\star-1} = (\tilde{H}^0_{\rm av} + H^{\star})^{-1}$ and $\tilde{H}^0_{\rm av} + H^{\star} \neq 0$ for all $s \in [-2T,0]$, satisfy  
\begin{equation}\label{eq:bound_Kinfty_lemma}
\vert (\tilde{\theta}_{\rm av},z_{\rm av},\tilde{\Gamma}_{\rm av},\tilde{H}_{\rm av})^{\top} \vert \leq \gamma\left(\Vert \tilde{\phi} \Vert\right), \quad \forall t \geq 0.
\end{equation} 
\end{lemma}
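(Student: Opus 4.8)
The plan is to exploit the triangular structure of the averaged error dynamics \eqref{eq:Newton_ES_static_maps1_prescribed-pointwise_error_averaged}--\eqref{eq:Newton_ES_static_maps3_prescribed-pointwise_Hessian_error_averaged}: the pair $(\tilde{\Gamma}_{\rm av},\tilde{H}_{\rm av})$ in \eqref{eq:Newton_ES_static_maps3_prescribed-pointwise_Riccati_error_averaged}--\eqref{eq:Newton_ES_static_maps3_prescribed-pointwise_Hessian_error_averaged} evolves autonomously, and couples into $(\tilde{\theta}_{\rm av},z_{\rm av})$ only through the multiplicative term $\tilde{\Gamma}_{\rm av}(t)H^{\star}\tilde{\theta}_{\rm av}(t-D)$ in \eqref{eq:Newton_ES_static_maps2_prescribed-pointwise_error_averaged}, which I will treat as a vanishing perturbation. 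Each scalar channel will be brought into the single-integrator form of Lemma \ref{Lemma_PrT_Pointwise_delay}, and the dead-beat times will be chained so that they add up exactly to the value prescribed by \eqref{eq:characteriation_T}.

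I would first dispatch the Hessian channel. Equation \eqref{eq:Newton_ES_static_maps3_prescribed-pointwise_Hessian_error_averaged} is exactly $\dot{x}(t)=-\mathcal{K}_{2T}(t)x(t-2T)$, i.e.\ the system of Lemma \ref{Lemma_PrT_Pointwise_delay} with $T$ replaced by $2T$, gain $\mathcal{K}_{2T}$ from \eqref{eq:mathcalK_tau}, phase $\delta_{\xi}$, and $v\equiv 0$. Lemma \ref{Lemma_PrT_Pointwise_delay} then yields $\tilde{H}_{\rm av}(t)=0$ for $t\geq 4T-2\delta_{\xi}T$, together with the linear bound $\vert\tilde{H}_{\rm av}(t)\vert\leq \exp(6)\Vert\tilde{\phi}\Vert$ from \eqref{eq:bound_PrT_single_integrator}. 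The algebraic invariant \eqref{eq:relation_Gamma_H} gives $\tilde{\Gamma}_{\rm av}(t)=-H^{\star-1}\tilde{H}_{\rm av}(t)/(\tilde{H}_{\rm av}(t)+H^{\star})$, so $\tilde{\Gamma}_{\rm av}(t)=0$ on the same interval, establishing the first claim with $t_1:=4T-2\delta_{\xi}T$.

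The crux is the $(\tilde{\theta}_{\rm av},z_{\rm av})$ channel, which I would linearize into a chain of single integrators by the change of variables $\eta(t):=z_{\rm av}(t)+\mathcal{K}_{T}(t)\tilde{\theta}_{\rm av}(t-T)$. Differentiating and using $\mathcal{L}_T=\dot{\mathcal{K}}_T$ (see \eqref{eq:mathcalL_T}) cancels the $\mathcal{L}_T(t)\tilde{\theta}_{\rm av}(t-T)$ term and collapses $-2\mathcal{K}_T z_{\rm av}(t-T)$ into $-\mathcal{K}_T z_{\rm av}(t-T)$; rewriting $z_{\rm av}(t-T)=\eta(t-T)-\mathcal{K}_T(t-T)\tilde{\theta}_{\rm av}(t-2T)$ and invoking the orthogonality \eqref{eq:property_K_T_and delayed_K} produces $\dot{\eta}(t)=-\mathcal{K}_T(t)\eta(t-T)+p(t)$, where $p(t)=-\mathcal{L}_T(t)\tilde{\Gamma}_{\rm av}(t)H^{\star}\tilde{\theta}_{\rm av}(t-D)$ vanishes for $t\geq t_1$. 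This is once more the system of Lemma \ref{Lemma_PrT_Pointwise_delay}, now with vanishing input $v=p$ and terminal time $t'=t_1$; the estimate \eqref{eq:bound_PrT_single_integrator-control-vanishing_dist} gives $\eta(t)=0$ for $t\geq t_2:=2T-\delta_{\zeta}T+2T\lceil(t_1+\delta_{\zeta}T)/(2T)\rceil$, and the ceiling evaluates exactly to $A:=\lceil(4-(2\delta_{\xi}-\delta_{\zeta}))/2\rceil$, the integer appearing in \eqref{eq:characteriation_T}. Once $\eta\equiv 0$, the first equation reduces to $\dot{\tilde{\theta}}_{\rm av}(t)=-\mathcal{K}_T(t)\tilde{\theta}_{\rm av}(t-T)$ for $t\geq t_2$; since the oscillator phase makes $t_2$ the onset of a zero-gain half-period, the direct integration of Section \ref{section_key_idea} ($\mathcal{K}_T$ inactive on $[t_2,t_2+T]$, then $\int_{t_2+T}^{t_2+2T}\mathcal{K}_T=1$) gives $\tilde{\theta}_{\rm av}(t_2+2T)=0$ and $\tilde{\theta}_{\rm av}\equiv 0$ thereafter, whence $z_{\rm av}=\dot{\tilde{\theta}}_{\rm av}\equiv 0$ as well. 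Substituting \eqref{eq:characteriation_T} gives $t_2+2T=T(4-\delta_{\zeta}+2A)=T^{\star}+2D$, the second claim.

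Finally, I would assemble the bound \eqref{eq:bound_Kinfty_lemma} by composing the ISS-type estimates \eqref{eq:bound_PrT_single_integrator} from each stage: $\tilde{H}_{\rm av}$, and through \eqref{eq:relation_Gamma_H} also $\tilde{\Gamma}_{\rm av}$, are bounded by $\Vert\tilde{\phi}\Vert$; these bound $p$ and hence $\eta$, which in turn bounds $\tilde{\theta}_{\rm av}$ and $z_{\rm av}$; since all states are identically zero beyond $T^{\star}+2D$, it suffices to bound them on the compact transient $[0,T^{\star}+2D]$. The one genuinely delicate point --- and the step I expect to be the main obstacle --- is the legitimacy of a class-$\mathcal{K}_{\infty}$ (hence globally finite) bound for $\tilde{\Gamma}_{\rm av}$: the representation $\tilde{\Gamma}_{\rm av}=-H^{\star-1}\tilde{H}_{\rm av}/(\tilde{H}_{\rm av}+H^{\star})$ blows up if $\hat{H}_{\rm av}=\tilde{H}_{\rm av}+H^{\star}$ ever crosses zero. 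I would control this through the dead-beat structure of \eqref{eq:Newton_ES_static_maps3_prescribed-pointwise_Hessian_error_averaged}: on each active window $\hat{H}_{\rm av}(t)$ is a convex combination (weighted by $\int\mathcal{K}_{2T}\in[0,1]$) of its frozen predecessor value and $H^{\star}$, so it stays bounded away from zero whenever the relevant values share the sign of $H^{\star}$; since small $\Vert\tilde{\phi}\Vert$ forces $\hat{H}^0_{\rm av}$ near $H^{\star}$, this holds on the neighborhood relevant to the estimate, and the nonlinearity of $\gamma$ is precisely what absorbs this fractional dependence on $\tilde{H}_{\rm av}$.
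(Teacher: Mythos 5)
Your route is, in substance, the paper's own proof: your $\eta(t)=z_{\rm av}(t)+\mathcal{K}_T(t)\tilde{\theta}_{\rm av}(t-T)$ is exactly the paper's backstepping variable $x_2$ in \eqref{change_of_variable_targetsystem2}, your transformed equation is the target system \eqref{eq:Newton_ES_static_maps1_prescribed-pointwise_error_averaged_target}--\eqref{eq:Newton_ES_static_maps3_prescribed-pointwise_Hessian_error_averaged_target}, and your bookkeeping of the dead-beat instants (the ceiling evaluating to $A$, and $t_2+2T=T(4-\delta_\zeta+2A)=T^\star+2D$ via \eqref{eq:characteriation_T}) reproduces the paper's $\tau_1,\tau_2$ computation correctly. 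The Hessian channel and the use of \eqref{eq:relation_Gamma_H} are also identical.

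There is, however, one genuine gap. When you apply Lemma \ref{Lemma_PrT_Pointwise_delay} (estimate \eqref{eq:bound_PrT_single_integrator-control-vanishing_dist}) to $\dot{\eta}(t)=-\mathcal{K}_T(t)\eta(t-T)+p(t)$ with $v=p$, you are treating $p(t)=-\mathcal{L}_T(t)\tilde{\Gamma}_{\rm av}(t)H^{\star}\tilde{\theta}_{\rm av}(t-D)$ as an \emph{external} vanishing input. But $p$ contains $\tilde{\theta}_{\rm av}$, which is itself driven by $\eta$: the interconnection $(\eta,\tilde{\theta}_{\rm av})$ is a feedback loop, not a cascade, so the supremum of $|p|$ over the transient $[0,4T-2\delta_\xi T]$ is not known to be finite a priori. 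Your closing chain of estimates --- ``these bound $p$ and hence $\eta$, which in turn bounds $\tilde{\theta}_{\rm av}$'' --- is circular as written, since bounding $p$ already requires a bound on $\tilde{\theta}_{\rm av}$. This is precisely the point the paper does not skip: it states Claim \ref{claim1}, namely \eqref{first_bound_Claim1}, and proves it in Appendix \ref{proof_claim} by substituting the $x_2$-representation into the $x_1$-equation, applying Fubini, and closing the loop with a Gr\"onwall--Bellman inequality on $\sup_{-D\le s\le 0}|x_1(t+s)|$. Only after that can Lemma \ref{Lemma_PrT_Pointwise_delay} be invoked with $v$ a legitimate bounded continuous input, both for the dead-beat conclusion and for assembling \eqref{eq:bound_Kinfty_lemma}. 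Note that for $D>0$ one could alternatively break the circularity by the method of steps (each bound on $p$ reaches back $D$ units in time), but the lemma explicitly allows $D=0$, where $p$ depends on $\tilde{\theta}_{\rm av}(t)$ instantaneously and a Gr\"onwall-type argument of the paper's kind is unavoidable.

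A secondary remark: you are right to flag that the class-$\mathcal{K}_\infty$ bound on $\tilde{\Gamma}_{\rm av}$ hinges on $\hat{H}_{\rm av}(t)=\tilde{H}_{\rm av}(t)+H^{\star}$ staying away from zero --- a point the paper's proof passes over, asserting boundedness of $\tilde{\Gamma}_{\rm av}$ ``via \eqref{eq:relation_Gamma_H}'' without further justification. But your convex-combination argument does not close it either: on the \emph{first} active window of $\mathcal{K}_{2T}$ the delayed argument $\tilde{H}_{\rm av}(t-2T)$ runs over the arbitrary continuous initial history, not over a frozen constant, so $\hat{H}_{\rm av}$ need not be a convex combination of a predecessor value and $H^{\star}$; and retreating to ``small $\Vert\tilde{\phi}\Vert$'' is at odds with \eqref{eq:bound_Kinfty_lemma}, which is claimed for all admissible initial data. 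So on this point your attempt is more self-aware than the paper but still incomplete.
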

\begin{IEEEproof}
Consider the following  change of variables, inspired by the backstepping procedure, and specialized here for the double integrator system\footnote{
We recall that for higher-order systems with time-varying delayed feedbacks, the backstepping procedure can also be applied, see e.g.,   \cite[Chapter 6.7]{Iasson_book2011}  and  \cite{Ding_stronPT-nonlinear2024}.
}  \eqref{eq:Newton_ES_static_maps1_prescribed-pointwise_error_averaged}-\eqref{eq:Newton_ES_static_maps2_prescribed-pointwise_error_averaged}: 
\begin{align}
x_1(t)=&\tilde{\theta}_{\rm av}(t),\label{change_of_variable_targetsystem1} \\
x_2(t)=&\mathcal{K}_{T}(t)\tilde{\theta}_{\rm av}(t-T)+ z_{\rm av}(t) \label{change_of_variable_targetsystem2}.
\end{align}
 Using this change of variable together with   \eqref{eq:mathcalL_T} and  the fact $\mathcal{K}_{T}(t)\mathcal{K}_{T}(t-T) = 0$ (see \ref{eq:property_K_T_and delayed_K}), we obtain that   
 system \eqref{eq:Newton_ES_static_maps1_prescribed-pointwise_error_averaged}-\eqref{eq:Newton_ES_static_maps3_prescribed-pointwise_Hessian_error_averaged} is transformed into the following \textit{target} system:
\begin{align}
\dot{x}_{1}(t)=& - \mathcal{K}_{T}(t)x_1(t-T) + x_2(t),  \label{eq:Newton_ES_static_maps1_prescribed-pointwise_error_averaged_target} \\
\dot{x}_{2}(t) =&  - \mathcal{K}_{T}(t)x_2(t-T)   -\mathcal{L}_{T}(t)\tilde{\Gamma}_{\rm av}(t)H^{\star}x_{1}(t-D),  \label{eq:Newton_ES_static_maps2_prescribed-pointwise_error_averaged_target}  \\
\dot{\tilde{\Gamma}}_{\rm av}(t)=&\mathcal{K}_{ 2T}(t)\tilde{H}_{\rm av}(t- 2T)(\tilde{\Gamma}_{\rm av}(t)+H^{\star-1})^2, \label{eq:Newton_ES_static_maps3_prescribed-pointwise_Riccati_error_averaged_target}  \\
\dot{\tilde{H }}_{\rm av}(t)=&-\mathcal{K}_{ 2T}(t)\tilde{H}_{\rm av}(t-  2T),   \label{eq:Newton_ES_static_maps3_prescribed-pointwise_Hessian_error_averaged_target}
\end{align}
with initial conditions  $(x_1^0,x_2^0,\tilde{\Gamma}_{\rm av}^0,\tilde{H}_{\rm av}^0)^{\top}  \in C^{0}([- 2T,0]; \mathbb{R}^{4})  $.
 By Lemma \ref{Lemma_PrT_Pointwise_delay}, we obtain the following estimate for the solution to \eqref{eq:Newton_ES_static_maps3_prescribed-pointwise_Hessian_error_averaged_target}: 
\begin{equation}\label{eq:PrT_Hessian_estimator_average_in_proof}
\vert \tilde{H}_{\rm av}(t)  \vert  \leq \exp\left(6 \right)\sigma\left(t- (4T-\delta_{\xi} 2T) \right)\max_{- 2T \leq s\leq 0}(\vert \tilde{H}_{\rm av}^0(s) \vert),
\end{equation}
for $t\geq 0$ where  $\tilde{H}_{\rm av}(0)=\tilde{H}_{\rm av}^0(s)$, $s \in [- 2T,0]$ and
 $\sigma(\cdot)$ is defined in \eqref{eq:heaviside_function}. Therefore, $\tilde{H}_{\rm av}(t) = 0 $ for $t \geq 4T - \delta_{\xi} 2T$.   Consequently, using \eqref{eq:PrT_Hessian_estimator_average_in_proof}  and \eqref{eq:relation_Gamma_H}, we also obtain $\tilde{\Gamma}_{\rm av}(t) = 0 $ for $t \geq  4T  -\delta_{\xi} 2T$. Next, notice that the time-delay system  \eqref{eq:Newton_ES_static_maps2_prescribed-pointwise_error_averaged_target} is cascaded by $\tilde{\Gamma}_{\rm av}(t)$, which is the solution of \eqref{eq:Newton_ES_static_maps3_prescribed-pointwise_Riccati_error_averaged_target}, and that for all $t \geq  4T-\delta_{\xi} 2T$, $\tilde{\Gamma}_{\rm av}(t)=0$.  Meanwhile, the term  $x_{1}(t-D)$ in \eqref{eq:Newton_ES_static_maps2_prescribed-pointwise_error_averaged_target} remains bounded for all $t \in[0, 4T-\delta_{\xi} 2T]$, with $T> D$ and  $\delta_{\xi} \in (-1,1]$. This boundedness is stated in the following claim and proved in  Appendix \ref{proof_claim}.
\begin{claim}\label{claim1}
For the solution of \eqref{eq:Newton_ES_static_maps1_prescribed-pointwise_error_averaged_target}, it holds that
\begin{equation}\label{first_bound_Claim1}
\sup_{-D \leq s \leq 0}(\vert x_1(t+s) \vert) < +\infty, \quad \forall t\in [0, 4T-\delta_{\xi} 2T].
\end{equation} 
\end{claim}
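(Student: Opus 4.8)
The plan is to reduce the boundedness of $x_1$ to the boundedness of the cascade variable $\tilde{\Gamma}_{\rm av}$ on the compact interval $[0,4T-\delta_{\xi}2T]$. In the target subsystem \eqref{eq:Newton_ES_static_maps1_prescribed-pointwise_error_averaged_target}--\eqref{eq:Newton_ES_static_maps2_prescribed-pointwise_error_averaged_target}, the pair $(x_1,x_2)$ obeys a \emph{linear} time-varying RFDE whose only coefficients are the bounded periodic gains $\mathcal{K}_T$, $\mathcal{L}_T$ together with the single potentially unbounded factor $\tilde{\Gamma}_{\rm av}(t)$ multiplying $x_1(t-D)$ in the $x_2$-equation. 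Since a linear delay system with locally bounded coefficients admits no finite escape, once $\tilde{\Gamma}_{\rm av}$ is shown to be bounded on $[0,4T-\delta_{\xi}2T]$ the solution $(x_1,x_2)$ exists and is continuous on the compact interval $[-2T,\,4T-\delta_{\xi}2T]$, so $\sup_{-D\le s\le 0}\vert x_1(t+s)\vert$ is automatically finite, which is exactly \eqref{first_bound_Claim1}. Hence the whole task is to control the inverse-Hessian channel.

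First I would bound the Hessian error. By Lemma \ref{Lemma_PrT_Pointwise_delay} applied to \eqref{eq:Newton_ES_static_maps3_prescribed-pointwise_Hessian_error_averaged_target}, the estimate \eqref{eq:PrT_Hessian_estimator_average_in_proof} already gives $\vert \tilde{H}_{\rm av}(t)\vert \le \exp(6)\max_{-2T\le s\le 0}\vert \tilde{H}_{\rm av}^0(s)\vert$ on the whole interval, with $\tilde{H}_{\rm av}(t)=0$ for $t\ge 4T-\delta_{\xi}2T$. Next I would invoke the algebraic relation \eqref{eq:relation_Gamma_H}, which yields $\tilde{\Gamma}_{\rm av}(t)=-H^{\star-1}\tilde{H}_{\rm av}(t)/(\tilde{H}_{\rm av}(t)+H^{\star})$, so that $\tilde{\Gamma}_{\rm av}$ is bounded precisely when $\hat{H}_{\rm av}(t):=\tilde{H}_{\rm av}(t)+H^{\star}$ stays bounded away from zero. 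To certify this I would exploit the pulse structure of $\mathcal{K}_{2T}$ in \eqref{eq:mathcalK_tau}: across the (at most two) active windows of $\mathcal{K}_{2T}$ contained in $[0,4T-\delta_{\xi}2T]$ the delayed argument $t-2T$ already references resolved values, and on the initial inactive stretch $\mathcal{K}_{2T}\equiv0$ freezes $\hat{H}_{\rm av}$ at the nonzero value $\hat{H}_{\rm av}^0(0)$. Integrating $\dot{\hat{H}}_{\rm av}(t)=-\mathcal{K}_{2T}(t)\hat{H}_{\rm av}(t-2T)$ by the method of steps then shows that $\hat{H}_{\rm av}$ transits monotonically from $\hat{H}_{\rm av}^0$ toward its terminal value $H^{\star}$; since both ends are nonzero, the non-degeneracy hypothesis $\tilde{H}_{\rm av}^0+H^{\star}\ne 0$ keeps $\hat{H}_{\rm av}$ away from zero, whence $\tilde{\Gamma}_{\rm av}$ is bounded on $[0,4T-\delta_{\xi}2T]$.

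I would then make the reduction quantitative by a method-of-steps continuation on $(x_1,x_2)$. With $\Gamma_{\max}:=\sup_{[0,4T-\delta_{\xi}2T]}\vert \tilde{\Gamma}_{\rm av}\vert<+\infty$ and the uniformly bounded gains $\mathcal{K}_T,\mathcal{L}_T$, integrating \eqref{eq:Newton_ES_static_maps1_prescribed-pointwise_error_averaged_target}--\eqref{eq:Newton_ES_static_maps2_prescribed-pointwise_error_averaged_target} over successive subintervals of length $D$ (the smallest delay, $D<T<2T$) makes every delayed term refer to already-bounded past values, so $\max(\vert x_1\vert,\vert x_2\vert)$ over the $k$-th step is controlled by a fixed multiple of its sup over the previous steps plus the continuous history norm. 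As only finitely many steps, $\lceil(4T-\delta_{\xi}2T)/D\rceil$, cover the interval, a finite bound results, giving \eqref{first_bound_Claim1}; a Halanay/Gronwall inequality for RFDEs yields the same conclusion more compactly.

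The main obstacle is the middle step: ruling out a finite escape of $\tilde{\Gamma}_{\rm av}$. Because \eqref{eq:Newton_ES_static_maps3_prescribed-pointwise_Riccati_error_averaged_target} is a Riccati-type equation with the quadratic factor $(\tilde{\Gamma}_{\rm av}+H^{\star-1})^2$, it blows up in finite time exactly when $\hat{H}_{\rm av}$ reaches zero; the crux is therefore to guarantee, from the non-degeneracy of the initial Hessian estimate and the monotone frozen-history dynamics of $\hat{H}_{\rm av}$ over the active windows, that $\hat{H}_{\rm av}$ never crosses zero on $[0,4T-\delta_{\xi}2T]$ (the natural Newton initialization, in which $\hat{H}^0$ shares the sign of $H^{\star}$, is what secures this). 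Everything else---the a priori bound on $\tilde{H}_{\rm av}$ from Lemma \ref{Lemma_PrT_Pointwise_delay} and the linear continuation of $(x_1,x_2)$---is routine.
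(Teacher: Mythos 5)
Your proposal follows the same skeleton as the paper's own argument: bound $\tilde{H}_{\rm av}$ on the whole interval via Lemma \ref{Lemma_PrT_Pointwise_delay} (the paper's estimate \eqref{eq:PrT_Hessian_estimator_average_in_proof}), convert this into a bound on $\tilde{\Gamma}_{\rm av}$ through the algebraic relation \eqref{eq:relation_Gamma_H}, and then control $(x_1,x_2)$ by a linear-growth/Gr\"onwall argument. The only methodological difference is in the last step: the paper integrates explicitly over the windows where $\mathcal{K}_T,\mathcal{L}_T$ vanish and where they are active (its Steps 1--2), uses Fubini to get an integral inequality for $\sup_{-D\le s\le 0}\vert x_1(t+s)\vert$, and applies Gr\"onwall--Bellman, which also produces the explicit constant $\bar{C}(T)$ in \eqref{eq:final_estimate_supremum_x1_Gronwall_inequ} reused later in the proof of Lemma \ref{eq:Prt_Target_system}; your coarser ``linear RFDE with locally bounded coefficients has no finite escape on a compact interval'' argument is an acceptable substitute for the pure finiteness statement \eqref{first_bound_Claim1}.

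The genuine problem is in your middle step, which you yourself single out as the crux. First, the equation you integrate is wrong: since $\hat{H}_{\rm av}=\tilde{H}_{\rm av}+H^{\star}$ and \eqref{eq:Newton_ES_static_maps3_prescribed-pointwise_Hessian_error_averaged_target} reads $\dot{\tilde{H}}_{\rm av}(t)=-\mathcal{K}_{2T}(t)\tilde{H}_{\rm av}(t-2T)$, the correct dynamics are $\dot{\hat{H}}_{\rm av}(t)=-\mathcal{K}_{2T}(t)\bigl[\hat{H}_{\rm av}(t-2T)-H^{\star}\bigr]$, not $-\mathcal{K}_{2T}(t)\hat{H}_{\rm av}(t-2T)$. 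Second, and decisively, the inference ``$\hat{H}_{\rm av}$ transits monotonically from $\hat{H}^{0}_{\rm av}$ toward $H^{\star}$; since both ends are nonzero, $\hat{H}_{\rm av}$ stays away from zero'' is a non sequitur: if $\hat{H}^{0}_{\rm av}(0)>0>H^{\star}$, a monotone transit between these two nonzero endpoints \emph{necessarily} crosses zero, at which instant $\tilde{\Gamma}_{\rm av}$ escapes to infinity by \eqref{eq:relation_Gamma_H}. The hypothesis actually available in Lemma \ref{eq:Prt_Target_system} is only $\tilde{H}^{0}_{\rm av}+H^{\star}\neq 0$ on $[-2T,0]$, so the sign condition you invoke in your closing parenthesis is an extra assumption your proof needs but the statement does not grant; moreover, even sign-definiteness of the whole history is not quite enough, since during the first active window of $\mathcal{K}_{2T}$ the increment of $\hat{H}_{\rm av}$ is driven by the history values of $\tilde{H}^{0}_{\rm av}=\hat{H}^{0}_{\rm av}-H^{\star}$, which can push $\hat{H}_{\rm av}$ across zero unless the initial data are close to $H^{\star}$ (as enforced by the $\rho$-ball in Theorem \ref{theo:main_ES_algorithm_prescribed-pointwise}). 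To be fair, the paper itself elides this point---it asserts boundedness of $\tilde{\Gamma}_{\rm av}$ directly from \eqref{eq:PrT_Hessian_estimator_average_in_proof} and \eqref{eq:relation_Gamma_H}---so you have identified a real subtlety; but your proposed resolution is not a proof, since the monotonicity you appeal to establishes the opposite conclusion precisely in the problematic opposite-sign case.
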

The previous result ensures $x_2 \in \mathcal{C}^{0}(\mathbb{R}_+;\mathbb{R}) $, hence $x_2$ becomes the new  finite-time vanishing input for system \eqref{eq:Newton_ES_static_maps1_prescribed-pointwise_error_averaged_target}. Therefore,  applying  
 repeatedly Lemma \ref{Lemma_PrT_Pointwise_delay} (in particular estimate \eqref{eq:bound_PrT_single_integrator-control-vanishing_dist}), with 
\begin{equation}\label{eq:vanishing_perturbation_v_in_proof}
 v(t)=-\mathcal{L}_{T}(t) \tilde{\Gamma}_{\rm av}(t)H^{\star}x_{1}(t-D),
 \end{equation}
and by  knowing that $v(t)=0$ for $t\geq  4T-\delta_{\xi} 2T$,   we have the following estimates for the solutions to \eqref{eq:Newton_ES_static_maps1_prescribed-pointwise_error_averaged_target} and \eqref{eq:Newton_ES_static_maps2_prescribed-pointwise_error_averaged_target}:
\begin{equation}\label{eq:estimate_x1_afterLemma1}
\begin{split}
\vert x_{1}(t)  \vert  \leq & \sigma \left(t-\tau_1     \right)\Big(\exp\left(6 \right) \max_{- 2T \leq s\leq 0}(\vert  x_{1}^0(s) \vert) \\
& \quad + 6T\exp\left(12 \right) \max_{- 2T \leq s\leq 0}(\vert  x_{2}^0(s) \vert) \\
&  \quad  + 36T^2 \exp\left(12 \right)  \sup_{0\leq s\leq \min\{t,4T-\delta_{\xi} 2T\}} (\vert v(s)  \vert) \Big),
\end{split}
\end{equation}
 and
 \begin{equation}\label{eq:estimate_x2_afterLemma1}
\begin{split}
\vert x_{2}(t)  \vert   \leq &  \sigma \left(t-\tau_2   \right)\Big(\exp\left(6 \right) \max_{- 2T \leq s\leq 0}(\vert  x_{2}^0(s) \vert) \\
& \quad \quad + 6T\exp\left(6 \right)   \sup_{0\leq s\leq \min\{t,4T-\delta_{\xi} 2T\}} (\vert v(s)  \vert) \Big),
\end{split}
\end{equation}
for all $t\geq 0$, where $\tau_1$ and $\tau_2$ are defined as follows:
\begin{equation}\label{eq:tau1_tau2inproof}
\tau_1  = 4T - \delta_\zeta T + 2T \left\lceil \tfrac{4 - (2\delta_\xi - \delta_\zeta)}{2} \right\rceil, \quad
\tau_2  = \tau_1 - 2T.  
\end{equation} 
From \eqref{change_of_variable_targetsystem2}, notice that  $x_2(t)=0$, thus $z_{\rm av}(t)=-\mathcal{K}_{T}(t)\tilde{\theta}_{\rm av}(t-T)$, for $t \geq \tau_2 $. In addition, it holds that $\mathcal{K}_T(t) = 0$ for $t \in [\tau_1, \tau_1  + T]$ and $\tilde{\theta}_{\rm av}(t - T) = 0$ for $t \geq \tau_1  + T$, thus $z_{\rm av}(t) = 0$ for all $t \geq \tau_1 $. Moreover, using  \eqref{change_of_variable_targetsystem1}-\eqref{change_of_variable_targetsystem2},  
 the triangle inequality and the definition of the norm, we  have on the one hand, 
\begin{equation}\label{eq:norm_equivalence_iniitalconditions1}
\Vert (x_1^0,x_2^0)^{\top} \Vert \leq q\left( \Vert (\tilde{\theta}_{\rm av}^0,z_{\rm av}^0)^{\top}   \Vert \right),
\end{equation}
with $q(s)=(2/T +2)s \in\mathcal{K}_{\infty}$. On the other hand,  using again  \eqref{change_of_variable_targetsystem1}-\eqref{change_of_variable_targetsystem2} together with the facts 
\[
\mathcal{K}_T(t) \sigma(t - (\tau_1  + T)) = \mathcal{K}_T(t) \sigma(t - \tau_1),
\] and 
$
\sigma(t - \tau_2 ) \leq \sigma(t - \tau_1) $  for all $t \geq 0 $, we   get,  from \eqref{eq:estimate_x1_afterLemma1}-\eqref{eq:estimate_x2_afterLemma1} and  \eqref{eq:norm_equivalence_iniitalconditions1}, the following estimates:
 \begin{equation}\label{eq:estimate_tilde_theta_av_afterLemma1}
\begin{split}
\vert \tilde{\theta}_{\rm av}(t)  \vert  \leq & \sigma\left(t- \tau_1  \right) \Big((1+6T)\exp\left(12 \right)q\left( \Vert (\tilde{\theta}_{\rm av}^0,z_{\rm av}^0)^{\top}   \Vert \right)  \\
&  \quad \quad   + 36T^2 \exp\left(12 \right)  \sup_{0\leq s\leq \min\{t,4T-\delta_{\xi} 2T\}} (\vert v(s)  \vert) \Big),
\end{split}
\end{equation}
and 
\begin{equation}\label{eq:estimate_z_av_afterLemma1}
\begin{split}
\vert z_{\rm av}(t)  \vert  \leq  &\sigma\left(t- \tau_1 \right) \Big((2/T+13)\exp\left(12 \right)q\left( \Vert (\tilde{\theta}_{\rm av}^0,z_{\rm av}^0)^{\top}   \Vert \right)  \\
&  \quad  + 78T \exp\left(12 \right)  \sup_{0\leq s\leq \min\{t,4T-\delta_{\xi} 2T\}} (\vert v(s)  \vert) \Big),
\end{split}
\end{equation}
for all $t\geq 0$; where $v(t)$ is as in  \eqref{eq:vanishing_perturbation_v_in_proof} with $x_1(t)=\tilde{\theta}_{\rm av}(t)$. Therefore, using \eqref{eq:tau1_tau2inproof} and \eqref{eq:characteriation_T}, we obtain  from \eqref{eq:estimate_tilde_theta_av_afterLemma1} and \eqref{eq:estimate_z_av_afterLemma1} that $\tilde{\theta}_{\rm av}(t) = z_{\rm av}(t) = 0$ for all $t \geq T^\star + 2D$. Finally, combining  \eqref{eq:PrT_Hessian_estimator_average_in_proof}, \eqref{eq:estimate_tilde_theta_av_afterLemma1} and \eqref{eq:estimate_z_av_afterLemma1},  using the boundedness of  \eqref{eq:vanishing_perturbation_v_in_proof}  via \eqref{eq:relation_Gamma_H} and \eqref{first_bound_Claim1}, we have that the estimate \eqref{eq:bound_Kinfty_lemma} follows  with  $\gamma(s)=C_1 q(s) +C_2 s \in \mathcal{K}_{\infty}$ where  $C_1, C_2>0$ are constants depending on $T$, $\vert H^{*} \vert $, and  some positive bound $M$ (from Claim \ref{claim1}). This concludes the proof.  
 \end{IEEEproof}

\section{Main result, discussion and extension to multivariable static maps} \label{proof_of_main_result}

\subsection{Main result}
\begin{theorem}\label{theo:main_ES_algorithm_prescribed-pointwise}
 Let $D\geq 0$ be given.  Let $T^{\star}$ be the prescribed time that meets \eqref{eq:restriction_choice_T_star}.
Consider the KHV PT-ES algorithm \eqref{eq:Newton_ES_static_maps1_prescribed-pointwise}-\eqref{eq:Newton_ES_static_maps3_prescribed-pointwise_Hessian}  with output map \eqref{eq:outout_function_static_maps}  and  time-periodic gains $\mathcal{K}_T(t)$,  $\mathcal{L}_{T}(t)$ and $\mathcal{K}_{ 2T}(t)$  defined   in \eqref{eq:mathcalK_T}, \eqref{eq:mathcalL_T} and \eqref{eq:mathcalK_tau}, respectively,  which are generated by the oscillators \eqref{eq:oscillator_system_periodic_time-varying_feedbacks} and \eqref{eq:oscillator_system_periodic_time-varying_feedbacks_tau}, with arbitrary initial conditions in $\mathbb{S}^{1}$ yielding   $\delta_{\zeta}$ and $\delta_{\xi}$  as in  \eqref{eq:definition_of_delta-phase_of_oscillators} and \eqref{eq:definition_of_delta-xi-phase_of_oscillators}, respectively. Let $T$ be selected according to \eqref{eq:characteriation_T}.    Then,  there exist $\rho >0$  and  a  class-$\mathcal{K}_{\infty}$ function $\gamma$  such that for  any  $\nu \in (0,\rho)$  and any arbitrarily large $\bar{T}\gg T^* + 3D$, there exists $\bar{\omega} >0$  such that for any $\omega > \bar{\omega}$,     the solutions of the system  \eqref{eq:Newton_ES_static_maps1_prescribed-pointwise}-\eqref{eq:Newton_ES_static_maps3_prescribed-pointwise_Hessian} with initial function $\phi:[- 2T,0]  \rightarrow \mathbb{R}^4$  that satisfies $\vert \phi(0) -\left(\theta^\star, 0,{H^\star}^{-1}, H^\star\right)^\top \vert \leq \rho $,   render  the learning dynamics  to satisfy
\begin{equation}\label{eq:main_bound_learning_dynamics_in_theorem}
\left|\hat{\theta}(t)-\theta^\star\right| \leq \nu,\ \forall    t \in [T^{\star} +2D, \bar{T}],   
\end{equation}
 and 
\begin{equation}\label{eq:bound_K_infty_main_theorem}
\left|\hat{\theta}(t)-\theta^\star\right| \leq \gamma\left( \Vert \phi -(\theta^\star, 0,{H^\star}^{-1}, H^\star )^\top \Vert \right) + \nu, \quad \forall t\in[0,\bar{T}].
\end{equation}
\end{theorem}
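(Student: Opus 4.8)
The plan is to transfer the prescribed-time convergence of the \emph{averaged} error dynamics, already established in Lemma~\ref{eq:Prt_Target_system}, to the actual closed-loop system by means of an averaging theorem for RFDEs in infinite dimension \cite{Hale1990}. Concretely, I would work with the compact representation $\dot{X}(t) = f(\omega t, X_t)$ in \eqref{eq:Compact_form_error_system}, whose right-hand side is periodic in its first argument over the fast dither period $\Pi = 2\pi/\omega$, continuous, and locally Lipschitz in $X_t$; its average is exactly \eqref{eq:Compact_form_Averaged_error_system}, obtained through the identities \eqref{property:averaging_get_gradient}--\eqref{property:averaging_get_Hessian}. The key structural feature to exploit is that the oscillator states $\zeta,\xi$ evolve at the slow frequencies $\pi/T$ and $\pi/(2T)$, so they are essentially frozen over one dither period and pass unchanged into the averaged field; this is precisely the payoff of treating $\mathcal{K}_T,\mathcal{L}_T,\mathcal{K}_{2T}$ as state-dependent signals (Remark~\ref{remk:state-dependent-signals-time-invariantES}), which renders the averaged error dynamics a \emph{time-invariant} RFDE to which \cite{Hale1990} applies directly.

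First, I would fix $\nu \in (0,\rho)$ and an arbitrarily large horizon $\bar T \gg T^\star + 3D$. By Lemma~\ref{eq:Prt_Target_system}, the averaged trajectory $(\tilde\theta_{\rm av}, z_{\rm av}, \tilde\Gamma_{\rm av}, \tilde H_{\rm av})$ obeys the class-$\mathcal{K}_\infty$ bound \eqref{eq:bound_Kinfty_lemma} for all $t\geq0$ and, crucially, is identically zero for $t \geq T^\star + 2D$ (with $\tilde\Gamma_{\rm av},\tilde H_{\rm av}$ already vanishing at $4T - 2\delta_\xi T$). I would then choose $\rho$ small enough that $\gamma(\rho)$ keeps $\tilde H_{\rm av} + H^\star$ bounded away from zero, so that the Riccati relation \eqref{eq:relation_Gamma_H} remains nonsingular; together with the forward invariance of $\mathbb{S}^1$ for the oscillators (Remark~\ref{remk:forward_invariance_oscillators}), this confines the averaged solution to a fixed compact set $\mathcal{S}$ on the whole interval $[0,\bar T]$, on which $F^{\rm av}$, including the $\Gamma^2$ nonlinearity, is Lipschitz.

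Second, I would invoke the infinite-dimensional averaging theorem \cite{Hale1990}: since the averaged solution exists and remains in the compact set $\mathcal{S}$ over the finite interval $[0,\bar T]$, for every $\epsilon>0$ there is a frequency threshold $\bar\omega>0$ such that for all $\omega>\bar\omega$ the solution $X(t)$ of the actual system with the same initial function exists on $[0,\bar T]$ and stays $\epsilon$-close to $X^{\rm av}(t)$ in the sup-norm, uniformly on $[0,\bar T]$; in particular $|\tilde\theta(t) - \tilde\theta_{\rm av}(t)| \leq \epsilon$ there. Taking $\epsilon = \nu$ yields the two claims at once: on $[T^\star + 2D,\bar T]$ one has $\tilde\theta_{\rm av}(t)=0$, hence $|\hat\theta(t)-\theta^\star| = |\tilde\theta(t)| \leq \nu$, which is \eqref{eq:main_bound_learning_dynamics_in_theorem}; and on all of $[0,\bar T]$ the triangle inequality with \eqref{eq:bound_Kinfty_lemma} gives $|\hat\theta(t)-\theta^\star| \leq |\tilde\theta_{\rm av}(t)| + \nu \leq \gamma(\|\tilde\phi\|) + \nu$, which is \eqref{eq:bound_K_infty_main_theorem} after identifying $\tilde\phi = \phi - (\theta^\star,0,H^{\star-1},H^\star)^\top$.

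The main obstacle, and the conceptual heart of the result, is that averaging is valid only on \emph{finite} horizons, so the exact prescribed-time (dead-beat) convergence of the averaged system can be inherited by the true system only as \emph{practical} convergence to an $O(1/\omega)$ neighborhood of the extremum, over an arbitrarily large but bounded $[0,\bar T]$; this is exactly why the statement quantifies over finite $\bar T$ and produces the residual $\nu$ rather than exact convergence. A secondary technical point requiring care is the possible finite escape of the Riccati-like filter: its quadratic dependence on $\Gamma$ makes $f$ only locally Lipschitz, so both the averaged and the actual trajectory must be trapped in the compact set $\mathcal{S}$ — which is precisely what the local restriction $|\phi(0)-(\theta^\star,0,H^{\star-1},H^\star)^\top| \leq \rho$ and the calibration of $\rho$ guarantee, and which underlies the local (practical) rather than global nature of the conclusion.
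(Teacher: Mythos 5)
Your proposal is correct and follows essentially the same route as the paper's own proof: verify that the RFDE representation \eqref{eq:Compact_form_error_system} satisfies the hypotheses of the infinite-dimensional averaging theorem of \cite{Hale1990}, use Lemma \ref{eq:Prt_Target_system} for boundedness and prescribed-time convergence of the averaged system, obtain the closeness estimate $\vert X(t)-X^{\rm av}(t)\vert\leq\nu$ on $[0,\bar{T}]$ for $\omega>\bar{\omega}$, and conclude both bounds via the triangle inequality. Your additional remarks—calibrating $\rho$ to keep the Riccati relation \eqref{eq:relation_Gamma_H} nonsingular and confining trajectories to a compact set where the field is Lipschitz—make explicit what the paper handles through the hypotheses of Lemma \ref{eq:Prt_Target_system}, but do not constitute a different argument.
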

The proof of Theorem 1 builds on the classical averaging theorem for RFDEs \cite[Section 5]{Hale1990}, focusing on the closeness, over a compact time interval, between the trajectories of the error system and those of its averaged counterpart. In addition, the proof leverages the prescribed-time convergence property of the averaged error learning dynamics.
\begin{IEEEproof}
The nonlinear function $f$ of the RFDE with fast oscillations \eqref{eq:Compact_form_error_system},  \eqref{eq:RFDE-f1}-\eqref{eq:RFDE-f8} is  continuous in its arguments and is continuously differentiable in $X_t$  (notice, in particular, that that  the terms $\mathcal{K}_T(X_5)$, $\mathcal{L}_T(X_5,X_6)$, $\mathcal{K}_{ 2T}(X_7)$ are bounded and differentiable with respect to their arguments). Moreover,  $f$ admits a continuous Fréchet derivative with respect to $X_t$  (which implies $f$  is locally   Lipschitz in $X_t$)  and  also  verifies $f(\omega t + 2\pi,X_t)= f(\omega t ,X_t)$. Consequently, all conditions  on $f$ to apply classical averaging theory in infinite dimensions (see  \cite[Section 5]{Hale1990} and \cite[Section 2.3]{Lehman2002} are met.    In addition, by Lemma \ref{eq:Prt_Target_system} and invoking  the arguments in Remark \ref{remk:forward_invariance_oscillators}, the solution to \eqref{eq:Compact_form_Averaged_error_system} is bounded for $t \geq 0$. Therefore,   applying \cite[Corollary, 5.2]{Hale1990} and \cite[Theorem 4]{Lehman2002}),  we have that for any $\nu >0$ and any arbitrarily large $\bar{T}\geq 0$  (in particular $\bar{T} \gg T^{\star}+ 3D$); there exists $\bar{\omega} >0$  such that for any $\omega > \bar{\omega}$, 
 \begin{equation}\label{eq:estimate_closeness_averaging_inproof}
 \vert  X(t;\varphi)  -  X^{\rm av}(t;\varphi)\vert \leq \nu,   \quad \forall t \in [0,\bar{T}],
 \end{equation}
 where $X(t;\varphi)$ and  $X^{\rm av}(t;\varphi)$ denote the solution of \eqref{eq:Compact_form_error_system}, and \eqref{eq:Compact_form_Averaged_error_system}, respectively.   Therefore, the estimate \eqref{eq:estimate_closeness_averaging_inproof} indicates  i) how close  the solutions of the error system are with respect  to those of its averaged counterpart during the transient phase, and ii) how close the solutions  of the error system  are to those of the  averaged error system once the solutions of the latter have reached the origin in a prescribed time (i.e., at $T^{\star}+2D$).
 
 Therefore, the estimate \eqref{eq:estimate_closeness_averaging_inproof}  holds true, in particular,  for the $\tilde{\theta}$-component of $X$ in  \eqref{eq:Compact_form_error_system}, and the $\tilde{\theta}_{\rm av}$-component of $X^{\rm av}$ in \eqref{eq:Compact_form_Averaged_error_system}. Thus,
  \begin{equation}\label{eq:estimate_closeness_averaging_inproof_first_compt}
 \vert  \tilde{\theta}(t)  -  \tilde{\theta}_{\rm av}(t)\vert \leq \nu,  \quad \forall t \in [0, \bar{T}].
 \end{equation}
 This result along with the triangle inequality  (i.e., using the fact that  $\vert \tilde{\theta}(t) \vert =  \vert \tilde{\theta}(t) + \tilde{\theta}_{\rm av}(t) -\tilde{\theta}_{\rm av}(t)  \vert \leq \vert \tilde{\theta}_{\rm av}(t) \vert + \vert  \tilde{\theta}(t)  -  \tilde{\theta}_{\rm av}(t)\vert  $) yield
\begin{equation}\label{eq:estimate_averaging_inproof_first_compt}
\vert \tilde{\theta}(t) \vert \leq  \vert \tilde{\theta}_{\rm av}(t) \vert + \nu .
\end{equation}
Using Lemma \ref{eq:Prt_Target_system}, the estimate \eqref{eq:estimate_tilde_theta_av_afterLemma1}, and the fact $(\tilde{\theta}^0,z^0,\tilde{\Gamma}^0,\tilde{H}^0)^{\top}=(\tilde{\theta}_{\rm av}^0,z_{\rm av}^0,\tilde{\Gamma}_{\rm av}^0,\tilde{H}_{\rm av}^0)^{\top}=\tilde{\phi}$, we get
\begin{equation}\label{eq:estimate_averaging_inproof_first_compt_final_bound}
\begin{split}
\vert \tilde{\theta}(t) \vert \leq  &\sigma \left(t-T^{\star} - 2D  \right)\Big((1+6T)\exp\left(12 \right)q\left( \Vert (\tilde{\theta}^0,z^0)^{\top}   \Vert \right)  \\
&  \quad  + 36T^2 \exp\left(12 \right)  \sup_{0\leq s\leq \min\{t,4T-\delta_{\xi} 2T\}} (\vert v(s)  \vert) \Big)+ \nu,
\end{split}
\end{equation}
for all $0\leq t\leq \bar{T}$,  with $v(t)$ being the  finite-time vanishing  input,  as in \eqref{eq:vanishing_perturbation_v_in_proof}, i.e.,  $v(t)=-\mathcal{L}_{T}(t) \tilde{\Gamma}_{\rm av}(t)H^{\star}\tilde{\theta}_{\rm av}(t-D)$.  Therefore, we  conclude from \eqref{eq:estimate_averaging_inproof_first_compt_final_bound} that 
\begin{equation}
\left|\hat{\theta}(t)-\theta^\star\right| \leq \nu,\quad \forall  t \in [T^{\star} +2D, \bar{T} ],   
\end{equation}
Finally,  estimate \eqref{eq:bound_K_infty_main_theorem} is an immediate consequence of \eqref{eq:estimate_averaging_inproof_first_compt} and the bound \eqref{eq:bound_Kinfty_lemma} in Lemma \ref{eq:Prt_Target_system}, restricted to the compact time interval $[0,\bar{T}]$.  This concludes the proof.      
\end{IEEEproof}

\subsection{KHV PT Extremum Seeking  with Delay-Free  Map }
The following corollary considers the case when the output map \eqref{outputmap} is delay-free:
\begin{corollary}\label{theo:delay-free_main_ES_algorithm_prescribed-pointwise}
  Let us choose $T^{\star} > 0$ in the KHV PT-ES algorithm \eqref{eq:Newton_ES_static_maps1_prescribed-pointwise}-\eqref{eq:Newton_ES_static_maps3_prescribed-pointwise_Hessian}  with output map \eqref{eq:outout_function_static_maps}, where $D=0$, and with  time-periodic gains $\mathcal{K}_T(t)$,  $\mathcal{L}_{T}(t)$ and $\mathcal{K}_{ 2T}(t)$  defined   in \eqref{eq:mathcalK_T}, \eqref{eq:mathcalL_T} and \eqref{eq:mathcalK_tau}, respectively,  which are generated by the oscillators \eqref{eq:oscillator_system_periodic_time-varying_feedbacks} and \eqref{eq:oscillator_system_periodic_time-varying_feedbacks_tau}, with arbitrary initial conditions in $\mathbb{S}^{1}$ yielding   $\delta_{\zeta}$ and $\delta_{\xi}$  as in  \eqref{eq:definition_of_delta-phase_of_oscillators} and \eqref{eq:definition_of_delta-xi-phase_of_oscillators}, respectively. Let $T$ be selected according to \eqref{eq:characteriation_T}  
with $D=0$, namely
\begin{equation}\label{eq:characteriation_T_for_D=0}
T=\frac{T^{\star}}{4 - \delta_{\zeta}  + 2\Big \lceil \frac{4-(2\delta_{\xi} - \delta_{\zeta}) }{2}\Big\rceil}.
\end{equation}
Then,   there exist $\rho >0$  and  a  class-$\mathcal{K}_{\infty}$ function $\gamma$  such that for  any  $\nu \in (0,\rho)$  and any arbitrarily large $\bar{T}\gg T^*$, there exists $\bar{\omega} >0$  such that for any $\omega > \bar{\omega}$,     the solutions of the system  \eqref{eq:Newton_ES_static_maps1_prescribed-pointwise}-\eqref{eq:Newton_ES_static_maps3_prescribed-pointwise_Hessian} with initial function $\phi:[- 2T,0]  \rightarrow \mathbb{R}^4$  that satisfies $\vert \phi(0) -\left(\theta^\star, 0,{H^\star}^{-1}, H^\star\right)^\top \vert \leq \rho $,    render  the learning dynamics  to satisfy,
\begin{equation}\label{eq:main_bound_learning_dynamics_in_corollary}
\left|\hat{\theta}(t)-\theta^\star\right| \leq \nu,\ \forall t    \in [T^{\star},\bar{T}],    
\end{equation}
and 
\begin{equation}\label{eq:bound_K_infty_main_corollary}
\left|\hat{\theta}(t)-\theta^\star\right| \leq \gamma\left( \Vert \phi -(\theta^\star, 0,{H^\star}^{-1}, H^\star)^\top \Vert \right) + \nu, \quad \forall t\in[0,\bar{T}].
\end{equation}
\end{corollary}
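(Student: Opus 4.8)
The plan is to derive Corollary \ref{theo:delay-free_main_ES_algorithm_prescribed-pointwise} as the specialization of Theorem \ref{theo:main_ES_algorithm_prescribed-pointwise} to the delay-free case $D=0$, verifying that each step of the Theorem's proof---and of the underlying Lemma \ref{eq:Prt_Target_system}---remains valid, and in fact simplifies, when the map carries no delay. First I would confirm that the parameter selection is admissible: setting $D=0$ in \eqref{eq:restriction_choice_T_star} makes its right-hand side vanish, so the constraint degenerates to $T^{\star}>0$ and the prescribed time may be chosen freely, as stated in the corollary. Correspondingly, \eqref{eq:characteriation_T} collapses to \eqref{eq:characteriation_T_for_D=0}, yielding $T>0$, so that the requirement $T>D$ invoked in Lemma \ref{eq:Prt_Target_system} holds trivially and the lemma's hypotheses are met.

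The point that requires the most care---and that keeps the corollary from being vacuous---is that the closed loop remains genuinely infinite-dimensional even though the map is delay-free. Although the output \eqref{eq:outout_function_static_maps} no longer introduces any delay, the algorithm \eqref{eq:Newton_ES_static_maps1_prescribed-pointwise}--\eqref{eq:Newton_ES_static_maps3_prescribed-pointwise_Hessian} still deploys the \emph{artificial} feedback delays $T$ and $2T$. Hence the error dynamics continue to form a bona fide RFDE of the form \eqref{eq:Compact_form_error_system}, now with $X_1(t-D)=X_1(t)$ in \eqref{eq:RFDE-f2}--\eqref{eq:RFDE-f4}, maximal delay $2T$, and history on $[-2T,0]$. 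The regularity, Lipschitz, and $2\pi$-periodicity hypotheses needed to invoke the infinite-dimensional averaging theorem \cite[Section 5]{Hale1990} therefore hold unchanged, so the proof must still be routed through the RFDE averaging framework rather than finite-dimensional averaging.

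With admissibility and the RFDE structure in place, I would apply Lemma \ref{eq:Prt_Target_system} directly with $D=0$. The inverse-Hessian estimator still satisfies $\tilde{\Gamma}_{\rm av}(t)=\tilde{H}_{\rm av}(t)=0$ for $t\geq 4T-\delta_{\xi}2T$, Claim \ref{claim1} continues to hold (now without any delay in $x_1$), and the vanishing-perturbation argument of Lemma \ref{eq:Prt_Target_system} yields $\tilde{\theta}_{\rm av}(t)=z_{\rm av}(t)=0$ for $t\geq\tau_1$. Substituting \eqref{eq:characteriation_T_for_D=0} into the expression for $\tau_1$ in \eqref{eq:tau1_tau2inproof} with $D=0$ gives $\tau_1=T\big(4-\delta_{\zeta}+2\lceil\tfrac{4-(2\delta_{\xi}-\delta_{\zeta})}{2}\rceil\big)=T^{\star}$, so the averaged learning dynamics reaches $\theta^{\star}$ precisely at the prescribed time $T^{\star}$.

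Finally, the averaging-closeness argument of Theorem \ref{theo:main_ES_algorithm_prescribed-pointwise} applies verbatim with $\bar{T}\gg T^{\star}$: by \cite[Corollary 5.2]{Hale1990} there exists $\bar{\omega}>0$ such that for $\omega>\bar{\omega}$ one has $\vert\tilde{\theta}(t)-\tilde{\theta}_{\rm av}(t)\vert\leq\nu$ on $[0,\bar{T}]$. Combining this with $\tilde{\theta}_{\rm av}(t)=0$ for $t\geq T^{\star}$ via the triangle inequality produces \eqref{eq:main_bound_learning_dynamics_in_corollary}, while \eqref{eq:bound_K_infty_main_corollary} follows from the same closeness estimate together with the class-$\mathcal{K}_{\infty}$ bound \eqref{eq:bound_Kinfty_lemma}. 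I do not anticipate a genuine obstacle here; the only thing to watch is that the intentional delays keep the averaging analysis infinite-dimensional, so none of the RFDE machinery can be bypassed.
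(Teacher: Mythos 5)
Your proposal is correct and follows exactly the route the paper intends: the paper states Corollary \ref{theo:delay-free_main_ES_algorithm_prescribed-pointwise} without a separate proof, as the immediate specialization of Theorem \ref{theo:main_ES_algorithm_prescribed-pointwise} (and Lemma \ref{eq:Prt_Target_system}) to $D=0$, which is precisely what you carry out. Your verification that condition \eqref{eq:restriction_choice_T_star} degenerates to $T^{\star}>0$, that the artificial delays $T$ and $2T$ keep the error dynamics a genuine RFDE so the averaging machinery of \cite{Hale1990} still applies, and that $\tau_1$ in \eqref{eq:tau1_tau2inproof} equals $T^{\star}$ when $D=0$ supplies exactly the details the paper leaves implicit.
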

In particular, with  $\mathcal{K}_T(t)$,  $\mathcal{L}_{T}(t)$  and $\mathcal{K}_{ 2T}(t)$ with oscillators initialized with $\zeta(0)=(0,1)^{\top}$, $\xi(0)=(0,-1)^{\top}$ as in Theorem \ref{theo:main_ES_algorithm_prescribed-pointwise}, and any  $T>0$, the statement  \eqref{eq:main_bound_learning_dynamics_in_corollary}  holds with 
\begin{equation}
\label{eq-4T}
T^{\star}=4T\,.
\end{equation}
\begin{remark}
An alternative algorithm for the delay-free map ($D=0$) is given by:
\begin{align}
\dot{\hat{\theta}}(t)=& -\mathcal{K}_{T}(t)\Gamma(t)M(t)y(t-T), \label{eq:Newton_ES_static_maps1_prescribed-pointwise_pure_map_delay}  \\
\dot{\Gamma}(t)=&\mathcal{K}_{ 4T}(t) \Gamma^2(t)\left[\hat{H}(t- 4T)  -  N(t)y(t-T)\right],  
\label{eq:Newton_ES_static_maps3_prescribed-pointwise_Riccati_pure_map_delay} \\
\dot{\hat{H }}(t)=&-\mathcal{K}_{ 4T}(t)\left[\hat{H}(t-  4T)  -N(t)y(t-T)\right], \label{eq:Newton_ES_static_maps3_prescribed-pointwise_Hessian_pure_map_delay}
\end{align}   
with arbitrary $T>0$, and \eqref{eq:dither_signal_as_tiago_S}-\eqref{eq:dither_signal_as_tiago_N} (with $D$ replaced by $T$ in \eqref{eq:dither_signal_as_tiago_S}),
for which, with $\delta_{\zeta}=\delta_{\xi}=0$, the averaged  error $\tilde{\theta}_{\rm av}(t)$ settles to zero in $10T$. Finally,   if the map has a delay $D>0$, the map delay can be leveraged, i.e., $T$ in the algorithm \eqref{eq:Newton_ES_static_maps1_prescribed-pointwise_pure_map_delay}--\eqref{eq:Newton_ES_static_maps3_prescribed-pointwise_Hessian_pure_map_delay} taken as $T=D$, and $y(t-T)$ in \eqref{eq:Newton_ES_static_maps1_prescribed-pointwise_pure_map_delay} replaced by $y(t)$. 
\end{remark}   	
 
\subsection{Discussion  on the main result} \label{discussion_main_results}

In Theorem 1 we have established only a local convergence result. Achieving semi-global convergence within a Newton-based framework remains  challenging.  One potential  direction is to adapt the  ideas from \cite{LabarGaroneKinnaertEbenbauer2019}, together with the techniques in  \cite{FridmanZhang2020,Jbara2025}. This could enable a quantitative analysis and characterization of the parameters involved in the estimation system, relative to an arbitrarily chosen upper bound on the domain of attraction. 

Another feature of our methodology is the reliance on the averaging theorem for infinite-dimensional systems, specifically on the closeness of solutions over a compact time domain i.e., for all $t\in [0,\bar{T}]$ with $\bar{T}$ being arbitrarily larger than the prescribed time. To ensure the closeness of solutions on non-compact time domains -- i.e.,  for all $  t \in [0, +\infty) $ -- it would be necessary to characterize the stability property of the averaged error dynamics via bounds of class-$ \mathcal{KL} $, in the same spirit as it has been done for finite-dimensional systems (e.g., \cite{Teel1999,PovedaKrstic2021}). It is worth noting that the averaged error dynamics \eqref{eq:Newton_ES_static_maps1_prescribed-pointwise_error_averaged}-\eqref{eq:Newton_ES_static_maps3_prescribed-pointwise_Hessian_error_averaged} do exhibit a
prescribed-time convergence property (or dead-beat property of order $T^{\star} +2D$), as established in Lemma \ref{eq:Prt_Target_system},  meaning the solutions reach the origin in finite time and remain there. This  property is  complemented by the existence of a class-$\mathcal{K}_\infty$  bound. However, translating this  bound into the format required by infinite-dimensional averaging theorems--namely, uniform bounds of a class-$\mathcal{KL}$ that hold over infinite time horizons--remains an open problem. This gap prevents us from directly applying averaging results  on non-compact domains and necessitates our restriction to arbitrarily large, but finite, time intervals.

\subsection{KHV PT multivariable extremum seeking with delay-free  map }\label{Multivariable_ES}
Our   results for the scalar case can  be extended to the multivariable case  since, as in \cite{OliveiraKrsticTsubakino2017},  our  Newton-based scheme allows also a diagonal structure -- with a time-periodic delayed feedback applied for each channel. For simplicity, we consider a multivariable static map  without delay.  
Recall that  the goal  is to adjust the input $\theta=(\theta_1,\cdots,\theta_n)^{\top} \in\mathbb{R}^n$ in real time,  in order to drive and maintain the output
\begin{equation}
y(t)=Q(\theta(t)),
\end{equation}
 around the unknown extremum $y^{\star} =Q(\theta^\star)$, where  $\theta^\star \in \mathbb{R}^n$ is the unknown optimizer. For maximum seeking purposes, we assume  
\begin{equation}\label{eq:Gradient_Hessian_equilibrium_MULTIVAR}
\frac{\partial Q(\theta^\star)}{\partial \theta}=0,  \quad \frac{\partial^2 Q(\theta^\star)}{\partial \theta^2}=H^\star <0, \quad \text{and} \quad H^{\star}=H^{\star \top},
\end{equation}
where the Hessian $H^\star \in \mathbb{R}^{n\times n}$ of the static map  is unknown. The output $y$ is assumed quadratic, 
\begin{equation}\label{eq:outout_function_static_maps_MULTIVAR}
y(t)=  y^\star +  \frac{1}{2}\left(\theta(t) - \theta^\star \right)^{\top}H^{\star}\left(\theta(t) - \theta^\star \right). 
\end{equation}
The dither signals $S(t)$, $M(t)$, and   $N(t)$ are given as  \cite{GhaffariKrsticNesic2012}:
\begin{align}
S(t)&=(a_1\sin(\omega_1 t),\cdots,a_n\sin(\omega_nt) )^{\top}, \label{eq:dither_signal_as_tiago_S_MULTIVAR}\\ 
M(t) &=\left(\tfrac{2}{a_1}\sin(\omega_1 t),\cdots,\tfrac{2}{a_n}\sin(\omega_n t) \right)^{\top},\label{eq:dither_signal_as_tiago_M_MULTIVAR}
\end{align} 
and $N(t)$,  a matrix given by
\begin{align}\label{eq:dither_signal_as_tiago_N_MULTIVAR}
N_{ii}(t)&=\tfrac{16}{a_{i}^2}\left(\sin^2(\omega_it) - \tfrac{1}{2} \right) \quad \text{for} \quad i=1,2,\cdots,n,\\
N_{ij}(t)&=\tfrac{4}{a_ia_j}\sin(\omega_i t)\sin(\omega_j t)  \quad \text{for} \quad i\neq j,
\end{align}
with  nonzero perturbation  amplitudes  $a_i$, and frequencies	 $\omega_i>0$ chosen to satisfy
$\omega_i=\omega_i^{'}\omega=\mathcal{O}(\omega), \quad i=1,2, \cdots,n$,
where  $\omega$  is a positive constant and $\omega_i^{'}$ a rational number, verifying,  for all distinct $i,j,k$, and $l$ \cite{GhaffariKrsticNesic2012}:
\begin{equation}
\omega_i^{'} \notin \{ \omega_j^{'},  \tfrac{1}{2}(\omega_j^{'}+\omega_k^{'}), \omega_j^{'}+2\omega_{k}^{'},\omega_{j}^{'}+\omega_{k}^{'} \pm \omega_l^{'} \}.
\end{equation} 
The averaging properties \eqref{property:averaging_get_gradient}-\eqref{property:averaging_get_Hessian} hold for the multivariable case, with  $\Pi= 2\pi \times \text{LCM}\lbrace  \tfrac{1}{\omega_i} \rbrace$, $i=1,2,\cdots,n$, where LCM stands for the least common multiple. 
 
We propose the following KHV PT multivariable ES\footnote{In \eqref{eq:Newton_ES_static_maps3_prescribed-pointwise_Riccati_MULTIVAR} and \eqref{eq:Newton_ES_static_maps3_prescribed-pointwise_Hessian_MULTIVAR}, the states are matrices of dimension $n\times n$. Therefore, the dynamics of $\Gamma(t)$ and $\hat{H}(t)$ must be understood as a matrix RFDE. This notation is used to simplify our presentation.}:
\begin{align}
\dot{\hat{\theta}}(t)=&  z(t),  \label{eq:Newton_ES_static_maps1_prescribed-pointwise_MULTIVAR} \\
\dot{z}(t) =&-2\mathcal{K}_{T}(t)z(t-T)
\nonumber\\ &
+\mathcal{L}_{T}(t) \left[ \hat{\theta}(t) - \hat{\theta}(t-T)
-\Gamma(t)M(t)y(t)  \right]
,  \label{eq:Newton_ES_static_maps2_prescribed-pointwise_MULTIVAR} \\
\dot{\Gamma}(t)=&\mathcal{K}_{ 2T}(t) \Gamma(t)\left[\hat{H}(t- 2T)  -  N(t)y(t)\right]\Gamma(t),
\label{eq:Newton_ES_static_maps3_prescribed-pointwise_Riccati_MULTIVAR} \\
\dot{\hat{H }}(t)=&-\mathcal{K}_{ 2T}(t)\left[\hat{H}(t-  2T)  -N(t)y(t)\right], \label{eq:Newton_ES_static_maps3_prescribed-pointwise_Hessian_MULTIVAR}
\end{align}   
where the static map $y(t)$ is now given in \eqref{eq:outout_function_static_maps_MULTIVAR}, with actual input  $\theta(t)=\hat{\theta}(t)+S(t)$,   $\hat{\theta}(t)\in\mathbb{R}^{n}$ is the learning dynamics, $z(t)\in\mathbb{R}^{n}$,   $\Gamma(t)\in\mathbb{R}^{n \times n}$,  and $\hat{H} \in \mathbb{R}^{n\times n}$.  Recall that $\Gamma = \hat{H}^{-1}$ and that \eqref{eq:Newton_ES_static_maps3_prescribed-pointwise_Riccati_MULTIVAR} is obtained using  $\dot{\Gamma}= - \Gamma \dot{\hat{H}} \Gamma$. The periodic gains $\mathcal{K}_T(t)$ and $\mathcal{L}_{T}(t)$, and $\mathcal{K}_{ 2T}(t)$  are defined, respectively  in \eqref{eq:mathcalK_T}, \eqref{eq:mathcalL_T} and \eqref{eq:mathcalK_tau}.  
  We consider  initial conditions   $\hat{\theta}^0=\phi_{\hat{\theta}}(s)$ with $\phi_{\hat{\theta}} \in C^{0}([- 2T,0]; \mathbb{R}^{n}) $, $z^0=\phi_{z}(s)$ with $\phi_z \in C^{0}([- 2T,0]; \mathbb{R}^{n}) $,   $\hat{H}^0=\phi_{\hat{H}}(s)$, with $\phi_{\hat{H}} \in C^{0}([- 2T,0]; \mathbb{R}^{n \times n})$, and $\hat{H}^0 \neq 0$ for all $s\in [- 2T,0] $,  and $\Gamma^0 =(\hat{H}^0)^{-1} = \phi_{\Gamma}(s) \in C^{0}([- 2T,0]; \mathbb{R}^{n\times n})$,  for \eqref{eq:Newton_ES_static_maps1_prescribed-pointwise_MULTIVAR}-\eqref{eq:Newton_ES_static_maps3_prescribed-pointwise_Hessian_MULTIVAR},    $\zeta(0)=(\zeta_1(0),\zeta_2(0))^{\top} \in \mathbb{S}^1$ for  \eqref{eq:oscillator_system_periodic_time-varying_feedbacks}, and $\xi(0)=(\xi_1(0),\xi_2(0))^{\top} \in \mathbb{S}^1$ for \eqref{eq:oscillator_system_periodic_time-varying_feedbacks_tau}. We state next the main result for the multivariable case. The proof follows the same arguments as in the proof of Theorem \ref{theo:main_ES_algorithm_prescribed-pointwise}, thus it is omitted.
 \begin{theorem}\label{theo:main_ES_algorithm_prescribed-pointwise_MULTIVAR}
  Let $T^{\star} >0$ be the prescribed time.
Consider the KHV PT multivariable ES algorithm \eqref{eq:Newton_ES_static_maps1_prescribed-pointwise_MULTIVAR}-\eqref{eq:Newton_ES_static_maps3_prescribed-pointwise_Hessian_MULTIVAR}  with output map \eqref{eq:outout_function_static_maps_MULTIVAR}  and  time-periodic gains $\mathcal{K}_T(t)$,  $\mathcal{L}_{T}(t)$ and $\mathcal{K}_{ 2T}(t)$  defined   in \eqref{eq:mathcalK_T}, \eqref{eq:mathcalL_T} and \eqref{eq:mathcalK_tau}, respectively,  and which are generated by the oscillators \eqref{eq:oscillator_system_periodic_time-varying_feedbacks} and \eqref{eq:oscillator_system_periodic_time-varying_feedbacks_tau}, with arbitrary initial conditions in $\mathbb{S}^{1}$ yielding   $\delta_{\zeta}$ and $\delta_{\xi}$  as in  \eqref{eq:definition_of_delta-phase_of_oscillators} and \eqref{eq:definition_of_delta-xi-phase_of_oscillators}, respectively. Let $T$ be selected according to
\begin{equation}\label{eq:characteriation_T_for_D=0_MULTIVAR}
T=\frac{T^{\star}}{4 - \delta_{\zeta}  + 2\Big \lceil \frac{4-(2\delta_{\xi} - \delta_{\zeta}) }{2}\Big\rceil}.
\end{equation}
Then, there exists $\rho >0$ such that  for any $\nu \in (0,\rho)$  and any arbitrary large $\bar{T}\gg T^*$, there exists $\bar{\omega} >0$  such that for any $\omega > \bar{\omega}$,     the solutions of the system  \eqref{eq:Newton_ES_static_maps1_prescribed-pointwise_MULTIVAR}-\eqref{eq:Newton_ES_static_maps3_prescribed-pointwise_Hessian_MULTIVAR} with  initial functions  $\phi_{\hat{\theta}}:[- 2T,0]  \rightarrow \mathbb{R}^n$, $\phi_{z}:[- 2T,0]  \rightarrow \mathbb{R}^n$ $\phi_{\Gamma}:[- 2T,0]  \rightarrow \mathbb{R}^{n\times n}$, $\phi_{\hat{H}}:[- 2T,0]  \rightarrow \mathbb{R}^{n\times n}$  that satisfy  $\vert \phi_{\hat{\theta}}(0) - \theta^\star \vert \leq \rho$, $\vert \phi_{z}(0)  \vert \leq \rho$, $\vert \phi_{\Gamma}(0) - {H^\star}^{-1}\vert \leq \rho$ and  $\vert \phi_{\hat{H}}(0) - H^\star \vert \leq \rho $,   render  the learning dynamics  to satisfy, 
\begin{equation}\label{eq:main_bound_learning_dynamics_in_theorem_MULTIVAR}
\left|\hat{\theta}(t)-\theta^\star\right|  \leq \nu,\ \forall t \in  [T^{\star}, \bar{T}].    
\end{equation}
 \end{theorem}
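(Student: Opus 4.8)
The plan is to mirror the proof of Theorem~\ref{theo:main_ES_algorithm_prescribed-pointwise} step by step, making only the adjustments forced by the matrix-valued Hessian and Riccati filters and by the delay-free map ($D=0$). First I would introduce the error variables $\tilde\theta = \hat\theta - \theta^\star \in \mathbb{R}^n$, $\tilde\Gamma = \Gamma - H^{\star-1}$ and $\tilde H = \hat H - H^\star$ (the latter two in $\mathbb{R}^{n\times n}$), append the oscillator states $\zeta,\xi$, and cast the closed loop as an RFDE with fast oscillations $\dot X(t) = f(\omega t, X_t)$ of exactly the same form as \eqref{eq:RFDE-f1}--\eqref{eq:RFDE-f8}, now with $X$ taking values in a product of Euclidean and matrix spaces. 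The nonlinearity $f$ inherits the same regularity ($C^1$ in $X_t$, continuous and $2\pi$-periodic in its first argument), so the hypotheses of the infinite-dimensional averaging theorem \cite{Hale1990} remain satisfied.

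The averaging step relies on the multivariable versions of the identities \eqref{property:averaging_get_gradient}--\eqref{property:averaging_get_Hessian}, which hold under the frequency-separation conditions imposed on $\{\omega_i'\}$. With $D=0$, averaging the $z$-equation and using $\frac{1}{\Pi}\int_0^\Pi M(s)\,y_c(s,\chi)\,ds = H^\star\chi$ cancels the two $\tilde\theta_{\rm av}(t)$ terms, leaving $\dot{\tilde\theta}_{\rm av} = z_{\rm av}$ and $\dot z_{\rm av} = -2\mathcal{K}_T(t) z_{\rm av}(t-T) - \mathcal{L}_T(t)[\tilde\theta_{\rm av}(t-T) + \tilde\Gamma_{\rm av}(t) H^\star\tilde\theta_{\rm av}(t)]$. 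The averaged Hessian estimator becomes $\dot{\tilde H}_{\rm av} = -\mathcal{K}_{2T}(t)\tilde H_{\rm av}(t-2T)$, which is entrywise decoupled, while the averaged matrix Riccati filter reads $\dot{\tilde\Gamma}_{\rm av} = \mathcal{K}_{2T}(t)(\tilde\Gamma_{\rm av}+H^{\star-1})\tilde H_{\rm av}(t-2T)(\tilde\Gamma_{\rm av}+H^{\star-1})$; since it is generated through $\tfrac{d}{dt}(\hat H^{-1}) = -\hat H^{-1}\dot{\hat H}\hat H^{-1}$, the matrix relation $\tilde\Gamma_{\rm av}(t)+H^{\star-1} = (\tilde H_{\rm av}(t)+H^\star)^{-1}$ is propagated for all $t$, the matrix counterpart of \eqref{eq:relation_Gamma_H}.

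Prescribed-time convergence of the averaged system (the matrix analogue of Lemma~\ref{eq:Prt_Target_system}) then follows by invoking the scalar Lemma~\ref{Lemma_PrT_Pointwise_delay} entrywise: each entry of $\tilde H_{\rm av}$ obeys the single-integrator delayed feedback $\dot x = -\mathcal{K}_{2T}(t)x(t-2T)$, so $\tilde H_{\rm av}(t)=0$ for $t\ge 4T - 2\delta_\xi T$, and through the preserved inverse relation also $\tilde\Gamma_{\rm av}(t)=0$ on the same interval. Consequently the coupling term $v(t) = -\mathcal{L}_T(t)\tilde\Gamma_{\rm av}(t)H^\star\tilde\theta_{\rm av}(t)$ is a finite-time-vanishing input, bounded on the compact interval $[0,4T-2\delta_\xi T]$ because the $(\tilde\theta_{\rm av},z_{\rm av})$-block there is a linear time-varying delay system with bounded coefficients (the analogue of Claim~\ref{claim1}, now with $D=0$). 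Since this block carries the \emph{same} scalar gains $\mathcal{K}_T,\mathcal{L}_T$ in every channel, the backstepping change of variables $x_1=\tilde\theta_{\rm av}$, $x_2=\mathcal{K}_T(t)\tilde\theta_{\rm av}(t-T)+z_{\rm av}(t)$ decouples it into $n$ single-integrator cascades driven by $v$; applying the vanishing-input estimate \eqref{eq:bound_PrT_single_integrator-control-vanishing_dist} of Lemma~\ref{Lemma_PrT_Pointwise_delay} channel-wise yields $\tilde\theta_{\rm av}(t)=z_{\rm av}(t)=0$ for all $t\ge T^\star$ (here $2D=0$), together with a class-$\mathcal{K}_\infty$ bound on the full averaged state as in \eqref{eq:bound_Kinfty_lemma}.

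I expect the main obstacle to be the matrix Riccati part: one must verify that $\tilde H_{\rm av}+H^\star$ stays nonsingular up to the convergence time (by continuity from the nonsingular initial data), so that the inverse relation is legitimately propagated and the entrywise dead-beat of $\tilde H_{\rm av}$ transfers to $\tilde\Gamma_{\rm av}$ and renders $v$ a well-defined, genuinely vanishing input. Once the averaged system is shown to converge in prescribed time $T^\star$ with a class-$\mathcal{K}_\infty$ bound, the remainder is routine: the infinite-dimensional averaging theorem of \cite{Hale1990}, applied exactly as in the proof of Theorem~\ref{theo:main_ES_algorithm_prescribed-pointwise}, gives closeness $\vert X(t;\varphi) - X^{\rm av}(t;\varphi)\vert \le \nu$ on any compact $[0,\bar T]$ for $\omega$ large enough, and the triangle inequality on the $\tilde\theta$-component combined with $\tilde\theta_{\rm av}(t)=0$ for $t\ge T^\star$ yields \eqref{eq:main_bound_learning_dynamics_in_theorem_MULTIVAR}.
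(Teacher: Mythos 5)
Your proposal is correct and follows exactly the route the paper intends: the paper omits the proof of Theorem~\ref{theo:main_ES_algorithm_prescribed-pointwise_MULTIVAR}, stating only that it ``follows the same arguments as in the proof of Theorem~\ref{theo:main_ES_algorithm_prescribed-pointwise},'' and your reconstruction---error/RFDE formulation, multivariable averaging identities, entrywise application of Lemma~\ref{Lemma_PrT_Pointwise_delay} to the decoupled $\tilde H_{\rm av}$ equation, propagation of the matrix inverse relation $\tilde\Gamma_{\rm av}+H^{\star-1}=(\tilde H_{\rm av}+H^\star)^{-1}$, the backstepping change of variables exploiting the common scalar gains $\mathcal{K}_T,\mathcal{L}_T$ across channels, and finally Hale's averaging theorem on a compact interval---is precisely that argument adapted to the matrix-valued, delay-free setting. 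Your flagged obstacle (nonsingularity of $\tilde H_{\rm av}+H^\star$ up to the convergence time) is indeed the only genuinely new point, and it is resolved as you suggest, since the dead-beat/ISS bound of Lemma~\ref{Lemma_PrT_Pointwise_delay} applied entrywise keeps $\tilde H_{\rm av}(t)$ within $\exp(6)$ times its initial size, which the local smallness assumption $\vert\phi_{\hat H}(0)-H^\star\vert\leq\rho$ renders compatible with invertibility of $H^\star$.
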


\section{Numerical Simulations}\label{numerical_simulations}

\subsection{Scalar static map with delayed measurement}
We consider the following scalar static
quadratic map with the delayed measurement (with $D=5s$):
\begin{align}
y(t)=& Q(\theta(t-5)) \quad  Q(\theta)=5-0.4(\theta - 2)^2
\end{align}
We  implement the KHV PT-ES algorithm \eqref{eq:Newton_ES_static_maps1_prescribed-pointwise}-\eqref{eq:Newton_ES_static_maps3_prescribed-pointwise_Hessian}. We use the dither signals given  by 
\eqref{eq:dither_signal_as_tiago_S}-\eqref{eq:dither_signal_as_tiago_N} whose parameters  are $a=0.5$, $\omega=1000$. 
 We consider the oscillator \eqref{eq:oscillator_system_periodic_time-varying_feedbacks} with initial conditions $\zeta(0)=(0,1)^{\top}$  and the oscillator \eqref{eq:oscillator_system_periodic_time-varying_feedbacks_tau} with initial conditions $\xi(0)=(0,1)^{\top}$, yielding in both cases, $\delta_{\zeta}=0$ and $\delta_{\xi}=0$ computed according to \eqref{eq:definition_of_delta-phase_of_oscillators}, and  \eqref{eq:definition_of_delta-xi-phase_of_oscillators}, respectively.  
We set the prescribed time $T^{\star}=40s$ (recall $T^{\star}>  6D$  according to \eqref{eq:restriction_choice_T_star}), and we select $T=6.25s$  according to  \eqref{eq:characteriation_T}. 
We set the initial state $\hat{\theta}(0)=-5$  of the estimator and its preceding delay values (i.e., $\hat{\theta}(s)=-5$ for all $s \in [-2T,0]$ knowing that $T> D$). We set initial state  of the filter $z$  as  $z(0)=0.1$ and its preceding delay values (i.e., $z(s)=0.1$ for all   $s \in  [-2T,0]$)  and we set  $\Gamma(0)=-0.9$ (thus $\hat{H}(0)=-1.11$ and its preceding delay values i.e., $\hat{H}(s)=-1.11$ for all  $s \in [- 2T,0]$). 
\begin{figure}[t!]
\centering{
\includegraphics[width=0.88\columnwidth]{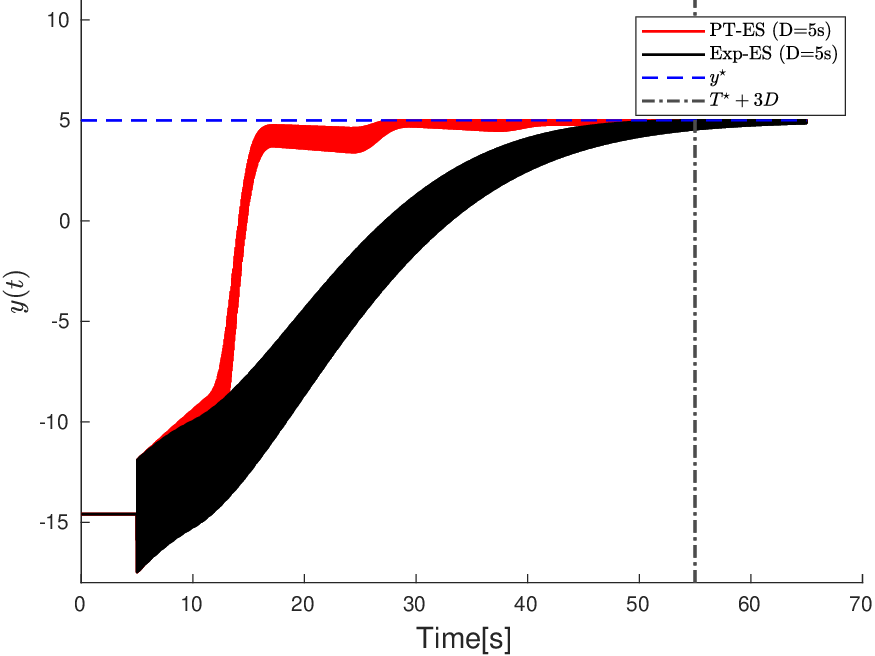} 
\caption{Time-evolution of the static map \eqref{eq:outout_function_static_maps}  under the KHV PT-ES algorithm \eqref{eq:Newton_ES_static_maps1_prescribed-pointwise}-\eqref{eq:Newton_ES_static_maps3_prescribed-pointwise_Hessian} with  prescribed time of convergence is dictated by  $T^{\star}+ 3D = 55s$ (red line) and under the exponential  Newton-based extremum seeking algorithm \eqref{eq:Newton_ES_static_maps1V2}-\eqref{eq:Newton_ES_static_maps2V2} (black line). }
\label{Static-map_and_Estimators_plot}
}
\end{figure}

\begin{figure*}[t]
\centering{
\subfigure{\includegraphics[width=0.88\columnwidth]{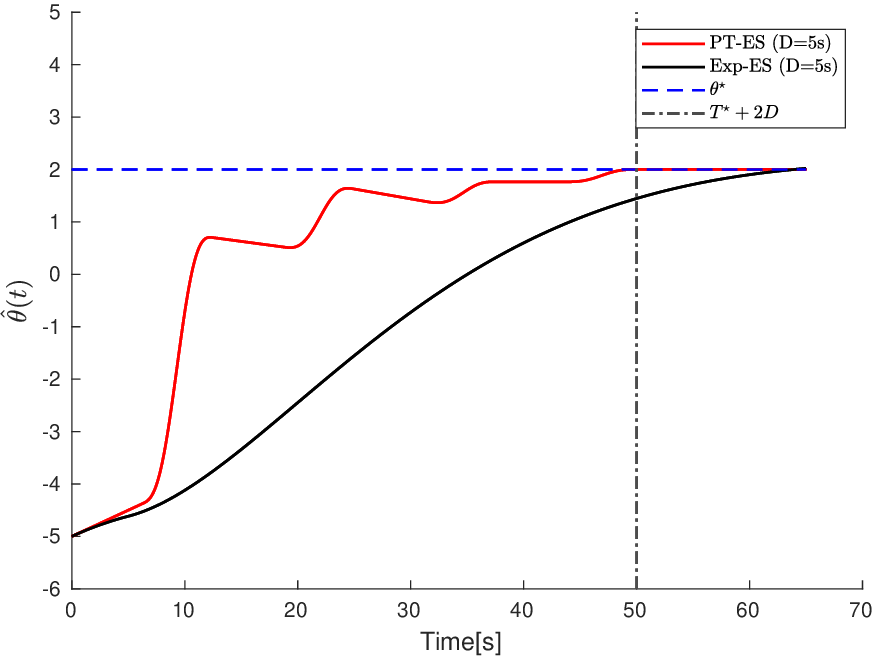}}
\subfigure{\includegraphics[width=0.9\columnwidth]{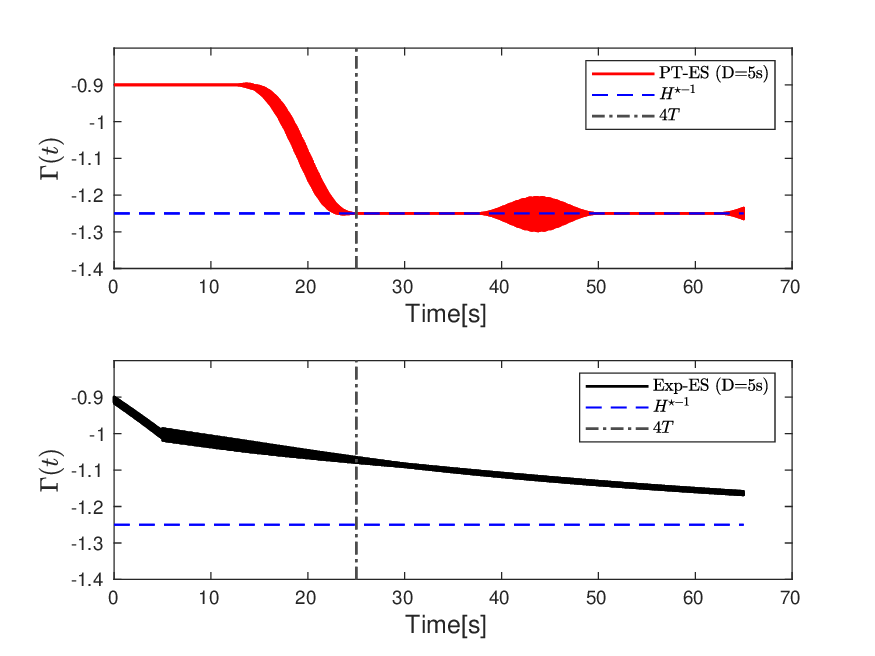} } 
\caption{ \underline{Left:} Time-evolution of the estimator $\hat{\theta}(t)$ under the KHV PT-ES algorithm \eqref{eq:Newton_ES_static_maps1_prescribed-pointwise}-\eqref{eq:Newton_ES_static_maps3_prescribed-pointwise_Hessian} with   prescribed-time   of convergence    $T^{\star}+2D=50 s$  (red line), and under  the exponential    ES algorithm \eqref{eq:Newton_ES_static_maps1V2}-\eqref{eq:Newton_ES_static_maps2V2} (black line). \underline{Right:} Time-evolution of the Riccati-like filter estimating the inverse of the Hessian  under the KHV PT-ES algorithm \eqref{eq:Newton_ES_static_maps1_prescribed-pointwise}-\eqref{eq:Newton_ES_static_maps3_prescribed-pointwise_Hessian} with prescribed time of convergence $ 4T = 25s$ (top--red line), and under  the exponential   ES algorithm \eqref{eq:Newton_ES_static_maps1V2}-\eqref{eq:Newton_ES_static_maps2V2} (bottom--black line).  }
\label{Learning_dynamics_plots}
}
\end{figure*}

\begin{figure}[t]
\centering{
\includegraphics[width=0.88\columnwidth]{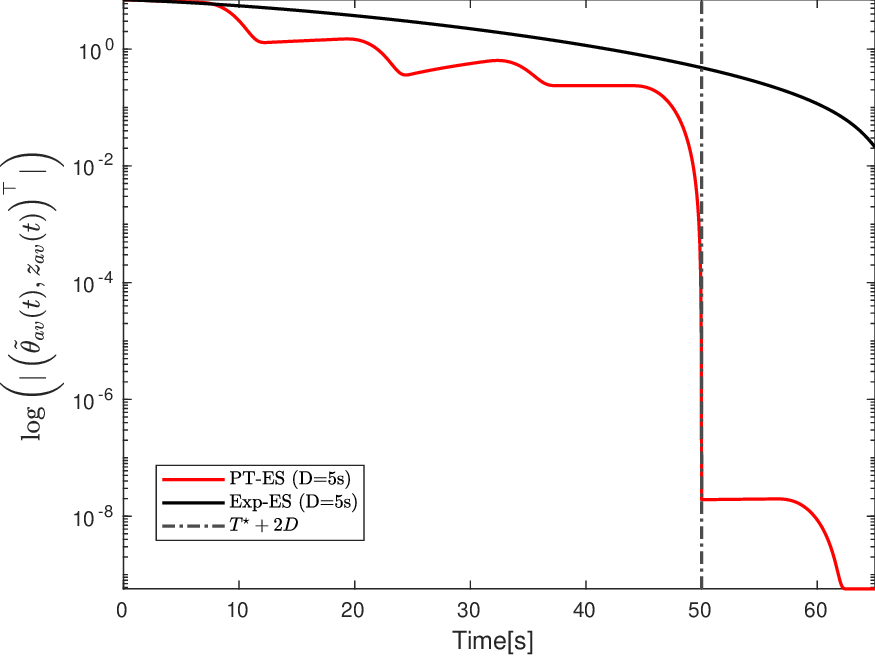} 
\caption{Plot in logarithmic scale  of the Euclidean norm of the averaged error dynamics   \eqref{eq:Newton_ES_static_maps1_prescribed-pointwise_error_averaged}-\eqref{eq:Newton_ES_static_maps2_prescribed-pointwise_error_averaged} (red line)  and \eqref{eq:Newton_ES_static_averaged1}-\eqref{eq:Newton_ES_static_averaged2}   (black line).}
\label{Error_dynamics_plots_logaritmic_scale}
}
\end{figure}

Figure \ref{Static-map_and_Estimators_plot} shows on the left the evolution of the output $y(t)$ with delay $D=5s$. we use: i) the KHV PT-ES algorithm \eqref{eq:Newton_ES_static_maps1_prescribed-pointwise}-\eqref{eq:Newton_ES_static_maps3_prescribed-pointwise_Hessian} (red line)    and  ii)  the exponential  Newton-based extremum seeking algorithm \eqref{eq:Newton_ES_static_maps1V2}-\eqref{eq:Newton_ES_static_maps2V2} (black line). For the latter ES algorithm, we have selected $w_r=0.02$, $c=0.1$ and $K=0.05$. 
Figure \ref{Learning_dynamics_plots}  shows the evolution of $\hat{\theta}(t)$ (left) and  the $\Gamma(t)$ (right)  by using: i) the KHV PT-ES \eqref{eq:Newton_ES_static_maps1_prescribed-pointwise}-\eqref{eq:Newton_ES_static_maps3_prescribed-pointwise_Hessian} (red line);    and ii)  the exponential  Newton-based ES algorithm \eqref{eq:Newton_ES_static_maps1V2}-\eqref{eq:Newton_ES_static_maps2V2} (black line). 

Finally, Figure \ref{Error_dynamics_plots_logaritmic_scale}  shows the Euclidean norm, in logarithmic scale,  of the solution of the averaged error dynamics $\tilde{\theta}_{\rm av}(t)$ and $z_{\rm av}(t)$.  One can observe (red line) the convergence  to zero of the solutions to  \eqref{eq:Newton_ES_static_maps1_prescribed-pointwise_error_averaged}- \eqref{eq:Newton_ES_static_maps2_prescribed-pointwise_error_averaged} within a prescribed-time   $T^{\star}+2D=50 s$.

\subsection{Multivariable static map with delay-free map}
 We borrow the example in \cite[Section IX]{OliveiraKrsticTsubakino2017} for the output delay-free case. Hence,  the output function has the form of \eqref{eq:outout_function_static_maps_MULTIVAR} with the local maximum  $y^{\star}=1$ and the unknown maximizer $\theta^{\star}= (\theta_1^{\star}, \theta_2^{\star})^{\top}=(0,1)^{\top}$; and for which we assume the Hessian is given by 
$H^{\star}=\begin{bmatrix}
-2 & -2 \\
-2 & -4
\end{bmatrix}$. 
We  implement the KHV PT multivariable ES algorithm \eqref{eq:Newton_ES_static_maps1_prescribed-pointwise_MULTIVAR}-\eqref{eq:Newton_ES_static_maps3_prescribed-pointwise_Hessian_MULTIVAR}. We use the dither signals given  by 
\eqref{eq:dither_signal_as_tiago_S_MULTIVAR}-\eqref{eq:dither_signal_as_tiago_N_MULTIVAR} whose parameters  are $a_1=a_2=0.5$, $\omega_1=70$, $\omega_1=50\omega$, $\omega_2=70 \omega$.   We consider the oscillator \eqref{eq:oscillator_system_periodic_time-varying_feedbacks} with initial conditions $\zeta(0)=(0,1)^{\top}$  and the oscillator \eqref{eq:oscillator_system_periodic_time-varying_feedbacks_tau} with initial conditions $\xi(0)=(0,1)^{\top}$, yielding in both cases, $\delta_{\zeta}=0$ and $\delta_{\xi}=0$ computed according to \eqref{eq:definition_of_delta-phase_of_oscillators}, and  \eqref{eq:definition_of_delta-xi-phase_of_oscillators}, respectively.  
We set the prescribed time $T^{\star}=30s$ and we select $T=3.75s$  according to  \eqref{eq:characteriation_T_for_D=0_MULTIVAR}. 
We set the initial state $\hat{\theta}(0)=(-1,2)^{\top}$  of the estimator and its preceding delay values. We set initial state  of the filter $z$  as  $z(0)=(0,0)^{\top}$ and its preceding delay values,  and we set  $\Gamma(0)=-\text{diag}\{0.01, 0.005\}$  as in   \cite{OliveiraKrsticTsubakino2017} (thus $\hat{H}(0)=-\text{diag}\{100,200 \}$ and its preceding delay values). 

\begin{figure*}[t]
\centering{
\subfigure{\includegraphics[width=0.88\columnwidth]{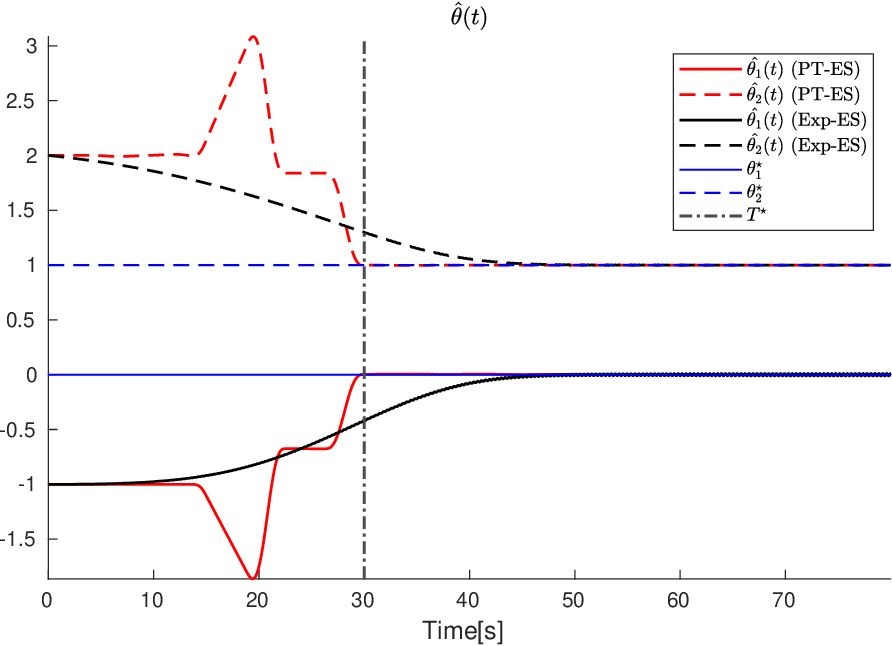}}
\subfigure{\includegraphics[width=0.9\columnwidth]{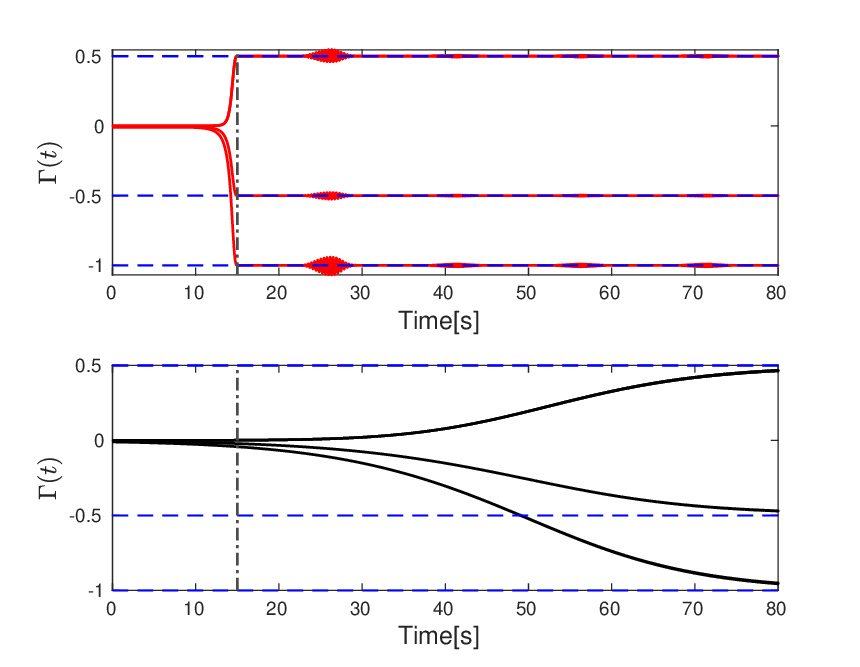} } 
\caption{{\color{black} \underline{Left:} Time-evolution of the estimator $\hat{\theta}(t)$ under the KHV PT multivariable ES algorithm \eqref{eq:Newton_ES_static_maps1_prescribed-pointwise_MULTIVAR}-\eqref{eq:Newton_ES_static_maps3_prescribed-pointwise_Hessian_MULTIVAR} with   prescribed-time   of convergence    $T^{\star}=30 s$  (red lines), and under  the exponential  multivariable  ES algorithm in \cite[Section VII, eq (123)]{OliveiraKrsticTsubakino2017} (black lines). \underline{Right:} Time-evolution of the Riccati-like filter estimating the inverse of the Hessian  under the KHV PT multivariable ES algorithm \eqref{eq:Newton_ES_static_maps1_prescribed-pointwise_MULTIVAR}-\eqref{eq:Newton_ES_static_maps3_prescribed-pointwise_Hessian_MULTIVAR} with prescribed time of convergence $ 4T = 15s$ (top--red lines), and under  the exponential  multivariable  ES algorithm in \cite[Section VII, eq (123)]{OliveiraKrsticTsubakino2017} (bottom--black lines).  }}
\label{Learning_dynamics_plots_MULTIVAR}
}
\end{figure*}

Figure  \ref{Learning_dynamics_plots_MULTIVAR}  shows the evolution of  $\hat{\theta}(t)$ (left) and   $\Gamma(t)$ (right), respectively,  if  we use: i) the KHV PT multivariable ES algorithm \eqref{eq:Newton_ES_static_maps1_prescribed-pointwise_MULTIVAR}-\eqref{eq:Newton_ES_static_maps3_prescribed-pointwise_Hessian_MULTIVAR} (red lines)    and  ii)  the exponential  multivariable  ES algorithm in \cite[Section VII, eq (123)]{OliveiraKrsticTsubakino2017} (black lines)  with tuning parameters  $w_r=0.1$, $c_1=c_2=20$ and $K=\text{diag}\{1,1 \}$. Under the the KHV PT multivariable ES algorithm  the time of convergence for  the learning dynamics $\hat{\theta}(t)$ to estimate $\theta^\star$ is given by  $T^{\star} =30s$ and the time of convergence for $\Gamma(t)$  to estimate the Hessian's inverse is given by  $ 4T = 15s$.

\section{Conclusion}\label{Section:Conclusion_ExTS}
This paper presented a novel  Newton-like extremum seeking (ES) algorithm for static maps subject to time delay. The proposed algorithm achieves delay-compensation alongside convergence towards the optimum in a finite time  that can be prescribed in the design. It builds on time-periodic gains and delayed feedbacks.  We have  refereed to it as the {\em KHV  PT-ES}.  We extended the result to  multivariable static maps (with delay-free map).
The key advantage of this approach over existing methods is its ability to achieve PT convergence without relying on singular gains. 

Future work will focus on extending the KHV PT-ES methodology to dynamical systems with delays, as well as addressing  the  challenges outlined in Section \ref{discussion_main_results}.

\appendix 
\subsection{Proof of Lemma \ref{Lemma_PrT_Pointwise_delay}}\label{proof_of_lemma1}
\begin{proof}
The one-dimensional control system \eqref{eq:single_integrator}, \eqref{eq:time-varying_delayed-feedback_key_result} reads as follows:
\begin{equation}\label{eq:single_integrator_closed-loop_in_proof}
\dot{x}(t)=-\mathcal{K}_{T}\left(t\right)x(t-T) + v(t).
\end{equation}
The solution to \eqref{eq:single_integrator_closed-loop_in_proof}, with initial condition $x_0=\phi(s) $, $\phi \in C^0([-T,0];\mathbb{R})$ is given  by $x(t)=\phi(t)$ for $t \in [-T,0]$; and
\begin{equation}\label{eq:direct_integration_solution_closed_loopsystem}
x(t)=\phi(0)  - \int_{0}^{t}\mathcal{K}_{T}(s)x(s-T)ds + \int_{0}^{t}v(s)ds,
\end{equation}
 for $t \geq 0$. 
 Hence, by  \cite[Chapter 6]{Hale1993}, we can obtain the following estimate for \eqref{eq:direct_integration_solution_closed_loopsystem}:
 \begin{equation}
 \begin{split}
 \vert x(t) \vert \leq& \max_{-T \leq s\leq 0}(\vert \phi(s) \vert) + \int_{-T}^{0}\vert \mathcal{K}_{T}(r+T)\vert \vert \phi(r) \vert dr  \\
 &+ \int_{0}^{t}\vert \mathcal{K}_{T}(r+T)\vert \vert x(r) \vert dr + \int_{0}^{t}\vert v(s) \vert ds.
\end{split} 
 \end{equation}
 The definition of $\mathcal{K}_{T}$  ensures  that $\mathcal{K}_{T}(r +T)=0$ for $r  \in [-T,0]$, and that $\vert \mathcal{K}_{T}(r+T) \vert \leq \tfrac{2}{T}$ for $r  \in [0,t]$. Therefore we obtain, 
 \begin{equation*}
 \vert x(t) \vert \leq \max_{-T \leq s\leq 0}(\vert \phi(s) \vert) + \int_{0}^{t}\tfrac{2}{T} \vert x(r) \vert dr + \int_{0}^{t}\vert v(s) \vert ds,
 \end{equation*}
 for $t \geq 0$.  Using the Grönwall-Bellman inequality, we further obtain
 \begin{equation}\label{eq:exponential_estimate_withGronwall}
 \vert x(t) \vert \leq \left( \max_{-T \leq s\leq 0}(\vert \phi(s) \vert) + t \sup_{0\leq s\leq t}  (\vert v(s) \vert \right) \exp\left( \tfrac{2}{T}t \right). 
 \end{equation}
 Consider now the time sequence  $\bar{t}_n=2Tn-\delta_{\zeta} T$, $\delta_{\zeta} \in (-1,1]$, $n\in\mathbb{Z}^{+}$. Then,  for $t \in [\bar{t}_n,\bar{t}_n +T]$, the solution to \eqref{eq:single_integrator_closed-loop_in_proof} is given as follows:
\begin{equation}\label{eq:solution_closed_loopsystem_interval_Ktequal_0}
x(t) = x(\bar{t}_n) + \int_{\bar{t}_n}^tv(\tau)d\tau,
\end{equation}
from which we can obtain the following estimate:
\begin{equation}\label{eq:first_estimate_interval}
\vert x(t) \vert \leq \vert x(\bar{t}_n)\vert  + T \sup_{\bar{t}_n\leq \tau \leq t}(\vert v(\tau) \vert).
\end{equation}
For  $t \in [\bar{t}_n + T,\bar{t}_{n+1}]$, the solution to \eqref{eq:single_integrator_closed-loop_in_proof} is given as follows:
\begin{equation}
\begin{split}
x(t) = & x(\bar{t}_n +T) + \int_{\bar{t}_n+T}^tv(\tau)d\tau - \int_{\bar{t}_n +T}^t \mathcal{K}_{T}(\tau)x(\tau-T)d\tau. \\
\end{split}
\end{equation}
Using \eqref{eq:solution_closed_loopsystem_interval_Ktequal_0}, we further obtain, for all $t \in [\bar{t}_n +T, \bar{t}_{n+1}]$
\begin{equation*}
\begin{split}
x(t)=& x(\bar{t}_n) - \int_{\bar{t}_n +T}^t \mathcal{K}_{T}(\tau)x(\bar{t}_n)d\tau + \int_{\bar{t}_n}^{\bar{t}_n+T}v(\tau)d\tau \\
& + \int_{\bar{t}_n+T}^tv(\tau)d\tau    - \int_{\bar{t}_n +T}^t \mathcal{K}_{T}(\tau)\left(\int_{\bar{t}_n}^{\tau-T} v(s)ds \right)d\tau. 
\end{split}
\end{equation*}
Therefore, we get
\begin{equation}\label{eq:explicit_solution_on_interval_inproof}
\begin{split}
x(t)=& x(\bar{t}_n) \bar{K}_{T}(\bar{t}_n,t) \\
&+ \int_{\bar{t}_n}^{t}v(\tau)d\tau     - \int_{\bar{t}_n +T}^t \mathcal{K}_{T}(\tau)\left(\int_{\bar{t}_n}^{\tau-T} v(s)ds \right)d\tau. 
\end{split}
\end{equation}
with 
\begin{equation}\label{eq:K_T_bar_in-proof_lemma1}
\bar{K}_{T}(\bar{t}_n,t)=1 - \int_{\bar{t}_n +T}^t \mathcal{K}_{T}(\tau)d\tau.
\end{equation} 
Hence, from \eqref{eq:explicit_solution_on_interval_inproof}, we obtain the following estimate, for all $t \in [\bar{t}_n+T, \bar{t}_{n+1}]$:
\begin{equation*}
\begin{split}
\vert x(t) \vert \leq & \vert x(\bar{t}_n) \vert \vert \bar{K}_{T}(\bar{t}_n,t) \vert + (t-\bar{t}_n)\sup_{\bar{t}_n \leq \tau \leq t} (\vert v(\tau) \vert) \\
& +(t-T -\bar{t}_n)\sup_{\bar{t}_n \leq \tau \leq t}(\vert v(\tau) \vert)\left( \int_{\bar{t}_n +T}^t \mathcal{K}_{T}(\tau)d\tau \right). 
\end{split}
\end{equation*}
Thus, 
\begin{equation}\label{eq:estimate_x(t)_in_proof_on_interval}
\vert x(t) \vert \leq  \vert x(\bar{t}_n) \vert \vert \bar{K}_{T}(\bar{t}_n,t) \vert + 3T\sup_{\bar{t}_n \leq \tau \leq t}(\vert v(\tau) \vert),
\end{equation}
where we have used $\int_{\bar{t}_n+T}^{t}\mathcal{K}_{T}(\tau)d\tau \leq \int_{\bar{t}_n+T}^{\bar{t}_{n+1}}\mathcal{K}_{T}(\tau)d\tau $ and  the property 
\begin{equation}\label{eq:Property_integral_K_T}
\int_{\bar{t}_n+T}^{\bar{t}_{n+1}}\mathcal{K}_{T}(\tau)d\tau = 1.
\end{equation}
This property can be verified by recalling the definition \eqref{eq:mathcalK_T_V0} in conjunction with \eqref{eq:explicit_solution_osicillator1_V2}. 
Notice that $\vert x(\bar{t}_n) \vert$ in  \eqref{eq:first_estimate_interval} or in  \eqref{eq:estimate_x(t)_in_proof_on_interval} can further be upper bounded using \eqref{eq:exponential_estimate_withGronwall}. 
Moreover, from \eqref{eq:estimate_x(t)_in_proof_on_interval}, at $t=\bar{t}_{n+1}$, and using the fact that $\bar{K}_{T}(\bar{t}_n,\bar{t}_{n+1})=0$ due to   \eqref{eq:K_T_bar_in-proof_lemma1} and \eqref{eq:Property_integral_K_T}, the following estimate holds:
\begin{equation}\label{eq:estimate_at_tnplus1}
\vert x(\bar{t}_{n+1}) \vert \leq 3T\sup_{\bar{t}_n \leq \tau \leq \bar{t}_{n+1}} (\vert v(\tau) \vert).
\end{equation}
Repeating the same reasoning on the interval $ [\bar{t}_{n+1}, \bar{t}_{n+2}]$ (the control features zero gain,  acting on the zero solution), we can get
\begin{equation}\label{eq:estimate_at_tnplus2}
\vert x(t) \vert \leq 6T \sup_{\bar{t}_n \leq \tau \leq t}  (\vert v(\tau) \vert).
\end{equation}
Hence, 
using the fact that $t-(2T-\delta_{\zeta} T) \leq \bar{t}_n$, for any $n \in \mathbb{Z}^{+}$
 and   combining \eqref{eq:exponential_estimate_withGronwall}  {\color{black} (for the interval $[\max\{0,\bar{t}_0 \},\bar{t}_1]$)  }, \eqref{eq:first_estimate_interval},  \eqref{eq:estimate_x(t)_in_proof_on_interval},  \eqref{eq:estimate_at_tnplus1} {\color{black} (for the interval $[\bar{t}_1 ,\bar{t}_2]$)} and \eqref{eq:estimate_at_tnplus2} {\color{black} (for the interval $[\bar{t}_2 ,+\infty)$)} we finally obtain  \eqref{eq:bound_PrT_single_integrator}.

  To show the  last part of the conclusion,    we follow  similar steps  as those  below \eqref{eq:exponential_estimate_withGronwall} and distinguish two cases:\\
 i) $v(t)=0$ for $t\geq t^{'}$ where $ \bar{t}_n \leq t^{'} \leq  \bar{t}_n +T$. We have that for   $t \in [\bar{t}_n,\bar{t}_n +T]$, the solution to \eqref{eq:single_integrator_closed-loop_in_proof} is given as follows:
\begin{equation}\label{eq:solution_closed_loopsystem_interval_Ktequal_0-distrubance}
x(t) = x(\bar{t}_n) + \bar{v}(\bar{t}_n,t),
\end{equation}
and, for   $t \in [\bar{t}_n + T,\bar{t}_{n+1}]$, the solution to \eqref{eq:single_integrator_closed-loop_in_proof} is given as follows:
\begin{equation}\label{eq:solution_closed_loopsystem_interval_Ktequal_1-distrubance}
\begin{split}
x(t) = & x(\bar{t}_n +T)  - \int_{\bar{t}_n +T}^t \mathcal{K}_{T}(\tau)x(\tau-T)d\tau \\
= &  x(\bar{t}_n)\bar{K}_{T}(\bar{t}_n,t) + \bar{v}(\bar{t}_n,\bar{t}_n+T) \\
& - \int_{\bar{t}_n +T}^t \mathcal{K}_{T}(\tau)\bar{v}(\bar{t}_n,\tau-T)d\tau ,
\end{split}
\end{equation}
where we have used \eqref{eq:K_T_bar_in-proof_lemma1}, and the following notation:
\begin{equation}
\bar{v}(\bar{t}_n,t)=\int_{\bar{t}_n}^tv(\tau)d\tau.
\end{equation}
ii) $v(t)=0$ for $t\geq t^{'}$ where $ \bar{t}_n +T  \leq t^{'} \leq  \bar{t}_{n+1} $.  A similar reasoning yields
\begin{equation}\label{eq:solution_closed_loopsystem_interval_Ktequal_2-distrubance}
\begin{split}
x(t) = &  x(\bar{t}_n)\bar{K}_{T}(\bar{t}_n,t) + \bar{v}(\bar{t}_n,\bar{t}_n+T) + \bar{v}(\bar{t}_n+T,t)  \\
& - \int_{\bar{t}_n +T}^t \mathcal{K}_{T}(\tau)\bar{v}(\bar{t}_n,\tau-T)d\tau ,
\end{split}
\end{equation}
where
\begin{equation}
\bar{v}(\bar{t}_n+T,t)=\int_{\bar{t}_n+T}^tv(\tau)d\tau.
\end{equation}
Thus, at $t=\bar{t}_{n+1}$ and by virtue of  \eqref{eq:Property_integral_K_T}, it holds from either  \eqref{eq:solution_closed_loopsystem_interval_Ktequal_1-distrubance}  or \eqref{eq:solution_closed_loopsystem_interval_Ktequal_2-distrubance} that $x(\bar{t}_{n+1}) \neq 0$, hence it becomes the new initial condition.
For the next interval of time $[\bar{t}_{n+1},\bar{t}_{n+1}+T]$, we have that the solution  to \eqref{eq:single_integrator_closed-loop_in_proof} is given as follows:
\begin{equation}\label{eq:solution_closed_loopsystem_interval_Ktequal_22-distrubance}
x(t)=x(\bar{t}_{n+1}),
\end{equation}
thus $x(\bar{t}_{n+1} +T)=x(\bar{t}_{n+1})$; and for $t \in [\bar{t}_{n+1}+T,\bar{t}_{n+2}]$, we obtain 
\begin{equation}\label{eq:solution_closed_loopsystem_interval_Ktequal_3-distrubance}
x(t)=x(\bar{t}_{n+1})\mathcal{K}_{T}(\bar{t}_{n+1},t),
\end{equation}
where $\bar{K}_{T}(\bar{t}_{n+1},t)$ is defined as follows (similarly to \eqref{eq:K_T_bar_in-proof_lemma1}): 
\begin{equation}\label{eq:K_T_bar_in-proof2_lemma1}
\bar{K}_{T}(\bar{t}_{n+1},t)=1 - \int_{\bar{t}_{n+1} +T}^t \mathcal{K}_{T}(\tau)d\tau,
\end{equation}
with the property $\bar{K}_{T}(\bar{t}_{n+1},\bar{t}_{n+2})=0$. Therefore, we finally obtain
\begin{equation}\label{eq:dead_beat_property_tn2}
x(\bar{t}_{n+2}) =0.
\end{equation}
Repeating the same reasoning on the interval  $ [\bar{t}_{n+2}, \bar{t}_{n+3}]$ (the control features zero gain,  acting on the zero solution), we conclude that $x(t)=0$ for all $t \geq  \bar{t}_{n+2} $.  
  Recalling that  $\bar{t}_{n+2}=2T(n+2) -\delta_{\zeta} T$,  the fact that $t^{'} \leq \bar{t}_{n+1} $,  and by the definition of the ceiling function, we have that $\frac{t^{'}+ \delta_{\zeta} T}{2T} \leq \Big \lceil \frac{t^{'}+ \delta_{\zeta} T}{2T}\Big\rceil = (n+1)$, hence the dead-beat property is achieved at $2T\lceil \frac{t^{'}+ \delta_{\zeta} T}{2T}\rceil + 2T-\delta_{\zeta} T$. By induction, it holds true for any interval,  i.e., $ \bar{t}_{n} \leq t^{'} \leq  \bar{t}_{n+1}$, for arbitrary  $n \in \mathbb{Z}^+$. Finally, estimate \eqref{eq:bound_PrT_single_integrator-control-vanishing_dist} can be established using  \eqref{eq:exponential_estimate_withGronwall}, \eqref{eq:solution_closed_loopsystem_interval_Ktequal_3-distrubance} and   \eqref{eq:dead_beat_property_tn2}.
This concludes the proof. 
 \end{proof}

\subsection{Proof of Claim \ref{claim1}}\label{proof_claim}
We follow similar arguments as in the proof of Lemma \ref{Lemma_PrT_Pointwise_delay} and divide the proof in two steps.  \textit{Step 1:}  Consider  the solutions of \eqref{eq:Newton_ES_static_maps1_prescribed-pointwise_error_averaged_target} and \eqref{eq:Newton_ES_static_maps2_prescribed-pointwise_error_averaged_target} over   $[\bar{t}_0, \bar{t}_0+T]$ where $\bar{t}_0=-\delta_{\zeta}T$, $\delta_{\zeta} \in (-1,1]$, and  recall that $T>D$. On this interval,   $\mathcal{K}_{T}(t)= 0$ and $\mathcal{L}_{T}(t)= 0$, thus $x_2(t)=x_2(\bar{t}_0)$ and $x_1(t)=x_1(\bar{t}_0)+x_2(\bar{t}_0)(t-\bar{t}_0)$. Then, it holds 
\begin{equation}\label{first_bound_Claim1_in_proof}
\sup_{-D \leq s \leq 0}(\vert x_1(t+s) \vert) \leq \vert x_1(\bar{t}_0) \vert + \vert x_2(\bar{t}_0) \vert (T-D), 
\end{equation}
for all $t\in [\bar{t}_0,\bar{t}_0+T]$. \textit{Step 2:} For $t\in[\bar{t}_0+T,\bar{t}_1]$ where $\bar{t}_1=2T-\delta_{\zeta}T$,  the   solutions of \eqref{eq:Newton_ES_static_maps1_prescribed-pointwise_error_averaged_target} and \eqref{eq:Newton_ES_static_maps2_prescribed-pointwise_error_averaged_target} satisfy 
\begin{equation}\label{eq:solution_x1_active_interval}
\begin{split}
x_1(t)=&x_1(\bar{t}_0+T)  - \int_{\bar{t}_0+T}^{t}\left(\mathcal{K}_{T}(s)x_1(s-T)ds - x_2(s)\right)ds,
\end{split}
\end{equation}
\begin{equation}\label{eq:solution_x2_active_interval}
\begin{split}
x_2(t)=&x_2(\bar{t}_0+T)  - \int_{\bar{t}_0+T}^{t}\mathcal{K}_{T}(s)x_2(s-T)ds \\
 &+ \int_{\bar{t}_0+T}^{t}\Phi(s)x_{1}(s-D)ds,
\end{split}
\end{equation}
where
\begin{equation}\label{eq:Phi_in_proof_Claim1}
 \Phi(s)=- \mathcal{L}_{T}(s)\tilde{\Gamma}_{\rm av}(s)H^{\star}.
\end{equation} 
 Using Step 1, equations  \eqref{eq:solution_x1_active_interval} and \eqref{eq:solution_x2_active_interval} simplify to: 
\begin{equation}\label{eq:solution_x1_active_intervalV2}
\begin{split}
x_1(t)=&\bar{K}_{T}(\bar{t}_0,t)x_1(\bar{t}_0) + \int_{\bar{t}_0+T}^{t}x_2(s)ds \\
&+ \Big[T-\int_{\bar{t}_0+T}^{t}\mathcal{K}_{T}(s)(s-(\bar{t}_0+T))ds\Big]x_2(\bar{t}_0), 
\end{split}
\end{equation}
\begin{equation}\label{eq:solution_x2_active_intervalV2}
\begin{split}
x_2(t)=&\bar{K}_{T}(\bar{t}_0,t)x_2(\bar{t}_0) + \int_{\bar{t}_0+T}^{t}\Phi(s)x_{1}(s-D)ds,
\end{split}
\end{equation}
for all $t\in [\bar{t}_0+T, \bar{t}_1 ]$, where $\bar{K}_{T}(\bar{t}_0,t)$ is  as in \eqref{eq:K_T_bar_in-proof_lemma1}.   
 Substituting \eqref{eq:solution_x2_active_intervalV2} into \eqref{eq:solution_x1_active_intervalV2},  and applying Fubini's theorem,  we obtain 
 \begin{equation*}\label{eq:solution_x1_active_intervalV3}
 \begin{split}
x_1(t)=& x_1(\bar{t}_0) \bar{K}_{T}(\bar{t}_0,t)  + \int_{\bar{t}_0+T}^{t}(t-s)\Phi(s)x_1(s-D)ds \\
& + x_2(\bar{t}_0)\Big((t-\bar{t}_0)\bar{K}_{T}(\bar{t}_0,t) + T\int_{\bar{t}_0+T}^{t}\mathcal{K}_{T}(s)ds  \Big),
\end{split}
\end{equation*}
for all $t \in [\bar{t}_0+T,\bar{t}_1]$. By distinguishing two cases:
i) $4T-\delta_{\xi} 2T \leq \bar{t}_1 $ or  ii)    $\bar{t}_1 \leq  4T-\delta_{\xi} 2T $, and using the boundedness of $\mathcal{K}_{T}$,   we obtain the following estimate,  for all $t\in [\bar{t}_0+T,\max\{4T-\delta_{\xi} 2T,\bar{t}_1\}]$:  
\begin{equation}
\begin{split}
\sup_{-D \leq s \leq 0}(\vert & x_1(t+s) \vert) \leq  \bar{C}(T) \Big(\vert x_1(\bar{t}_0) \vert + \vert x_2(\bar{t}_0) \vert \Big)  \\
& + \int_{\bar{t}_0+T}^{t} \vert (t-s)\vert \vert \Phi(s) \vert \sup_{-D \leq r \leq 0}(\vert x_1(s+r) \vert) ds, \\
\end{split}
\end{equation}
for some appropriate ${\bar C}>0$ depending on $T$. Then,  by the Grönwall-Bellman inequality, we  obtain 
\begin{equation}\label{eq:final_estimate_supremum_x1_Gronwall_inequ}
\begin{split}
&\sup_{-D \leq s \leq 0}(\vert x_1(t+s) \vert) \leq  \bar{C}(T) \Big(\vert x_1(\bar{t}_0) \vert + \vert x_2(\bar{t}_0) \vert \Big)  \\
& \hskip 2.2cm  \times \exp\Bigg( \int_{\bar{t}_0+T}^{\hat{t}} \vert (\hat{t}-s) \vert  \vert \Phi(s)\vert ds \Bigg), \\
\end{split}
\end{equation}
where $\hat{t}=\max\{4T-\delta_{\xi} 2T,\bar{t}_1\}$.
 The integral in  \eqref{eq:final_estimate_supremum_x1_Gronwall_inequ} is   finite since  $\Phi$, given in \eqref{eq:Phi_in_proof_Claim1},   is bounded, namely    $\vert \mathcal{L}_T(t) \vert \leq \frac{4 \pi}{T^2} $ and   $ \vert \Gamma_{\rm av}(t) \vert $ is upper bounded for all $t \in [0,\hat{t}] $  by virtue of \eqref{eq:PrT_Hessian_estimator_average_in_proof} and the relation \eqref{eq:relation_Gamma_H}.   Finally, in the case  $4T-\delta_{\xi} 2T \geq \bar{t}_1$, we can refine the analysis on  $ [\bar{t}_1,4T-\delta_{\xi} 2T]$,  by repeating the two steps as before and obtaining an estimate similar as \eqref{eq:final_estimate_supremum_x1_Gronwall_inequ}; with $x_1(\bar{t}_1)  $, $x_2(\bar{t}_1) $ as the initial conditions.
Notice, in addition,  that if $\bar{t}_0= - \delta_{\zeta} T > 0$ (e.g., when $\delta_{\zeta} <0$), it suffices to repeat the same reasoning as before, with Step 2 applied first instead of Step 1, and considering $x_1(0)  $, $x_2(0) $ as the initial conditions. 
Hence, combining \eqref{first_bound_Claim1_in_proof}, \eqref{eq:final_estimate_supremum_x1_Gronwall_inequ}  and  by the above considerations, we conclude the proof of Claim \ref{claim1}.





\bibliographystyle{IEEEtranS}
\bibliography{biblio_PT_Extremum_seeking2025}

\end{document}